\pdfoutput=1

\documentclass[english,11pt]{article}
\include{macros}

\pgfplotsset{compat=1.18}
\begin{document}

\title{\Large Benefits and Costs of Adaptive Sampling}

\author{Yu-Shiou Willy Lin\thanks{Department of IEOR, Columbia University. Email: \texttt{yl5782@columbia.edu}}, Dae Woong Ham\thanks{Department of Technology and Operations, Ross School of Business. Email: \texttt{daewoong@umich.edu}}, Iavor Bojinov\thanks{Harvard School Business \texttt{Ibojinov@hbs.edu}} }

\date{}
\maketitle

\begin{abstract}

Multi-armed bandits are widely used for sequential experimentation in clinical trials, recommendation systems, and online platforms. While regret minimization and valid inference from adaptively collected data have each been studied extensively, a basic question remains: when does adaptivity \emph{improve estimation precision} relative to uniform designs, and how should inference be balanced against the online cost of experimentation? We first study arm-level mean estimation under mean-squared-error (MSE) objectives. We characterize when an adaptive Neyman allocation, which allocates samples according to arm variance, yields strict MSE improvements over uniform sampling. When there is variance heterogeneity across arms, these improvements arise at modest sample sizes, clarifying that adaptivity can be preferable for inference not only asymptotically, but also in many practical finite-sample settings. We then study a joint inference--regret objective that accounts for the cost of assigning units to inferior arms during experimentation. We propose the Static-Allocation Rate Policy (SARP) and Neyman-Adaptive Rate Policy (NARP), which interpolates between inference- and regret-oriented policies by adjusting exploration to the local structure of the instance. We show that SARP and NARP converge to the complete-information benchmark at the optimal rate as the sampling budget grows. Our proposed policies are practically attractive as it linearly interpolates between any standard regret-minimizing algorithm and inference-targeting adaptive policies. Yet we show it still enjoys the oracle-based asymptotic optimal rate. Simulations support the theory by demonstrating improved precision over uniform allocation while controlling performance loss across a range of instances.

\end{abstract}

\noindent{\bf Keywords: }Adaptive Experimental Design, Multi-armed Bandits, Statistical Efficiency, Inference-regret Tradeoff, Adaptive Neyman Allocation

\section{Introduction}\label{sec:intro}

Adaptive experimental design uses information collected during an experiment to guide future assignment decisions, fueling online experimentation to become central in decision-making across a wide range of operational settings, including recommendation systems, digital advertising, and clinical decision-making, where organizations repeatedly run experiments to improve products and treatments (\cite{bojinov2025experimentation, netflix2022experimentation}). At the same time, these experiments are rarely governed by a single objective. A practitioner may care both about experiment efficiency and about performance during the experiment itself, since assigning customers or patients to inferior alternatives can lower revenue, engagement, or increase avoidable medical cost while learning (\cite{wilson2025rar, netflix2025returnaware}).

Despite this relevance, adaptive experimentation is still often viewed as impractical in industry for inference. Consider a data scientist evaluating which recommendation system performs best for a streaming platform. If the primary objective is accurate estimation of treatment effects, then the natural statistical goal is to minimize the mean squared error (MSE) of the estimators. In principle, adaptive assignment can use online observations to allocate more samples where they are most informative. In practice, however, data scientists often worry that changing assignment probabilities over time will complicate inference or introduce bias if analyzed naively. Furthermore, it is well known that adapting comes at a cost of increased variance. As a result, many experiments continue to rely on uniform sampling, even though such designs may use samples inefficiently and lengthen experiments unnecessarily.

To address these problems, we study the value of adaptive experimentation relative to non-adaptive designs under two objectives. We begin with the pure inference setting, where the goal is to improve statistical efficiency as measured by MSE. In this setting, we show that even simple data-driven adaptive sampling can outperform uniform sampling once the pilot learning phase is sufficiently informative. We then turn to a joint inference--regret setting, which captures the concern that experiments are often run in environments where continued assignment to suboptimal treatments carries real online cost. In this second setting, we propose two adaptive sampling rules, SARP and NARP, that balance statistical accuracy with online performance, commonly formalized as regret in the multi-armed bandit literature. SARP provides a simple rate-based exploration template that can be combined with standard regret-minimizing algorithms, while NARP further calibrates the exploration level using plug-in estimates of the instance's variance and gap structure. These procedures are not intended to replace optimized problem-specific bandit algorithms (\cite{ErraqabiLVBL17, simchi-levi_multi-armed_2025}); rather, they provide a practical and interpretable policy family that exposes the inference--regret tradeoff directly while still attaining the asymptotic rate benchmark.

Taken together, our results show that adaptive designs can be justified on both statistical and operational grounds: they can improve inference relative to uniform designs, and can be extended to settings where learning must be balanced against performance during the experiment. We begin in Section~\ref{sec:setup} by formulating the problem and introducing the necessary notation. We then present our theoretical results for the inference objective and the joint inference--regret objective separately in Sections~\ref{sec:inf} and~\ref{sec:infreg}. Next, in Section~\ref{sec:simulation}, we provide numerical simulations illustrating the effectiveness of our findings and methods. Finally, we conclude in Section~\ref{sec:discussion}.

\subsection*{Related Literature}

\paragraph{Adaptive Experiment Design}

Our work sits within the growing literature that studies adaptive experiment design through the multi-arm bandit (MAB) lens. This perspective appears in both clinical and digital operation settings (\cite{berry2006bayesian, scott2010modern}), and has been developed in settings where assignments are updated sequentially using accrued data rather than fixed in advance (\cite{Chernoff1992, https://doi.org/10.3982/ECTA17527}). A separate line of work studies how to perform valid statistical inference under adaptive assignment despite the dependence between samples and assignment probabilities (\cite{doi:10.1073/pnas.2014602118, zhang2021statistical, dimakopoulou2021online, doi:10.1287/mnsc.2023.4921}). Our paper adopts this general viewpoint of experimentation as a sequential decision problem, but focuses on simple adaptive rules whose statistical and operational performance can be compared directly to non-adaptive benchmarks. Furthermore, our paper does not focus solely on regret minimization or on validity of inference under adaptive assignment. Instead, we specifically study whether adaptive sampling for inference, rather than regret minimization, can be more statistically efficient than non-adaptive uniform policies.

\paragraph{Statistical Efficiency}

A second closely related literature studies experimental design from the perspective of statistical efficiency. The interest in improving efficiency through design goes back at least to \cite{https://doi.org/10.1111/j.2397-2335.1934.tb04184.x}, which proposed the statistically optimal Neyman allocation, and has been extensively developed in later work (\cite{semipar_effic, https://doi.org/10.1111/1468-0262.00442, 10.1214/12-AOS1008, 10.1093/jrsssb/qkad072}). Several recent papers focus specifically on adaptive designs that seek to approach Neyman-optimal allocation as the sample size increases, including \cite{zhao2023adaptiveneymanallocation, dai2023clip, li2024optimal, kato2025efficientadaptiveexperimentaldesign}. Our pure-inference investigation differs in emphasis from this line. We do not focus on the asymptotic MSE optimality of a particular adaptive rule. Instead, building on the two-stage adaptive procedure in \cite{zhao2023adaptiveneymanallocation}, we ask when an adaptive design, using only a modest pilot phase, already outperforms the non-adaptive uniform benchmark.

\paragraph{Learning-Focused Bandits}

Our pure-inference objective is also closely related to the learning-focused bandit literature, including active learning, mean estimation, and pure exploration (\cite{audibert:hal-00654404, ANTOS20102712, 10.1007/978-3-642-24412-4_17, meanest}). These works study how adaptive sampling can improve learning quality when the objective is estimation or identification rather than cumulative reward maximization. This is therefore closer in spirit to our inference setting than classical regret-minimizing MAB. At the same time, our contribution differs in two ways. First, we are interested in explicit overperformance relative to a natural non-adaptive benchmark, namely uniform sampling, rather than primarily in regret guarantee or sample complexity. Second, we later extend the analysis beyond inference alone to account for the opportunity cost of suboptimal assignments during the experiment. 

\paragraph{Inference and Regret-Focused Bandits}

Our second contribution, which jointly considers experiment efficiency and regret minimization, is closely aligned with \cite{ErraqabiLVBL17}. Their work studies a related reward--error tradeoff through a proxy-MSE objective, and their algorithm can be viewed as repeatedly solving a one-step version of the oracle allocation problem in Section~\ref{subsec:oracle_joint}: at each period, the next arm is selected by optimizing the current marginal reward--error tradeoff. Our goal is not to establish dominance over such procedures; rather, we aim to derive simpler and more interpretable adaptive rules that attain the oracle asymptotic rate benchmark while remaining easy to analyze and deploy. The recent work \cite{simchi-levi_multi-armed_2025} also studies this multi-objective problem, proposes corresponding algorithms, and characterizes a Pareto tradeoff between regret minimization and inference efficiency. Relative to these works, our emphasis is on adaptive sampling rules that are intuitive to practitioners and expose the exploration--exploitation structure explicitly. Additionally, our objectives are different from those proposed in the aforementioned papers, as detailed in Section~\ref{sec:infreg}.

\section{Problem Formulation and Notation Setup}\label{sec:setup} 

%\blue{Add motivation case here? Maybe specify this to link to each section?}

We study a stochastic $K$-armed bandit with horizon $N$. Each arm $i \in [K] := \{1,\dots,K\}$ is associated with an unknown reward distribution $\nu_i$, with arm-specific mean $\mu_i$ and variance $\sigma_i^2 > 0$. At each round $t \ge 1$, the learner selects an arm $I_t \in [K]$ and observes a reward $X_t \sim \nu_{I_t}$. The sampling rule may depend on the past history $\history_{t-1} := (I_1,X_1,\dots,I_{t-1},X_{t-1})$, and is therefore allowed to be adaptive. For each arm $i$, define $T_{i,t} := \sum_{s=1}^t \ind\{I_s = i\}$, and write $T_i := T_{i,N}$ for the total number of samples collected from arm $i$ by the end of the horizon. 

The primary statistical target is learning or doing inference on the vector of arm means $(\mu_1,\dots,\mu_K)$, and the sampling policy $\pi$ determines how the total budget $N$ is distributed across arms through the random sample counts $(T_1,\dots,T_K)$ satisfying $\sum_{i=1}^K T_i = N$. A sampling policy $\pi$ should be understood as a decision rule: at each round $t$, it maps the available history $\history_{t-1}$ to a distribution over arms. We write
\begin{equation}
p_{t,i}^\pi := \PP_\pi(I_t=i \mid \history_{t-1}), \qquad i \in [K],
\end{equation}
for the conditional probability induced by policy $\pi$ of selecting arm $i$ at round $t$. Thus, $\pi$ denotes the policy as a whole, whereas $p_{t,i}^\pi$ denotes the particular assignment probability generated by that policy at a given time and history. We will also write $p_{t,i}$ generically and suppress the dependence on the underlying policy when only the conditional assignment probability matters.

This distinction between $\pi$ and $p^{\pi}$ may not seem necessary for simple non-adaptive designs. For example, under uniform sampling $\pi^{\rm uni}$, the policy and its induced probabilities are effectively indistinguishable, since $p_{t,i}^{\pi^{\mathrm{uni}}} = 1/K$ for all $t \in [N]$ and $i \in [K]$. In contrast, under a sequential policy, the rule $\pi$ may depend on the observed data, and the realized probabilities $p_{t,i}^\pi$ therefore change over time with the history. Keeping these two objects separate is useful in later analysis, especially when the sampling rule is history-dependent and the resulting estimators or risk bounds depend explicitly on the conditional assignment probabilities.  

The formulation above specifies the common stochastic bandit model used throughout the paper. The additional ingredients needed to evaluate the problem at hand---including the estimator, sampling policy, and objective functional---depend on the design problem under consideration and will be introduced separately. We first study inference as the objective in Section~\ref{sec:inf}, where the focus is estimation of the arm means under adaptive data collection. We then turn to the joint inference--regret setting in Section~\ref{sec:infreg}, which augments the inference objective with a regret-based criterion and introduces the notation needed for optimality gaps and online loss.

\section{Adaptive Design for Inference}\label{sec:inf}

In this section, we study the design of an efficient experiment for estimating the arm means. 
For any policy $\pi$, given armwise estimators $\hat\mu_1, \dots, \hat\mu_K$, we define the armwise mean squared error by
\begin{equation}
\mathrm{MSE}(i,\pi) := \EE_\pi \left[(\hat\mu_i-\mu_i)^2\right],
\end{equation}
and our goal is to solve 
\begin{equation}
\min_{\pi} \sum_{i=1}^K \EE_\pi \bigl[(\hat\mu_i-\mu_i)^2\bigr] := \min_{\pi} \sum_{i=1}^K \mathrm{MSE}(i,\pi).
\end{equation}
That is, we seek to understand how the choice of sampling policy, for a given estimator, shapes the total mean squared error (MSE) across arms. Our main interest is whether an adaptive sampling rule can improve upon uniform sampling in finite sample, which serves as the natural non-adaptive baseline. As a reminder, this problem differs from the standard regret-minimizing bandit setting. It is well known that adaptive policies $\pi$ can achieve substantially better regret than non-adaptive exploration rules. However, far less attention has been paid to whether a similar advantage persists when the objective is statistical inference. The works that do study this question typically focus on optimal bandit design rather than directly on inference efficiency relative to non-adaptive baselines \cite{ANTOS20102712, 10.1007/978-3-642-24412-4_17}. We therefore begin with an oracle benchmark that isolates the underlying allocation problem in a static setting and clarifies the efficiency target that a data-driven adaptive design should aim to approximate. 

\subsection{Oracle Benchmark: Neyman Allocation}\label{subsec:ora_ney}

Consider an oracle benchmark in which the arm variances $\sigma_1^2,\dots,\sigma_K^2$ are known a priori by the experimenter and we estimate $\mu_i$ by the sample mean $\hat\mu_j^{\mathrm{sam}}$. In this case, as noted in \cite{ANTOS20102712}, there is no value in adopting adaptive sampling as adaptive sampling usually aims to learn $\hat\sigma$ over time. Therefore, when observations are independent and arm $i$ receives $T_i$ samples under an oracle static case, $\EE\bigl[(\hat\mu_i^{\mathrm{sam}}-\mu_i)^2\bigr] = \frac{\sigma_i^2}{T_i}.$ Hence the total sum of MSE reduces to the static allocation problem
\begin{equation}\label{eq:oracle_inf}
\min_{T_1,\dots,T_K} \sum_{i=1}^K \frac{\sigma_i^2}{T_i} \qquad \text{subject to } \sum_{i=1}^K T_i = N.
\end{equation}
The optimizer to~\ref{eq:oracle_inf} is the classical Neyman allocation, which we denote as $\pi^{\mathrm{Ney}}$ and assigns samples in proportion to the arm standard deviations, i.e. $T_i = \tfrac{\sigma_i}{\sum_{i}^K \sigma_i} N$, and leads to the total sum of MSE to be
\begin{equation}\label{eq:Ney_mse}
    \sum_{i=1}^K \mathrm{MSE}(i,\pi^{\mathrm{Ney}}) = \sum_{i=1}^K \EE_{\pi^{\mathrm{Ney}}}\bigl[(\hat\mu_i^{\mathrm{sam}} - \mu_i)^2\bigr] = \frac{1}{N} (\sum_{i=1}^K \sigma_i)^2.
\end{equation}

However, when the arm variances are unknown, the oracle allocation is no longer directly available and the variances must be learned. As mentioned above, many practitioners in this case turn directly toward uniform sampling $\pi^{\mathrm{uni}}$, i.e. $p_{t,i}^{\pi^{\mathrm{uni}}} = 1/K$ for all $t \in [N]$ and $i \in [K]$, which is equivalent to setting $T_j = \tfrac{N}{K}$ in a completely randomized trial. By simple algebra, the total sum of MSE under uniform sampling is
\begin{equation}\label{eq:unif_mse}
    \sum_{i=1}^K \mathrm{MSE}(i,\pi^{\mathrm{uni}}) = \sum_{i=1}^K \EE_{\pi^{\mathrm{uni}}}\bigl[(\hat\mu_i^{\mathrm{sam}}-\mu_i)^2\bigr] = \sum_{i=1}^K \frac{\sigma_i^2}{N/K} = \frac{K}{N}\sum_{i=1}^K \sigma_i^2.
\end{equation}

By Cauchy-Schwarz inequality we can show \eqref{eq:Ney_mse} $\le$ \eqref{eq:unif_mse}, which is expected since we emphasized on the optimality of \eqref{eq:Ney_mse}. Nonetheless, the gap between these two benchmarks highlights the main question of this section: 
\begin{center}
    \textit{When the variances are unknown, can one construct a data-driven sampling policy that improves upon uniform sampling and move toward the efficiency of Neyman allocation in finite sample?}
\end{center} 
This naturally raises the possibility of adaptive sampling, where early observations are used to guide the remaining budget toward a more efficient allocation. As mentioned in the related literature, there have been works that show that asymptotically adaptive policies can be designed to approach Neyman Allocation (\cite{dai2023clip, zhao2023adaptiveneymanallocation}). However, to the best of our knowledge, no works exist that show when adaptive scheme can achieve better performance over uniform sampling in finite-samples.

\subsection{Sampling Policy and Estimator Choice}

Before presenting our main results, there are two distinct technical issues in moving from the oracle benchmark to a data-driven adaptive design. The first concerns the sampling policy itself. When the arm variances are unknown, the Neyman allocation must be learned from the data, which naturally suggests sequential adaptation. However, under a fully adaptive policy, the resulting allocation process is difficult to analyze tractably in finite samples. To keep the MSE calculation tractable, we therefore restrict attention to a two-stage design: a pilot stage, with $N_1$ units, is first used to estimate arm-specific quantities, and the remaining budget $N_2 := N - N_1$ is then allocated according to a plug-in Neyman allocation that is fixed after the pilot stage. This is termed Adaptive Neyman Allocation by \cite{zhao2023adaptiveneymanallocation}. We view this approach conservative to the fully adaptive approach. In other words, if a single adaptive update can already improve over uniform sampling, then more fully adaptive variants may plausibly yield further gains.

Formally, we estimate the Neyman allocation after the pilot stage via
\[
\hat p_i^{\mathrm{Ney}} := \frac{\hat\sigma_i}{\sum_{j=1}^K \hat\sigma_j}, \qquad i\in[K],
\]
where $\hat\sigma_i$ is the typical sample standard deviation estimator using $N_1$ samples. The resulting two-stage adaptive Neyman policy, denoted as $\pi^{\mathrm{AN}}$, is defined by
\[
p^{\pi^\mathrm{AN}}_{t,i} =
\begin{cases}
1/K, & t \le N_1,\\
\hat p_i^{\mathrm{Ney}}, & t > N_1.
\end{cases}
\]

The second issue concerns how the estimator will change under adaptive design. In the oracle benchmark, the sample mean leads directly to the variance-driven Neyman objective. Under adaptive allocation, however, the clean static risk expression in~\eqref{eq:oracle_inf} no longer holds in adaptive sampling, and different estimators induce different expression of MSE. A natural first candidate is the Horvitz--Thompson estimator, defined as
\[
\hat\mu_i^{\mathrm{HT}} := \frac{1}{N}\sum_{t=1}^N \frac{\ind\{I_t = i\}X_t}{p_{t,i}}, \quad i \in [K],
\]
which accounts for the unequal assignment probabilities induced by the second-stage allocation. However, by Lemma~\ref{lem:mse_twostage_ht} in the Appendix, the MSE for HT estimator depends on the second moments $\mu_i^2+\sigma_i^2$, which does not align with the Neyman benchmark in~\eqref{eq:oracle_inf}. For this reason, we instead use a stylized pilot-centered inverse-propensity-weighted (PCIPW) estimator, 
\begin{equation}
\hat\mu_i^{\mathrm{PCIPW}} := \hat\mu_i^{\mathrm{pilot}} + \frac{1}{N}\sum_{t=1}^N \frac{\ind\{I_t = i\}\bigl(X_t -\hat \mu_i^{\mathrm{pilot}}\bigr)}{p_{t,i}}, \quad i \in [K],
\end{equation}
where the pilot mean is
\[
    \hat\mu_i^{\mathrm{pilot}} := \frac{1}{n_1}\sum_{t=1}^{N_1}\ind\{I_t = i\}X_t, \quad n_1 := N_1/K.
\]

We outline the two-stage adaptive Neyman Allocation formally in Algorithm~\ref{alg:two_stage_pcipw}. Before stating the main theoretical result, we give an intermediary Lemma~\ref{lem:mse_twostage_pcipw} (proved in Appendix~\ref{sec:mse_pf}) that details the components that make up the MSE of such an adaptive policy.

\begin{algorithm}[t]
\caption{Two-Stage Adaptive Neyman Allocation with Pilot-Centered IPW Estimation} \label{alg:two_stage_pcipw}
\begin{algorithmic}[1]
\STATE \textbf{Input:} horizon $N$, pilot budget $N_1$ with $K \mid N_1$
\STATE Set $n_1 \gets N_1/K$
\STATE For each arm \(i\in[K]\), pull arm \(i\) \(n_1\) times in the pilot stage.
\STATE Compute the pilot mean $ \hat{\mu}_i^{\mathrm{pilot}} := \frac{1}{n_1}\sum_{t=1}^{N_1} \ind\{I_t=i\}X_t$ and standard deviation $\hat{\sigma}_i$ for all $i \in [K]$
\STATE Form the plug-in Neyman probabilities
\[
\hat p_i^{\mathrm{Ney}} := \frac{\hat{\sigma}_i}{\sum_{j=1}^K \hat{\sigma}_j}, \quad i\in[K].
\]

\FOR{$t=N_1+1,\dots,N$}
    \STATE Set $p_{t,i} \gets \hat p_i^{\mathrm{Ney}}$ for each $i \in [K]$
    \STATE Sample arm $I_t$ according to $\hat p_1^{\mathrm{Ney}},\dots,\hat p_K^{\mathrm{Ney}}$ and observe reward $X_t$
\ENDFOR

\STATE Output $\hat{\mu}_i^{\mathrm{PCIPW}}$ for all $i \in [K]$

\end{algorithmic}
\end{algorithm}

\begin{lemma}[MSE Decomposition of Two-Stage Pilot-Centered IPW]\label{lem:mse_twostage_pcipw}
Under Algorithm~\ref{alg:two_stage_pcipw} setup, the total MSE for arm $i$ decomposes as
\begin{align*}
\mathrm{MSE}(i,\pi^{\rm AN}) = \underbrace{\frac{N_1}{N} \mathrm{MSE}(i, \pi^{\rm uni})}_{\text{Proportional Pilot MSE}}+ \underbrace{\frac{1}{N^2}\sum_{t=N_1+1}^{N} \EE \left[\frac{\sigma_i^2}{p_{t,i}}\right]}_{\text{Adaptive Variance}} + \underbrace{\frac{1}{N^2}\sum_{t=N_1+1}^{N} \EE \left[ (\mu_i-\hat\mu_i^{\mathrm{pilot}})^2\left(\frac{1}{p_{t,i}}-1\right) \right]}_{\text{Interaction Penalty}}.
\end{align*}
\end{lemma}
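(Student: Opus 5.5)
The proof is a direct computation. It works by conditioning on the pilot-stage $\sigma$-field $\mathcal{F}_{N_1} := \sigma(\history_{N_1})$. Given $\mathcal{F}_{N_1}$, both $\hat\mu_i^{\mathrm{pilot}}$ and the second-stage probabilities $p_{t,i} = \hat p_i^{\mathrm{Ney}}$ are fixed. The second-stage draws $(I_t, X_t)_{t > N_1}$ are then i.i.d.

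\textbf{Step 1: the pilot part of the IPW sum cancels.} For $t \le N_1$ we have $p_{t,i} = 1/K$, and arm $i$ is pulled exactly $n_1$ times. Hence
\[
\frac{1}{N}\sum_{t=1}^{N_1} K\,\ind\{I_t=i\}\bigl(X_t-\hat\mu_i^{\mathrm{pilot}}\bigr) = \frac{K}{N}\bigl(n_1\hat\mu_i^{\mathrm{pilot}} - n_1\hat\mu_i^{\mathrm{pilot}}\bigr) = 0 .
\]
So only second-stage terms remain. Write $D_t := \ind\{I_t=i\}(X_t-\hat\mu_i^{\mathrm{pilot}})/p_{t,i}$ and $\Delta := \mu_i - \hat\mu_i^{\mathrm{pilot}}$. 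Conditioning on $\mathcal{F}_{N_1}$ gives $\EE[D_t\mid \mathcal{F}_{N_1}] = \Delta$. The error then decomposes as
\[
\hat\mu_i^{\mathrm{PCIPW}}-\mu_i = -\Delta + \frac{N_2}{N}\Delta + \frac1N\sum_{t>N_1}\bigl(D_t-\Delta\bigr) = -\frac{N_1}{N}\Delta + \frac1N\sum_{t>N_1}\bigl(D_t-\Delta\bigr).
\]

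\textbf{Step 2: square and take expectations.} The cross term between $-\tfrac{N_1}{N}\Delta$ (which is $\mathcal{F}_{N_1}$-measurable) and the centered sum vanishes by the tower property. The cross terms $\EE[(D_s-\Delta)(D_t-\Delta)]$ for $s\neq t$ vanish because the second-stage draws are conditionally independent with conditional mean $\Delta$.

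For the first piece, $\EE[\Delta^2] = \sigma_i^2/n_1 = K\sigma_i^2/N_1$. This gives
\[
\frac{N_1^2}{N^2}\cdot\frac{K\sigma_i^2}{N_1} = \frac{N_1}{N}\cdot\frac{K\sigma_i^2}{N} = \frac{N_1}{N}\,\mathrm{MSE}(i,\pi^{\rm uni}),
\]
using \eqref{eq:unif_mse} armwise.

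For the diagonal terms, compute
\[
\mathrm{Var}(D_t\mid\mathcal{F}_{N_1}) = \frac{\EE[(X_t-\hat\mu_i^{\mathrm{pilot}})^2 \mid I_t=i,\mathcal{F}_{N_1}]}{p_{t,i}} - \Delta^2 = \frac{\sigma_i^2+\Delta^2}{p_{t,i}}-\Delta^2 .
\]
This uses that a fresh second-stage draw from arm $i$ is independent of the pilot mean. Taking expectations, multiplying by $1/N^2$ and summing over $t>N_1$ yields exactly the Adaptive Variance and Interaction Penalty terms.

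\textbf{Main obstacle.} The main delicate point is the conditioning argument. It must be justified that the second-stage rewards are independent of $\hat\mu_i^{\mathrm{pilot}}$ and $\hat\sigma_i$ given the design, which holds since fresh samples are drawn from $\nu_i$. We also need $p_{t,i}>0$, i.e.\ $\hat\sigma_i>0$ almost surely, or else a convention must be adopted, so that the IPW weights and the conditional expectations are well defined. The other point worth stating carefully is the exact cancellation in Step 1, since it relies on the pilot being deterministic round-robin with exactly $n_1$ pulls per arm.
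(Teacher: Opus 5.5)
Your proposal is correct and follows essentially the paper's argument: the pilot residuals cancel, the error splits as $\frac{N_1}{N}(\hat\mu_i^{\mathrm{pilot}}-\mu_i)+\frac1N\sum_{t>N_1}(Z_{t,i}-\mu_i)$, the cross terms vanish by the tower property, and the conditional variance $\sigma_i^2/p_{t,i}+(\mu_i-\hat\mu_i^{\mathrm{pilot}})^2(1/p_{t,i}-1)$ is computed the same way. The only difference is that you condition once on $\mathcal F_{N_1}$ and use conditional i.i.d.-ness of the second-stage draws. The paper instead conditions on $\mathcal F_{t-1}$ and uses a martingale-difference argument, which would also cover second-stage probabilities $p_{t,i}$ that continue to update over time.
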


The decomposition shed light to where adaptive allocation can help, and where it can introduce additional cost. The second term is the \emph{adaptive variance}: after the pilot stage, the PCIPW correction has variance proportional to $\sigma_i^2/p_{t,i}$, so assigning larger probability to high-variance arms moves the design toward the Neyman benchmark in~\eqref{eq:Ney_mse}. The third term is an \emph{interaction penalty} between pilot mean error and inverse-propensity reweighting; it is small when $\hat\mu_i^{\mathrm{pilot}}$ is accurate, but can be amplified when the estimated allocation assigns a small probability to arm $i$. Thus, the finite-sample comparison with uniform sampling depends on whether the variance reduction from plug-in Neyman allocation dominates the pilot-estimation and reweighting cost. We will directly quantify this tradeoff in Theorem~\ref{thm:inf_suff}.

\subsection{Adaptive vs Uniform Inference Result}

To clarify the intuition, we assume each arm $i$'s reward distribution $\nu_i$ to be Normally distributed with mean $\mu_i$ and variance $\sigma_i^2$. This assumption is mainly used to keep the comparison transparent: under normality, the pilot mean and pilot variance are independent. The mechanism behind the result, however, is not specific to Gaussian rewards. For general $i.i.d.$ outcomes with suitable moment conditions, the pilot mean and variance estimators satisfy the usual asymptotic normal approximations, so an analogous comparison can be obtained with an additional approximation slack term. Summing up, under Algorithm~\ref{alg:two_stage_pcipw}, Lemma~\ref{lem:mse_twostage_pcipw}'s decomposition can be simplified to be:
\begin{align*}
    \mathrm{MSE}(i, \pi^{\rm AN}) &= \frac{N_1}{N} \mathrm{MSE} (i, \pi^{\rm uni}) + \frac{N_2}{N^2} \sigma_i^2 \mathbb{E}\left[\frac{1}{\hat p_i^{\rm Ney}}\right] + \frac{N_2}{N^2} \frac{K \sigma_i^2}{N_1} \left( \mathbb{E}\left[\frac{1}{\hat p_i^{\rm Ney}}\right] - 1 \right) \\
    &= \frac{N_1}{N} \mathrm{MSE} (i, \pi^{\rm uni}) + \frac{N - N_1}{N^2} \sigma_i^2 \left( \left(1 + \frac{K}{N_1}\right) \mathbb{E}\left[\frac{1}{\hat p_i^{\rm Ney}}\right] - \frac{K}{N_1} \right).
\end{align*}

Since $\hat p_i^{\rm Ney}$ consists of $\{\hat \sigma_j\}_{j=1}^k$, we can replace it by a combination of $F-$distribution and the population variance. We can, therefore, make a direct comparison of population parameters and hence derive a sufficient condition for Two-stage Adaptive Neyman Allocation to outperform Uniform Sampling. We are now ready to show our main theorem. 

\begin{theorem}[Exact Condition over Uniform Sampling]\label{thm:inf_suff}
    Assume that for each arm $i \in [K]$, the outcomes are i.i.d., normal with mean $\mu_i$ variance $\sigma_i^2$, and denote
\[
S := \sum_{i=1}^K \sigma_i, \quad V := \sum_{i=1}^K \sigma_i^2.
\]
Set $\nu := n_1-1 = N_1/K - 1$. Let $ \beta_\nu := \EE \left[\sqrt{F_{\nu,\nu}}\right] = \frac{\Gamma\big( \frac{\nu+1}{2}\big) \Gamma\big( \frac{\nu-1}{2}\big)}{\Gamma\big(\frac{\nu}{2}\big)^2},$ where $F_{\nu,\nu}$ is an $F$-distributed random variable with $(\nu,\nu)$ degrees of freedom.

Then the total MSE of the two-stage pilot-centered IPW estimator satisfies
\begin{equation}\label{eq:refined_condition_beta}
\sum_{i=1}^K \mathrm{MSE} (i, \pi^{\rm AN}) \le \sum_{i=1}^K \mathrm{MSE} (i, \pi^{\rm uni}) \iff \beta_\nu \le \frac{(K-1)V}{S^2 - V} \cdot \left( \frac{1}{1 + \frac{K}{N_1}} \right).
\end{equation}
\end{theorem}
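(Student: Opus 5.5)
We start from the simplified per-arm decomposition stated just before the theorem, which follows from Lemma~\ref{lem:mse_twostage_pcipw} under normality. Under normality the pilot mean and pilot variance are independent, and $\EE[(\mu_i-\hat\mu_i^{\rm pilot})^2]=K\sigma_i^2/N_1$. The decomposition reads
\[
\mathrm{MSE}(i,\pi^{\rm AN}) = \frac{N_1}{N}\mathrm{MSE}(i,\pi^{\rm uni}) + \frac{N-N_1}{N^2}\sigma_i^2\left(\Bigl(1+\frac{K}{N_1}\Bigr)\EE\Bigl[\frac{1}{\hat p_i^{\rm Ney}}\Bigr]-\frac{K}{N_1}\right).
\]
We sum over $i$ and subtract $\sum_i \mathrm{MSE}(i,\pi^{\rm uni}) = KV/N$ from \eqref{eq:unif_mse}. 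The pilot term contributes $\frac{N_1}{N}\cdot\frac{KV}{N}$, so the comparison reduces to $\frac{N-N_1}{N^2}$ times a bracket versus $\frac{N-N_1}{N}\cdot\frac{KV}{N}$. Cancelling the common positive factor $(N-N_1)/N^2$ gives the equivalent inequality
\[
\Bigl(1+\frac{K}{N_1}\Bigr)\sum_{i}\sigma_i^2\,\EE\Bigl[\frac{1}{\hat p_i^{\rm Ney}}\Bigr] - \frac{K}{N_1}V \le KV.
\]

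The key computation is $\EE[1/\hat p_i^{\rm Ney}] = \EE\bigl[\sum_j \hat\sigma_j/\hat\sigma_i\bigr] = 1 + \sum_{j\ne i}\EE[\hat\sigma_j/\hat\sigma_i]$. The pilot samples of distinct arms are independent and each arm gets $n_1$ pulls. Hence $\hat\sigma_j^2/\hat\sigma_i^2 = (\sigma_j^2/\sigma_i^2)\,F_{\nu,\nu}$ with $\nu=n_1-1$, which gives $\EE[\hat\sigma_j/\hat\sigma_i] = (\sigma_j/\sigma_i)\beta_\nu$. The Gamma expression for $\beta_\nu$ comes from the standard formula $\EE[F^{1/2}]$ for $F_{d,d}$, equivalently from $\EE[\chi_\nu]\EE[\chi_\nu^{-1}]$. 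We need $\nu\ge2$ for $\beta_\nu$ to be finite, so we record that as an implicit assumption. It follows that
\[
\sum_i \sigma_i^2\,\EE\Bigl[\frac{1}{\hat p_i^{\rm Ney}}\Bigr] = V + \beta_\nu\sum_i\sum_{j\neq i}\sigma_i\sigma_j = V + \beta_\nu(S^2 - V).
\]

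Substituting, the condition becomes $(1+K/N_1)(V+\beta_\nu(S^2-V)) \le (K+K/N_1)V$. Subtracting $(1+K/N_1)V$ from both sides leaves $(1+K/N_1)\beta_\nu(S^2-V) \le (K-1)V$. Since $S^2-V=2\sum_{i<j}\sigma_i\sigma_j>0$ for $K\ge2$, dividing yields exactly \eqref{eq:refined_condition_beta}. Every step is an equivalence because we only cancel or divide by positive quantities, so the ``iff'' holds.

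The main obstacle is justifying the simplified decomposition, in particular the interaction term. It requires independence of $(\mu_i-\hat\mu_i^{\rm pilot})^2$ from $\hat p_i^{\rm Ney}$. This holds because $\hat p^{\rm Ney}$ depends only on the sample variances, which are independent of all the pilot means under normality across independent arms. One must also verify that $\EE[1/p_{t,i}]$ is finite, which again needs $\nu\ge 2$. The remainder is the algebra above.
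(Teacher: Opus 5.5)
Your proposal is correct and follows essentially the same route as the paper. You sum the simplified per-arm decomposition, cancel the positive factor $(N-N_1)/N^2$ against the uniform benchmark $KV/N$, evaluate $\EE[1/\hat p_i^{\rm Ney}] = 1+\beta_\nu\sum_{j\ne i}\sigma_j/\sigma_i$ through the $F_{\nu,\nu}$ ratio to obtain $V+\beta_\nu(S^2-V)$, and rearrange. Your explicit points on the independence of $\hat\mu_i^{\rm pilot}$ from the vector of pilot standard deviations, and on needing $\nu\ge 2$ and $N>N_1$ for finiteness and for the equivalence, are conditions the paper leaves implicit.
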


The full proof of Theorem~\ref{thm:inf_suff} is in Section~\ref{pf:inf_suff}. Now, the left hand side term is a bit opaque in the current form, and by adopting Jensen's inequality on the $F-$distribution term $\beta_\nu$, we can derive the following corollary, which is operationally more intuitive.

%\dwh{Can we add 1-2 sentences on trying to justify why the normality assumption is weak. We never want to make an asymptotic theorem but technically if we just assumed iid outcomes wouldn't all this hold asymptotically because you just need sample means to follow normal and sample variances to follow chi-square which holds asymptotically. If that's the case, why not just state in text ``although we assume exact normality for theoretical conciseness, this theorem holds generally for just iid outcomes but instead the conclusion would asymptotically since we use chi-square approximations for sample variance'' something like this?}

\begin{corollary}\label{cor:pilot_size_condition}
Following the setup in~\ref{thm:inf_suff}, a sufficient condition for the two-stage adaptive Neyman allocation $\pi^{\rm AN}$ to outperform uniform sampling $\pi^{\rm uni}$ is that the pilot sample size $N_1$ satisfies:
\begin{equation}\label{eq:t1_sufficient_condition}
(1 + \frac{K}{N_1}) \cdot \sqrt{\frac{N_1 - K}{N_1 - 3K}} \le \frac{(K-1)V}{S^2 - V}.
\end{equation}
\end{corollary}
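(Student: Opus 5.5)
The plan is to use the exact characterization in Theorem~\ref{thm:inf_suff} and replace the opaque constant $\beta_\nu$ by an explicit upper bound. Once we have a bound $\beta_\nu \le B(N_1)$, the condition
\[
(1+\tfrac{K}{N_1})\,B(N_1) \le \frac{(K-1)V}{S^2-V}
\]
gives $\beta_\nu \le \frac{(K-1)V}{S^2-V}\cdot\frac{1}{1+K/N_1}$. By the ``$\Leftarrow$'' direction of \eqref{eq:refined_condition_beta}, this yields $\sum_i \mathrm{MSE}(i,\pi^{\rm AN}) \le \sum_i \mathrm{MSE}(i,\pi^{\rm uni})$. So the whole argument reduces to showing
\[
\beta_\nu \le \sqrt{\frac{N_1-K}{N_1-3K}}.
\]

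\textbf{Bounding $\beta_\nu$ via Jensen.} Since $x\mapsto\sqrt{x}$ is concave, Jensen's inequality gives $\beta_\nu = \EE[\sqrt{F_{\nu,\nu}}] \le \sqrt{\EE[F_{\nu,\nu}]}$. We then use the standard mean of an $F_{d_1,d_2}$ variable, $\EE[F]=d_2/(d_2-2)$ for $d_2>2$, which here is $\nu/(\nu-2)$. Equivalently, write $F_{\nu,\nu} = (\chi^2_\nu/\nu)/(\chi'^2_\nu/\nu)$ with independent numerator and denominator, and use $\EE[1/\chi^2_\nu] = 1/(\nu-2)$. Substituting $\nu = N_1/K - 1$ gives
\[
\frac{\nu}{\nu-2} = \frac{N_1/K-1}{N_1/K-3} = \frac{N_1-K}{N_1-3K},
\]
so $\beta_\nu \le \sqrt{(N_1-K)/(N_1-3K)}$, which is exactly the factor in \eqref{eq:t1_sufficient_condition}.

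\textbf{Main obstacle.} The delicate step is the domain of validity rather than the algebra. The bound $\EE[F_{\nu,\nu}] < \infty$ requires $\nu>2$, i.e.\ $N_1 > 3K$. The square root in \eqref{eq:t1_sufficient_condition} is real and finite only in that regime, so we state this restriction implicitly and check that $\beta_\nu$ is finite there. From the Gamma formula, finiteness needs only $\nu>1$, so $\beta_\nu$ is well defined wherever the corollary's expression is. We also need the right-hand side of Theorem~\ref{thm:inf_suff} to be well defined, which requires $S^2 > V$. This holds whenever $K\ge 2$, because all $\sigma_i>0$. No Gaussian-specific computation beyond what Theorem~\ref{thm:inf_suff} already uses is needed.
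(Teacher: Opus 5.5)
Your proposal is correct and follows the paper's route: Jensen gives $\beta_\nu=\EE[\sqrt{F_{\nu,\nu}}]\le\sqrt{\EE[F_{\nu,\nu}]}=\sqrt{\nu/(\nu-2)}=\sqrt{(N_1-K)/(N_1-3K)}$, and you then invoke the ``$\Leftarrow$'' direction of Theorem~\ref{thm:inf_suff}. Your note that the bound only makes sense for $N_1>3K$ (i.e.\ $\nu>2$) is a clarification the paper leaves implicit.
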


Corollary~\ref{cor:pilot_size_condition} gives a clear intuition for when adaptive Neyman allocation can improve upon uniform sampling. The pilot size 
$N_1$ measures how much information is collected before the second-stage allocation is chosen, capturing the strength of the learning phase. The terms $V$ and $S^2-V$ summarize the variance structure across arms, and measure how heterogeneous the arm variances are. When all arms have similar variances, the right-hand side is close to its smallest possible value, making the inequality harder to satisfy for a fixed pilot size. Indeed, when variances are similar, there is little reason to allocate them differently, so adaptive Neyman allocation has little advantage over uniform sampling. By contrast, when the variances differ substantially across arms, there is more to gain from reallocating effort toward the high-variance arms, and the sufficient condition becomes easier to satisfy; in that case the right-hand side becomes larger, so the same left-hand side is more likely to fall below it.

We remind readers that our aim is not to propose an asymptotic optimal algorithm. Rather, we show that even a simple data-driven adaptive procedure can outperform uniform sampling.  Consequently, this result provides guidance to experimenters: for sufficiently large pilot size $N_1$, an adaptive policy can improve upon the uniform baseline even for inference. 

\section{Adaptive Design for Inference and Regret}\label{sec:infreg}
So far, our focus has been purely inferential. In many experiments, however, estimation accuracy is not the only objective. While the experiment is running, the experimenter may also care about \emph{online performance}, namely, about avoiding too many pulls of clearly suboptimal arms. This naturally leads to a joint inference--regret problem. For simplicity, in this section, we will use the sample mean estimator for the inference objective. Changing the sample mean to other estimators is unlikely to affect this section's main asymptotic rate result, since the main asymptotic result is governed by the sampling frequencies rather than by the particular bias-corrected form of the estimator. 

% We therefore work with the sample mean, in contrast to Pilot-Centered IPW, which would add technical complexity without changing the main message. %This question has also being tackled by \cite{ErraqabiLVBL17} and \cite{simchi-levi_multi-armed_2025}, yet both work proposed bandit algorithm that remains non-intuitive to practitioners. 

To formalize this problem, Let $i^\star \in \arg\max_{i \in [K]} \mu_i$ denote an optimal arm, and for each arm $i$ define the suboptimality gap $\Delta_i := \mu_{i^\star} - \mu_i$. Throughout this section, we will assume for simplicity that the optimal arm is
unique, so that \(\Delta_i > 0\) for all \(i \neq i^\star\). The cumulative and average regret up to horizon $N$ is defined as 
\[
R_N := \sum_{i=1}^K \Delta_i T_i, \quad \bar R_N := \frac{1}{N} \sum_{i=1}^K \Delta_i T_i
\] 
which measures the performance loss induced by allocating samples to suboptimal arms. This is the usual pseudo-regret formulation in stochastic bandits: rather than comparing realized rewards pathwise through the random observations $X_t$, it compares the expected reward of the chosen arm to that of an optimal arm through the population means $\mu_i$. This makes the criterion depend on the sampling design itself, rather than on additional outcome noise, and is therefore the natural quantity to combine with an inference objective. 

We now measure performance through a single objective that balances statistical accuracy and online performance. Fix a weight \(\lambda \in (0,1)\) that determines the two objectives relative importance. For the inferential term, we use \(\sqrt{\mathrm{MSE}}\), which is the Root MSE (RMSE), rather than MSE itself, since the regret term grows linearly whereas MSE is quadratic in the estimation error. This puts the two terms on a more comparable scale. Our objective $\mathcal J_N$, therefore, is 
\begin{equation}\label{eq:joint_obj_main}
\mathcal J_N(\pi) := \lambda \sum_{i=1}^K \sqrt{\mathrm{MSE}(i,\pi)} + (1-\lambda) \EE_\pi[\bar R_N],
\end{equation}
Equivalently, writing out the mean squared error term explicitly, we have
\begin{equation*}\label{eq:joint_obj_main_expanded}
\mathcal J_N(\pi) = \lambda \sum_{i=1}^K \sqrt{\EE_\pi\bigl[(\hat\mu_i-\mu_i)^2\bigr]} + (1-\lambda) \EE_\pi \left[\frac{1}{N}\sum_{i=1}^K \Delta_i T_i\right].
\end{equation*}
Our goal in this section is to understand
\[
\min_{\pi} \mathcal J_N(\pi)
\]
for different policies $\pi$.

\subsection{Oracle Benchmark}\label{subsec:oracle_joint}

We start with the oracle fixed-allocation problem. Let $\mathcal S_K := \Bigl\{p\in\mathbb R_+^K : \sum_{i=1}^K p_i = 1\Bigr\}$ denote the simplex of allocation vectors. For \(p=(p_1,\dots,p_K)\in\mathcal S_K\), we interpret \(p_i\) as the target fraction of the horizon assigned to arm \(i\), so that \(T_i = N p_i\).

Under such a static allocation, the joint objective~\eqref{eq:joint_obj_main} becomes
\begin{equation}\label{eq:fixedalloc_obj_joint_clean} 
\mathcal J_N(p) = \lambda \sum_{i=1}^K \frac{\sigma_i}{\sqrt{N p_i}} + (1-\lambda)\sum_{i=1}^K p_i \Delta_i, \quad p \in \mathcal S_K.
\end{equation}
The first term is the oracle RMSE under fixed counts, while the second is the average regret induced by allocating a fraction \(p_i\) of the horizon to arm \(i\). The next lemma characterizes the optimizer of~\eqref{eq:fixedalloc_obj_joint_clean}, serving as a benchmark throughout the rest of the section.

\begin{lemma}[Oracle fixed-allocation benchmark]\label{lem:oracle_fixedalloc_joint_clean}
Assume \(\lambda\in(0,1)\), \(\sigma_i>0\) for all \(i\in[K]\), and a unique optimal arm \(i^\star\). Then the objective in \eqref{eq:fixedalloc_obj_joint_clean} admits a unique minimizer \(p^\star(N)\in\mathcal S_K\). It is characterized by
\begin{equation}\label{eq:q_star_closed_clean}
p_i^\star(N) = \left[ \frac{\lambda \sigma_i} {2\sqrt{N}\bigl((1-\lambda)\Delta_i-\alpha_N^\star\bigr)} \right]^{2/3}, \quad i\in[K],
\end{equation}
where the multiplier \(\alpha_N^\star\) is chosen so that \(\sum_{i=1}^K p_i^\star(N)=1\) and all denominators are positive. Moreover, as \(N\to\infty\),
\begin{equation}\label{eq:q_star_asymp_clean}
p_{i^\star}^\star(N)=1-\Theta(N^{-1/3}), \quad p_i^\star(N) = \left( \frac{\lambda \sigma_i}{2(1-\lambda)\Delta_i} \right)^{2/3} N^{-1/3}(1+o(1)), \quad i\neq i^\star.
\end{equation}
Consequently,
\begin{equation}\label{eq:oracle_opt_rate_clean}
\mathcal J_N\bigl(p^\star(N)\bigr)=\Theta(N^{-1/3}).
\end{equation}
\end{lemma}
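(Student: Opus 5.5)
The plan is a convex-optimization argument on the simplex, followed by asymptotics of the multiplier $\alpha_N^\star$. Write $f(p) := \lambda \sum_i \sigma_i (N p_i)^{-1/2} + (1-\lambda)\sum_i p_i\Delta_i$. Each map $p_i \mapsto \sigma_i p_i^{-1/2}$ is strictly convex on $(0,\infty)$ and tends to $+\infty$ as $p_i\downarrow 0$, so $f$ is strictly convex and coercive toward the boundary of $\mathcal S_K$. Hence a minimizer exists, it is unique, and it lies in the relative interior.

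\textbf{Step 1 (KKT characterization).} At the interior minimizer, Lagrange stationarity with multiplier $\alpha$ for $\sum_i p_i = 1$ gives
\[
-\frac{\lambda\sigma_i}{2\sqrt N}\,p_i^{-3/2} + (1-\lambda)\Delta_i = \alpha ,\qquad i\in[K].
\]
This requires $(1-\lambda)\Delta_i - \alpha > 0$, and then yields exactly \eqref{eq:q_star_closed_clean}. Since $\Delta_{i^\star}=0$, positivity forces $\alpha<0$. Each $p_i(\alpha)$ is continuous and strictly increasing on $(-\infty,0)$, tends to $0$ as $\alpha\to-\infty$, and $p_{i^\star}(\alpha)\to\infty$ as $\alpha\uparrow 0$. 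So $\sum_i p_i(\alpha)=1$ has a unique root $\alpha_N^\star<0$. This recovers the characterization independently of the convexity argument.

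\textbf{Step 2 (asymptotics).} Set $a := -\alpha_N^\star > 0$. The arm $i^\star$ gives $p_{i^\star} = (\lambda\sigma_{i^\star}/(2\sqrt N a))^{2/3} \le 1$, so $a \ge \lambda\sigma_{i^\star}/(2\sqrt N)$. For an upper bound on $a$, first note that if $a$ stayed bounded below, every $p_i$ would be $O(N^{-1/3})\to 0$, contradicting $\sum_i p_i = 1$. Hence $a\to 0$. In fact $p_{i^\star}\to 1$ forces $a = \lambda\sigma_{i^\star}/(2\sqrt N)\,(1+o(1))$, which gives $a=\Theta(N^{-1/2})$.

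For $i\ne i^\star$, we then have $(1-\lambda)\Delta_i + a = (1-\lambda)\Delta_i(1+o(1))$. Substituting into \eqref{eq:q_star_closed_clean} gives the stated $p_i^\star(N)$ with its $N^{-1/3}$ factor. Finally, $1-p_{i^\star}^\star = \sum_{i\ne i^\star} p_i^\star = \Theta(N^{-1/3})$ because each $\Delta_i>0$ and $\sigma_i>0$.

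\textbf{Step 3 (rate).} Plug $p^\star(N)$ into $f$.
\begin{itemize}
\item The optimal arm contributes $\lambda\sigma_{i^\star}/\sqrt{N p_{i^\star}} = O(N^{-1/2})$.
\item Each suboptimal arm contributes $\lambda\sigma_i (N\cdot\Theta(N^{-1/3}))^{-1/2} = \Theta(N^{-1/3})$ to the RMSE term and $\Delta_i\,\Theta(N^{-1/3})$ to the regret term.
\end{itemize}
So $\mathcal J_N(p^\star)=O(N^{-1/3})$. For the matching lower bound, minimize over $p_i$ separately for any fixed $i\neq i^\star$: for any $p$, $\lambda\sigma_i/\sqrt{Np_i}+(1-\lambda)\Delta_i p_i \ge \min_{x>0}\bigl(\lambda\sigma_i/\sqrt{Nx} + (1-\lambda)\Delta_i x\bigr) = c_i N^{-1/3}$ with $c_i>0$, by AM--GM. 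Since $K\ge 2$, this gives $\Theta(N^{-1/3})$.

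\textbf{Main obstacle.} The delicate step is Step 2: controlling $\alpha_N^\star$ precisely enough to show that the perturbation $a$ is negligible relative to $\Delta_i$. It also matters that the optimal arm's $N^{-1/2}$-scale term neither dominates the simplex constraint nor the rate. The monotonicity of $\alpha\mapsto\sum_i p_i(\alpha)$ combined with the sandwich bounds above is how I would handle it.
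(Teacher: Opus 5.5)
Your proposal is correct and follows essentially the same route as the paper: strict convexity with blow-up at the boundary, KKT stationarity giving the closed form, $a_N=-\alpha_N^\star>0$ forced by $\Delta_{i^\star}=0$, then $p_{i^\star}^\star\to1$ yielding $a_N=\tfrac{\lambda\sigma_{i^\star}}{2}N^{-1/2}(1+o(1))$, and substitution into the suboptimal-arm formula and the objective. The paper gets $p_{i^\star}^\star\to1$ directly from the bound $p_i^\star(N)\le\bigl(\lambda\sigma_i/(2(1-\lambda)\Delta_i)\bigr)^{2/3}N^{-1/3}$ for $i\neq i^\star$, which holds for any $a_N>0$; this makes your $a\to0$ contradiction step unnecessary, while your monotonicity-in-$\alpha$ root argument and your AM--GM lower bound (which holds for every $p\in\mathcal S_K$) are harmless refinements.
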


The proof for Lemma~\ref{lem:oracle_fixedalloc_joint_clean} is shown in Section~\ref{subsec:oracle_joint_pf}. Lemma~\ref{lem:oracle_fixedalloc_joint_clean} gives the best benchmark one could hope for: even with full knowledge of the instance, the optimal joint tradeoff decays only at rate \(N^{-1/3}\). This rate therefore becomes the natural target for any adaptive procedure. The structure of the oracle solution is also intuitive. Since regret pushes mass toward the best arm, while inference requires each arm to be sampled, the optimizer places almost all of its mass on the best arm, but still allocates a vanishing fraction of order \(N^{-1/3}\), equivalently $N^{2/3}$ pulls, to each suboptimal arm. This is the amount needed to keep the inferential term from becoming too large.

\subsection{SARP: Static-Allocation Rate Policy}

The oracle benchmark above is informative but not implementable, since it depends on the unknown variances and arm gaps. We now ask a more practical question:

\begin{quote}
\textit{Can one design a simple adaptive policy whose joint objective still decays at the oracle optimal rate \(N^{-1/3}\)?}
\end{quote}

The answer is yes, and our proposed policy is surprisingly simple and practical. The key idea is to separate the two roles played by the policy. To guarantee inference, we must keep exploring all arms with a certain rate. To control regret, we would otherwise rely on a standard regret-minimizing bandit algorithm. This suggests a very simple mixture: at round \(t\), explore with probability of order \(t^{-1/3}\), and exploit otherwise. To keep our result as general as possible, we allow our policy to mix with most adaptive algorithms. However, we do still need regular conditions, which we state now, on the adaptive algorithm to achieve the optimal rate:

\begin{definition}[Expected regret control]
\label{def:ALG-Exp-control-clean}
We say an exploitation algorithm \(\ALG\) satisfies expected regret control if there exists a constant \(C_{\ALG}>0\) such that, for every horizon \(n\ge 1\),
\begin{equation}\label{eq:ALG-Exp-regret-clean}
\EE \left[R_n^{\ALG}\right] \le C_{\ALG}\sqrt{K n \log n}.
\end{equation}
\end{definition}

This is a standard minimax-type regret guarantee, satisfied by many familiar bandit algorithms, including UCB or Thompson Sampling (\cite{LAI19854,pmlr-v23-agrawal12, lattimore2020bandit}). The advantage of Definition~\ref{def:ALG-Exp-control-clean} is that the theorem below does not depend on the fine details of the exploitation rule; any algorithm with a bound of the form~\eqref{eq:ALG-Exp-regret-clean} can be plugged in. We are now ready to state the main result of this subsection, introducing the \textbf{Static Allocation Rate Policy} in Algorithm~\ref{alg:sarp} and its main theorem guarantee.

\begin{algorithm}[ht!]
\caption{Static-Allocation Rate Policy (SARP)}
\label{alg:sarp}
\begin{algorithmic}[1]
\STATE \textbf{Input:} Constant $c_x>0$, static exploration distribution $p^0 \in \mathcal{S}_K$ with $p_i^0 > 0$ for all $i \in [K]$
\STATE In the first $K$ rounds, pull each arm exactly once
\FOR{$t = K+1, K+2, \dots$}
    \STATE Define the exploration probability $x_t := \min\{1, c_x t^{-1/3}\}$
    \STATE With probability $x_t$, \textbf{explore} by drawing arm $I_t$ from $p^0$
    \STATE With probability $1-x_t$, \textbf{exploit} by following the algorithm $\ALG$
\ENDFOR
\end{algorithmic}
\end{algorithm}

\begin{theorem}[SARP]\label{thm:joint_sarp_rate}
Consider a stochastic \(K\)-armed bandit with a unique optimal arm \(i^\star\) and suboptimality gaps $\Delta_i := \mu_{i^\star}-\mu_i > 0$ for $i\neq i^\star$. Assume that for each arm $i\in[K]$, the centered rewards $X_{i,s}-\mu_i$ are $\bar\nu$-sub-Gaussian. Fix \(\lambda \in (0,1)\), an exploration constant \(c_x>0\), and a full-support exploration distribution \(p^0=(p_1^0,\dots,p_K^0)\in\mathcal S_K\) with \(p_i^0>0\) for all \(i\in[K]\). 

Let \(\pi^{\rm SARP}\) denote the policy executed by Algorithm~\ref{alg:sarp}. Assume $\ALG$ satisfies Definition~\ref{def:ALG-Exp-control-clean} under the intermittent execution induced by Algorithm~\ref{alg:sarp}. Then there exist constants \(C<\infty\) and \(N_0<\infty\), such that
\begin{equation}\label{eq:sarp_rate}
\mathcal J_N(\pi^{\rm SARP}) \le C N^{-1/3}, \quad \forall N\ge N_0.
\end{equation}
Equivalently, $ \mathcal J_N(\pi^{\rm SARP}) = O(N^{-1/3}),$ meaning SARP achieves the same asymptotic rate as the oracle benchmark in Lemma~\ref{lem:oracle_fixedalloc_joint_clean}.
\end{theorem}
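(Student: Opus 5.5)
The plan is to bound the two parts of $\mathcal J_N(\pi^{\rm SARP})$ in \eqref{eq:joint_obj_main} separately and show each is $O(N^{-1/3})$. Here $\hat\mu_i$ is the sample mean of arm $i$, and $T_{i,N}\ge 1$ because of the initial round-robin.

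\textbf{Regret term.} Split the rounds into exploration rounds and exploitation rounds.
\begin{itemize}
\item An exploration round at time $t$ costs at most $\Delta_{\max}:=\max_i\Delta_i$ in expectation, and happens with probability $x_t\le c_x t^{-1/3}$. Summing, the total expected exploration regret is at most $\Delta_{\max}\bigl(K+\sum_{t\le N} c_x t^{-1/3}\bigr)\le \Delta_{\max}\bigl(K+\tfrac32 c_x N^{2/3}\bigr)$.
\item The exploitation rounds form a random subsequence of length $n\le N$. By the hypothesis that $\ALG$ satisfies Definition~\ref{def:ALG-Exp-control-clean} under intermittent execution, their expected regret is at most $C_{\ALG}\sqrt{KN\log N}$.
\end{itemize}
Dividing by $N$ gives $\EE[\bar R_N]\le O(N^{-1/3})+O(\sqrt{K\log N/N})=O(N^{-1/3})$.

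\textbf{Inference term.} The key point is that every arm receives $\Omega(N^{2/3})$ samples with overwhelming probability from exploration alone.
\begin{itemize}
\item Let $E_i:=\sum_{t=K+1}^N \ind\{\text{round } t \text{ explores and draws } i\}$. Each indicator is an independent Bernoulli with parameter $x_tp_i^0$, since the exploration coin and the draw from $p^0$ are external randomness independent of the history.
\item Its mean is $m_i:=\EE[E_i]\ge p_i^0 c_x\sum_{t}t^{-1/3}\ge c\,N^{2/3}$ for $N\ge N_0$.
\item A multiplicative Chernoff bound gives $\PP(E_i<m_i/2)\le e^{-m_i/8}\le e^{-cN^{2/3}/8}$. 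Since $T_{i,N}\ge E_i$, the event $\mathcal G_i:=\{T_{i,N}\ge m_i/2\}$ has probability at least $1-e^{-cN^{2/3}/8}$.
\end{itemize}

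\textbf{Main obstacle.} The hard step is bounding $\EE[(\hat\mu_i-\mu_i)^2]$ when $T_{i,N}$ is random and history-dependent, so the sample mean is not a sum of a fixed number of i.i.d.\ terms. I would use the stack-of-rewards construction: $\hat\mu_i-\mu_i=\frac{1}{T_{i,N}}\sum_{s=1}^{T_{i,N}}(X_{i,s}-\mu_i)$, where $(X_{i,s})_s$ is an i.i.d.\ $\bar\nu$-sub-Gaussian sequence. On $\mathcal G_i$ this gives
\[
(\hat\mu_i-\mu_i)^2\le \max_{m\ge m_i/2}\Bigl(\tfrac1m\textstyle\sum_{s\le m}(X_{i,s}-\mu_i)\Bigr)^2 .
\]
To control the right-hand side I would apply a maximal inequality on dyadic blocks $m\in[2^k,2^{k+1})$. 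Doob's inequality for the martingale $\sum_{s\le m}(X_{i,s}-\mu_i)$ gives $\EE\max_{m<2^{k+1}}(\sum_{s\le m})^2\le 4\bar\nu^2 2^{k+1}$, so each block contributes at most $8\bar\nu^2 2^{-k}$. Summing the geometric series over $2^{k+1}> m_i/2$ yields $O(\bar\nu^2/m_i)=O(N^{-2/3})$.

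On $\mathcal G_i^c$ I only have $T_{i,N}\ge1$. There I would use Cauchy--Schwarz, $\EE[(\hat\mu_i-\mu_i)^2\ind_{\mathcal G_i^c}]\le \sqrt{\EE\max_{m\le N}(\cdot)^4}\,\sqrt{\PP(\mathcal G_i^c)}$. The fourth moment is bounded by a sub-Gaussian constant independent of $N$ (again through a Doob-type bound, or crudely by $N\cdot O(\bar\nu^4)$), and the factor $e^{-cN^{2/3}/16}$ kills it.

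Combining the two events gives $\mathrm{MSE}(i,\pi^{\rm SARP})=O(N^{-2/3})$, so $\sum_i\sqrt{\mathrm{MSE}(i,\pi^{\rm SARP})}=O(KN^{-1/3})$. Adding the regret bound and choosing $N_0$ large enough that the lower-order terms are absorbed gives \eqref{eq:sarp_rate}.
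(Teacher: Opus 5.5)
Your proposal is correct and follows essentially the same route as the paper: independent Bernoulli exploration counts with a multiplicative Chernoff bound, the stack-of-rewards representation with dyadic Doob peeling on the good event, a Cauchy--Schwarz fourth-moment bound on the bad event, and the same initialization/exploration/exploitation split of the regret. The only slip is that $\EE[E_i]\ge p_i^0c_x\sum_t t^{-1/3}$ ignores the clipping in $x_t=\min\{1,c_xt^{-1/3}\}$. This is fixed by summing only over $t\ge\max\{K+1,\lceil c_x^3\rceil\}$, as the paper does, or by using $x_t\ge\min\{1,c_x\}\,t^{-1/3}$.
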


The theorem is deliberately general. It does not claim that the exploration distribution \(p^0\) is itself ideal for inference; rather, it shows that one can attain the optimal asymptotic order without estimating the oracle allocation online. This makes the result easy to interpret: a gradually vanishing amount of forced exploration is enough to guarantee inferential efficiency, while the exploitation algorithm takes care of the regret minimization.

\subsection{NARP: Neyman-Adaptive Rate Policy}

Theorem~\ref{thm:joint_sarp_rate} is useful as a conceptual benchmark: it shows that the oracle rate is attainable by a very simple and broadly applicable adaptive strategy. At the same time, the theorem is intentionally coarse. The exploration distribution \(p^0\) is fixed in advance, and the result only guarantees the correct asymptotic order. It does not attempt to improve upon constants, i.e., finite-sample performance, nor does it adapt to the instance structure learned during the experiment.

%\blue{In simulation NARP doesn't always improve in finite sample. I'm 80 percent sure if $N$ is large enough it will always outperform but it may not kick in reasonable sample size. We need to rephrase this probably. Also it seems like the exploration part, ie. lower RMSE, is what NARP helps perform best. Rewrite to make this line clear.}

On the other hand, the oracle benchmark in Lemma~\ref{lem:oracle_fixedalloc_joint_clean} suggests that the best allocation should not explore all arms under an arbitrary fixed probability. Rather, it should tilt its inferential effort toward high-variance arms while still favoring high-mean arms to control regret. This motivates a more structured policy that interpolates between two natural extremes:
\begin{itemize}
    \item a regret-oriented policy, such as Thompson Sampling or algorithms that fit Definition~\ref{def:ALG-Exp-control-clean}; and
    \item an inference-oriented policy, such as the Neyman Allocation from the previous section.
\end{itemize}

Hence we propose a policy that refines the simple SARP construction by explicitly balancing a regret-driven policy and an inference-driven policy within a smaller, more interpretable policy class. Before stating the algorithm, it is helpful to build intuition from an oracle interpolation between two extreme allocations: the rooted-Neyman allocation \(p^{\rm rN}\), which optimizes statistical efficiency, and the oracle best-arm vector \(e_{i^\star}\), which places all mass on the optimal arm. For \(x\in[0,1]\), define
\[
p^{(x)} := (1-x)e_{i^\star} + x p^{\rm rN}, \quad p_i^{\rm rN} := \frac{\sigma_i^{2/3}}{\sum_j^k \sigma_j^{2/3}}.
\]
Along this path, the parameter \(x\) controls only the \emph{amount} of exploration, while the exploration \emph{shape} is fixed by \(p^{\rm rN}\). From the oracle MSE calculation in Section~\ref{subsec:ora_ney}, we can do similar calculation to show the rooted-Neyman rule allocates mass proportionally to \(\sigma_i^{2/3}\) for RMSE optimization.

We now examine how the joint objective~\eqref{eq:joint_obj_main} behaves along this one-dimensional path. For every suboptimal arm \(i\neq i^\star\), we have \(p_i^{(x)}=x p_i^{\rm rN}\), so the regret contribution is linear in \(x\):
\[
(1-\lambda)\sum_{i=1}^K p_i^{(x)}\Delta_i = (1-\lambda)x\sum_{i\neq i^\star}\Delta_i p_i^{\rm rN} = (1-\lambda)Lx, \quad L:=\sum_{i\neq i^\star}\Delta_i p_i^{\rm rN}.
\]
At the same time, the dominant suboptimal-arm RMSE contribution is of order
\[
\frac{\lambda}{\sqrt N}\sum_{i\neq i^\star}\frac{\sigma_i}{\sqrt{x p_i^{\rm rN}}} = \frac{\lambda M}{\sqrt N}x^{-1/2}, \quad M:=\sum_{i\neq i^\star}\sigma_i (p_i^{\rm rN})^{-1/2}.
\]
Thus, after fixing the rooted-Neyman exploration profile, the remaining one-dimensional tradeoff is roughly
\[
\frac{\lambda M}{\sqrt N}x^{-1/2} + (1-\lambda)Lx.
\]
Optimizing this expression yields an exploration level of order
\[
x_N^\star \asymp \Bigl(\frac{\lambda M}{(1-\lambda)L}\Bigr)^{2/3}N^{-1/3}.
\]
This directly motivates the exploration rule used in the algorithm: we replace the unknown oracle quantities \(M\) and \(L\) by their plug-in estimates \(\hat M_{t-1}\) and \(\hat L_{t-1}\), and uses the same \(t^{-1/3}\) scaling online. We are now ready to present \textbf{Neyman-Adaptive Rate Policy (NARP)} in Algorithm~\ref{alg:narp}, together with its theoretical guarantee in Theorem~\ref{thm:NARP}.

\begin{algorithm}[ht!]
\caption{Neyman-Adaptive Rate Policy (NARP)}
\label{alg:narp}
\begin{algorithmic}[1]
\STATE \textbf{Input:} warm-start size $m_0\ge 2$, mixing weight $\lambda\in(0,1)$, forced-exploration constant $\alpha>0$
\STATE In the first $m_0K$ rounds, pull each arm exactly $m_0$ times

\FOR{$t = m_0K+1, m_0K+2, \dots$}
    \STATE Let $U_t \in \arg\min_{i\in[K]} T_i(t-1)$ be a least-sampled arm
    \STATE Define the minimum-sampling quota $g(t):= m_0 + \left\lceil \alpha \sqrt t \right\rceil$

    \IF{$T_{U_t}(t-1) < g(t)$}
        \STATE \textbf{Forced exploration:} pull arm $I_t = U_t$
    \ELSE
        \STATE Compute empirical means $\hat\mu_{i,t-1}$ and empirical standard deviations $\hat\sigma_{i,t-1}$ for all $i\in[K]$

        \STATE Define the estimated best arm, the plug-in gaps, rooted-Neyman allocation for all $i \in [K]$
        \[
        \hat i^\star_{t-1}\in\arg\max_{i\in[K]} \hat\mu_{i,t-1}, \quad \hat\Delta_{i,t-1} := \bigl(\hat\mu_{\hat i^\star_{t-1},t-1}-\hat\mu_{i,t-1}\bigr)_+, \quad \hat p_{i,t-1}^{\rm rN} := \frac{\hat\sigma_{i,t-1}^{2/3}}{\sum_{k=1}^K \hat\sigma_{k,t-1}^{2/3}}
        \]

        \STATE Define the plug-in functionals
        \[
        \hat M_{t-1} := \sum_{i\neq \hat i^\star_{t-1}} \hat\sigma_{i,t-1}\bigl(\hat p_{i,t-1}^{\rm rN}\bigr)^{-1/2}, \quad \hat L_{t-1} := \sum_{i\neq \hat i^\star_{t-1}} \hat\Delta_{i,t-1}\hat p_{i,t-1}^{\rm rN}
        \]

        \STATE Define
        \[
        x_t := \min\Biggl\{ 1, \Bigl( \frac{\lambda \hat M_{t-1}}{(1-\lambda)\hat L_{t-1}} \Bigr)^{2/3} t^{-1/3} \Biggr\}
        \]
        with the convention $x_t = 1$ if $\hat L_{t-1}=0$

        \STATE With probability $x_t$, explore by drawing $A_t$ from $\hat p_{t-1}^{\rm rN}$
        \STATE With probability $1-x_t$, exploit by following the algorithm $\ALG$
    \ENDIF
\ENDFOR
\end{algorithmic}
\end{algorithm}

\begin{theorem}[NARP]\label{thm:NARP}
Consider a stochastic $K$-armed bandit with a unique optimal arm $i^\star$ and suboptimality gaps $\Delta_i := \mu_{i^\star}-\mu_i > 0,$ for all $i\neq i^\star.$ Assume that for each arm $i\in[K]$, the centered rewards $X_{i,s}-\mu_i$ are $\bar\nu$-sub-Gaussian. Fix $\lambda\in(0,1)$, an integer warm-start size $m_0\ge 2$, and a forced-exploration constant $\alpha>0$.

We denote Algorithm~\ref{alg:narp} by $\pi^{\rm NARP}$. Assume $\ALG$ satisfies Definition~\ref{def:ALG-Exp-control-clean} under the intermittent execution induced by Algorithm~\ref{alg:narp}. Then there exist constants $C<\infty$ and $N_0> m_0K$, \footnote{These constants depend only on the bandit and algorithm instance: $\{K, \bar\nu, \lambda, m_0, \alpha\}$ and on the regret constant of $\ALG$.} , such that
\[
\mathcal J_N\bigl(\pi^{\rm NARP}\bigr)\le C N^{-1/3}, \qquad \forall\, N\ge N_0.
\]
Equivalently, $ \mathcal J_N(\pi^{\rm NARP}) = O(N^{-1/3}),$ meaning NARP achieves the same asymptotic rate as the oracle benchmark in Lemma~\ref{lem:oracle_fixedalloc_joint_clean}.
\end{theorem}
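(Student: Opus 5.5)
We bound the two pieces of $\mathcal J_N(\pi^{\rm NARP})$ separately and show each is $O(N^{-1/3})$. The main tool is the observation that NARP's exploration probability $x_t$ is, with high probability, sandwiched between two deterministic multiples of $t^{-1/3}$. This lets us reuse the argument behind Theorem~\ref{thm:joint_sarp_rate}.

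\textbf{Step 1: sandwiching $x_t$.} The forced-exploration quota $g(t)=m_0+\lceil\alpha\sqrt t\rceil$ ensures $T_i(t-1)\gtrsim\alpha\sqrt t$ for every arm at every non-forced round. Combining this with sub-Gaussian concentration and a union bound over $t\le N$ and $i\in[K]$, we define a good event $\mathcal G$ on which, for all $t\ge t_0$ with $t_0$ a constant:
\[
|\hat\mu_{i,t-1}-\mu_i|\le\Delta_{\min}/4 \quad\text{and}\quad \hat\sigma_{i,t-1}\in[\sigma_i/2,2\sigma_i].
\]
We take $\PP(\mathcal G^c)=O(N^{-1})$. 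On $\mathcal G$ we have $\hat i^\star_{t-1}=i^\star$, $\hat\Delta_i\in[\Delta_i/2,3\Delta_i/2]$, and $\hat p^{\rm rN}_{i,t-1}$ within constant factors of $p^{\rm rN}_i$. Hence $\hat M_{t-1}\asymp M$ and $\hat L_{t-1}\asymp L>0$, so that
\[
c_1t^{-1/3}\le x_t\le c_2t^{-1/3},
\]
with constants depending only on the instance.

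\textbf{Step 2: regret.} Regret comes from three sources.
\begin{itemize}
\item Forced rounds number at most $K g(N)=O(\sqrt N)$ and contribute $O(\sqrt N)$ regret.
\item Exploration rounds contribute at most $\max_i\Delta_i\sum_t x_t=O(N^{2/3})$ on $\mathcal G$, and $O(N)\cdot\PP(\mathcal G^c)=O(1)$ off it.
\item Exploitation rounds contribute $C_{\ALG}\sqrt{KN\log N}$, using the assumption that $\ALG$ satisfies Definition~\ref{def:ALG-Exp-control-clean} under intermittent execution.
\end{itemize}
Dividing by $N$ gives $\EE[\bar R_N]=O(N^{-1/3})$.

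\textbf{Step 3: RMSE of the sample means.} For the optimal arm, $T_{i^\star}$ is close to $N$, which gives error $O(N^{-1/2})$. For a suboptimal arm we need $T_i\gtrsim N^{2/3}$ with overwhelming probability. On $\mathcal G$, each round $t\ge t_0$ pulls arm $i$ with conditional probability at least $x_tp^{\rm rN}_i/4\ge c_3t^{-1/3}$. Freedman's inequality applied to the martingale $T_i-\sum_t p_{t,i}$ then gives $T_i\ge c_4N^{2/3}$ except on an event of probability $e^{-cN^{2/3}}$.

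Adaptive stopping means $\EE[(\hat\mu_i-\mu_i)^2]$ is not simply $\sigma_i^2/\EE T_i$. To handle this, we use a self-normalized (or peeling/union-over-$n$) maximal inequality for sub-Gaussian partial sums:
\[
\EE\Bigl[(\hat\mu_i-\mu_i)^2\ind\{T_i\ge n\}\Bigr]\lesssim \frac{\bar\nu\log n}{n}.
\]
A cleaner alternative is to decompose over the deterministic count $n$ and use $\sup_{m\ge n}|\bar X_{i,m}-\mu_i|^2$ via Doob's inequality on dyadic blocks, which gives $O(1/n)$ without the log. We would try the Doob route first. On the complement event $\{T_i<c_4N^{2/3}\}\cup\mathcal G^c$, we bound the squared error by $\sup_m|\bar X_{i,m}-\mu_i|^2$; since $T_i\ge m_0$, this has bounded moments, and Cauchy–Schwarz together with the tiny probability keeps the contribution negligible. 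Taking square roots yields $\sqrt{\mathrm{MSE}(i)}=O(N^{-1/3})$, with at most a $\sqrt{\log N}$ factor that the Doob route should remove.

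\textbf{Main obstacle.} The hardest step is Step 3: obtaining an MSE bound of order $1/N^{2/3}$ for a sample mean whose sample size is random and adaptively determined. The bound must avoid a logarithmic loss, since the theorem claims exactly $O(N^{-1/3})$. A secondary difficulty is the circularity in Step 1: concentration of $\hat\sigma,\hat\mu$ requires many samples, which the $\sqrt t$ quota supplies independently of $x_t$. That is precisely why the quota is built into the algorithm, and it resolves the circularity.
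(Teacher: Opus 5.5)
Your plan follows the same architecture as the paper's proof: stabilize the plug-ins after a checkpoint, sandwich $x_t$ between multiples of $t^{-1/3}$, split regret into forced, exploration and $\ALG$ parts, use Freedman to turn predictable exploration mass into $T_i\gtrsim N^{2/3}$, control the random-count MSE by dyadic Doob, and handle the bad event by Cauchy--Schwarz. However, the probability bookkeeping in Step~1 is wrong in a way that breaks Step~3.

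\textbf{The start time $t_0$ cannot be a constant.} $\mathcal G$ cannot begin at a constant time and still have $\PP(\mathcal G^c)=O(N^{-1})$. At a constant time each arm has only $O(1)$ samples. Fixed-accuracy concentration ($|\hat\mu_{i,t-1}-\mu_i|\le\Delta_{\min}/4$, $\hat\sigma_{i,t-1}\in[\sigma_i/2,2\sigma_i]$) from $O(1)$ samples has a failure probability that does not decay with $N$; for Gaussian rewards it is a fixed positive constant. The start time must grow with $N$, so that the quota supplies $\alpha\sqrt{t_0}\gtrsim A_0\log N$ samples, where $A_0=\max\{\bar\nu^2/\Delta_{\min}^2,\bar\nu^4/\sigma_{\min}^4\}$. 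The paper uses the checkpoint $s_N=\lceil N^{1/2}\rceil$, where every count is $\gtrsim N^{1/4}\gg\log N$, and the rounds before it cost only $O(\sqrt N)$ regret.

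\textbf{A failure probability of $O(N^{-1})$ is too weak for the bad-event step.} This is the more serious problem. Off the good event you only know $T_i\ge m_0$, so the fourth moment of $\hat\mu_i-\mu_i$ is merely $O(1)$. Cauchy--Schwarz then gives an $O(\PP(\mathrm{bad})^{1/2})=O(N^{-1/2})$ contribution to the MSE, i.e.\ $O(N^{-1/4})$ to the RMSE, which loses the $N^{-1/3}$ rate. You need $\PP(\mathrm{bad})=o(N^{-4/3})$. The paper gets this by demanding per-round failure probability $N^{-4}$ through $n_{\rm stab}(N)=\lceil C_{\rm stab}A_0\log(8KN^4)\rceil$. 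This gives $\PP(\mathcal E_{\rm stab}(N)^c)\le N^{-3}$ and $\PP(\mathcal E^c)=O(N^{-2})$, hence an $O(N^{-1})$ MSE contribution. Alternatively, keep $O(N^{-1})$ and use H\"older with a higher moment: $\EE[\sup_{m\ge m_0}|\bar X_{i,m}-\mu_i|^8]^{1/4}\PP(\mathrm{bad})^{3/4}=O(N^{-3/4})$. Both repairs are cheap, since the $\sqrt t$ quota provides polynomially many samples while only $O(\log N)$ are needed. As stated, though, the argument does not prove the theorem.

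\textbf{Smaller point: the optimal arm.} Your claim that $T_{i^\star}$ is close to $N$, giving $O(N^{-1/2})$ error for the optimal arm, is not justified. Expected-regret control of $\ALG$ only gives $\EE[N-T_{i^\star}]=O(N^{2/3})$. Markov then yields $\PP(T_{i^\star}<N/2)=O(N^{-1/3})$, far too large for the Cauchy--Schwarz step. The claim is also unnecessary: since $p^{\rm rN}_{i^\star}>0$, your exploration-plus-Freedman argument gives $T_{i^\star}\gtrsim N^{2/3}$ exactly as for the other arms, which is all the rate needs.

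\textbf{Smaller point: where the pull-probability bound holds.} The lower bound $x_t\hat p^{\rm rN}_{i,t-1}\gtrsim t^{-1/3}$ on the conditional pull probability holds only at non-forced rounds and only on the local concentration event. You should sum the predictable mass over non-forced rounds, of which there are at least $N-s_N-K(\alpha\sqrt N+1)$. Then apply Freedman on the intersection with the stabilization event, as the paper does.
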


Although Theorem~\ref{thm:NARP} establishes the same \(N^{-1/3}\) rate as SARP, the plug-in quantities \(\hat M_{t-1}\) and \(\hat L_{t-1}\) are the online analogues of the two oracle coefficients that govern the interpolation between inference-driven and regret-driven allocation. Thus, NARP should be viewed as a practical and interpretable refinement of SARP rather than as a claim of global optimality among all adaptive bandit procedures (\cite{ErraqabiLVBL17, simchi-levi_multi-armed_2025}). Comparing Algorithm~\ref{alg:sarp} and Algorithm~\ref{alg:narp}, the key distinction is that both use an exploration rate of the correct order \(t^{-1/3}\), but SARP leaves the leading constant unspecified, taking the form \(c_x t^{-1/3}\), whereas NARP adaptively calibrates this constant from the data. In addition, SARP permits any fixed exploration distribution \(p^0\), while NARP uses a plug-in rooted-Neyman allocation, which targets the lower-RMSE side of the inference--regret tradeoff by assigning exploratory pulls according to the estimated variance structure. This finer oracle-calibrated behavior is not captured by the rate theorem, but it is the mechanism behind the finite-sample improvements observed in Section~\ref{sec:simulation}.

% This finer oracle-calibrated behavior is not captured in the theorem statement, but it provides the main heuristic for why NARP can improve over SARP in finite-sample simulations for certain instances.

%\blue{We show the overperformance of NARP upon SARP in the simulation. Next, we also need to comment on why exactly we want the $\hat M, \hat L$. This is tracking optimality but our theorem is simply showing rate optimal. Are there strong enough reason for us to use the proof technique we use right now? Or should we focus on the claim of optimal tracking nature? But what can we claim in the statement in this sense?} 

\section{Numerical Simulation}\label{sec:simulation}

Now we provide simulations that validate our theoretical finding in Section~\ref{sec:inf} and ~\ref{sec:infreg}.

\subsection{Inference Simulation}

%\blue{Add comparison between estimators to show our point in the $\hat \mu$ of IPW? -> Nice to have but not first order.}

We start with a theorem-guided analysis. Before turning to Monte Carlo simulations, we use Theorem~\ref{thm:inf_suff} to compute the minimum pilot size \(N_1^{\min}\) required for Algorithm~\ref{alg:two_stage_pcipw} to have lower total MSE than uniform sampling under different variance configurations. We also report
\[
\text{Oracle Gain} := 100 \times \left(\frac{KV/N - S^2/N}{KV/N}\right)
= 100\times\left(1 - \frac{S^2}{KV}\right),
\]
which measures the relative reduction in total MSE achieved by oracle Neyman allocation compared with uniform allocation, as derived in~\eqref{eq:Ney_mse} and~\eqref{eq:unif_mse}. This quantity is not the realized gain from adaptive sampling; rather, it summarizes the amount of variance heterogeneity and the maximum efficiency improvement. By contrast, \(N_1^{\min}\) is the theorem-implied pilot threshold above which the two-stage adaptive design is guaranteed to improve upon uniform sampling. Table~\ref{tab:inference_thresholds_thm} reports these calculations and identifies the regimes in which adaptive allocation is expected to be more beneficial.

\begin{table}[htbp!]
\centering
\begin{tabular}{c l c c}
\hline
$K$ & $\boldsymbol{\sigma}$ & $N_1^{\min}$ & Oracle Gain (\%) \\
\hline
4  & $[1,1,1,5]$ & 12   & 42.86 \\
4  & $[3,3,3,4]$ & 260 & 1.74 \\
6  & $[1,1,2,2,3,6]$ & 24   & 31.82 \\
6  & $[1,1,1,3,3,3]$ & 42   & 20.00 \\
10 & $[1,1,2,2,3,3,4,4,5,5]$ & 80   & 18.18 \\
10 & $[1,1,1,1,1,1,1,1,1,10]$ & 30   & 66.88 \\
\hline
\end{tabular}
\caption{Theorem-guided inference thresholds under different arm-variance configurations. The column \(N_1^{\min}\) reports the minimum pilot size required by Theorem~\ref{thm:inf_suff} for the two-stage adaptive Neyman design to improve upon uniform sampling. Oracle Gain reports the relative total-MSE reduction of oracle Neyman allocation compared with uniform allocation.}
\label{tab:inference_thresholds_thm}
\end{table}

The table illustrates two distinct forces behind the benefit of adaptive allocation. First, the Oracle Gain is larger when the arm variances are more heterogeneous, because this is precisely the setting in which Neyman allocation differs most from uniform allocation. In the nearly-homogeneous variance case, Neyman and uniform allocation are close to equivalent, so there is little allocation advantage to learn, which is reflected in the configuration \(K=4\) and \(\boldsymbol\sigma=[3,3,3,4]\). Second, the adaptive procedure can realize this advantage only after the pilot stage estimates the variance structure accurately enough. Thus, \(N_1^{\min}\) should be interpreted as the required learning-phase size: when the variances are similar, the oracle gain is small and more pilot samples may be needed before adaptation is guaranteed to outperform uniform sampling.

We now turn to a simulation study, using \(2000\) Monte Carlo replications throughout this section. We begin by examining Algorithm~\ref{alg:two_stage_pcipw} under varying pilot sizes \(N_1\). The goal is to confirm the effect of the length of the pilot phase on total MSE of the resulting two-stage adaptive Neyman procedure. As in Section~\ref{sec:inf}, we compare against uniform sampling and oracle Neyman as the main benchmarks. They are not included as columns in the tables since, for a fixed total budget \(N\), their total MSE does not depend on the pilot size \(N_1\).

%\dwh{Why not? Just include them as columns that don't vary with the x-axis I think that's informative as well.} \blue{Can inlcude, consider it.} 

\begin{table}[htbp]
\centering
\small
\begin{tabular}{rccc}
\hline
Pilot Size \(N_1\) & 2-AN MSE: \(A\) & \(\Delta_U := A-U\) & Gain vs U (\%) \\
\hline
8   & 0.476 &  \phantom{-}0.253 & -113.21 \\
12  & 0.196 & -0.027 &  12.09 \\
16  & 0.177 & -0.047 &  20.86 \\
20  & 0.157 & -0.067 &  29.90 \\
24  & 0.157 & -0.066 &  29.59 \\
28  & 0.151 & -0.073 &  32.45 \\
\hline
\end{tabular}
\caption{
Total MSE of two-stage adaptive Neyman under Pilot-centered IPW, abbreviated as 2-AN, for the strong heteroskedastic \(K=4\) Gaussian setting with \(\boldsymbol{\sigma} = (1,1,1,5)\), total budget \(N=500\). The fixed baselines are Uniform \(U=0.2234\) and Oracle Neyman \(0.1279\). 
}
\label{tab:k4_strong_adaptive_centered}
\end{table}

\begin{table}[htbp!]
\centering
\begin{tabular}{c c c c}
\toprule
Pilot size \(N_1\) & 2-AN MSE: \(A\) & \(\Delta_U=A-U\) & Gain vs U (\%) \\
\midrule
40  & 0.677 &  0.127 & -23.12 \\
50  & 0.608 &  0.058 & -10.53 \\
60  & 0.575 &  0.025 &  -4.49 \\
70  & 0.553 &  0.003 &  -0.53 \\
80  & 0.539 & -0.011 &   1.98 \\
90  & 0.531 & -0.019 &   3.51 \\
100 & 0.521 & -0.029 &   5.26 \\
110 & 0.514 & -0.036 &   6.60 \\
120 & 0.508 & -0.042 &   7.68 \\
\bottomrule
\end{tabular}
\caption{Total MSE of two-stage adaptive Neyman under Pilot-centered IPW, abbreviated as 2-AN, for the moderate heterogeneity \(K=10, \ \boldsymbol{\sigma} = \{1,1,2,2,3,3,4,4,5,5\}\) Gaussian setting, total budget $N=2000$.  The fixed baselines are Uniform \(=0.5500\) and Oracle Neyman \(=0.4399\).}
\label{tab:k10_moderate_adaptive_centered}
\end{table}

We report results for two configurations that vary the heterogeneity of the variances $\sigma$. The overall pattern is consistent across both: as the pilot size \(N_1\) increases, the two-stage adaptive Neyman procedure improves steadily, reflecting the fact that a longer pilot phase yields more accurate variance estimates and therefore a better plug-in Neyman allocation in stage two.

This effect is most visible in Table~\ref{tab:k4_strong_adaptive_centered}, which corresponds to the stronger heteroskedastic \(K=4\) setting. There, the adaptive procedure begins to outperform uniform once \(N_1\) reaches 12, and quickly moves closer to the oracle benchmark. This is exactly the regime in which adaptive allocation has the most to gain, since the variance profile is highly uneven across arms.

Table~\ref{tab:k10_moderate_adaptive_centered} shows the same phenomenon in the more \(K=10\) setting, but in a moderate form. For small \(N_1\), the adaptive procedure underperforms uniform sampling because the pilot estimates are still too noisy. As \(N_1\) grows, however, the total MSE decreases steadily, and the method begins to outperform uniform sampling around \(N_1=80\), in line with the threshold prediction from Table~\ref{tab:inference_thresholds_thm}.

Overall, these experiments reinforce the theoretical message of this subsection: adaptive allocation is not automatically beneficial when the pilot phase is too short or when the underlying variance is homogeneous, but it becomes effective once the pilot budget is large enough to recover the underlying variance structure with reasonable accuracy. 

\subsection{Inference and Regret Simulation}

We now turn to a simulation study of Algorithms~\ref{alg:sarp} and~\ref{alg:narp}. We compare their empirical behavior against two baselines: uniform sampling and the \texttt{ForcingBalance} algorithm of \cite{ErraqabiLVBL17}. The latter benchmark is derived for a different objective: \cite{ErraqabiLVBL17} uses a proxy inference term associated with the static-allocation problem and introduces a \(\sqrt{N}\) scaling in the tradeoff, whereas our performance metric is based on summed armwise RMSE and average regret. We therefore remove the $\sqrt N$ scaling in \texttt{ForcingBalance} and compare accordingly. Throughout this section, we use 1{,}000 Monte Carlo replications and fix \(\lambda=0.5\). For implementation, SARP uses \(c_x=1\) and a uniform exploration distribution \(p_0\), NARP uses \(m_0=2\) and \(\alpha=1\); in both SARP and NARP, the exploitation component is Gaussian Thompson Sampling.

\begin{table}[htbp!]
\centering
\small
\begin{tabular}{lccc}
\toprule
Policy & Sum RMSE & Avg.\ Regret & Joint Loss \\
\midrule
Oracle & 1.99 & 0.77 & 1.38 \\
\texttt{ForcingBalance} & 2.06 & 0.78 & 1.42 \\
NARP & 1.75 & 1.41 & 1.58 \\
Uniform & 1.79 & 1.75 & 1.77 \\
SARP & 3.10 & 0.45 & 1.78 \\
\bottomrule
\end{tabular}
\caption{Policy comparison at horizon \(T=1000\) for the 8-arm configuration with \(\boldsymbol{\mu}=[0,0.5,1,1.5,2,2.5,3,3.5]\), \(\boldsymbol{\sigma} = [1,1,2,2,3,3,4,4]\), and objective weight \(\lambda=0.5\).}
\label{tab:staircase8arm_t1000}
\end{table}

Table~\ref{tab:staircase8arm_t1000} highlights the main qualitative distinction between SARP and NARP. NARP achieves substantially better inferential performance than SARP, as reflected in its much smaller summed RMSE, but this comes at the price of higher regret. This is consistent with the intended role of NARP: it preserves a more nuanced exploratory allocation in order to improve estimation accuracy. By contrast, SARP uses a simpler exploration schedule and therefore tilts more strongly toward exploitation at this finite horizon, leading to lower regret but substantially worse inference.

In this instance, \texttt{ForcingBalance} achieves the best joint objective among the non-oracle policies. We view it as an important benchmark, but not as a direct head-to-head comparison under identical objectives. \texttt{ForcingBalance} is tailored to solving a per-round inference--regret optimization problem over a broad policy class, while our policies deliberately use a more restricted interpolation form. Formally, we consider a more practical scenario where a practitioner has already picked an exploration policy for inference, e.g., adaptive Neyman allocation or a base $\pi_0$, and a desirable regret-minimizing bandit algorithm, e.g., Thompson sampling. We restrict our adaptive policies to policies that specifically linearly interpolates between two such policies, where our flexibility is not in a specifically chosen explore or exploit policy but how to interpolate, with the right rate, any two given policies. 
This restriction leads to a closed-form exploration rule, separates the inference-aware exploration component from the underlying regret-minimizing algorithm, and makes the resulting procedure easier to implement in practice. In particular, the same construction can be paired with standard regret-minimization algorithms without redesigning the full allocation rule around a specialized objective.

\begin{figure}[htbp!]
    \centering
    \includegraphics[width=.8\linewidth]{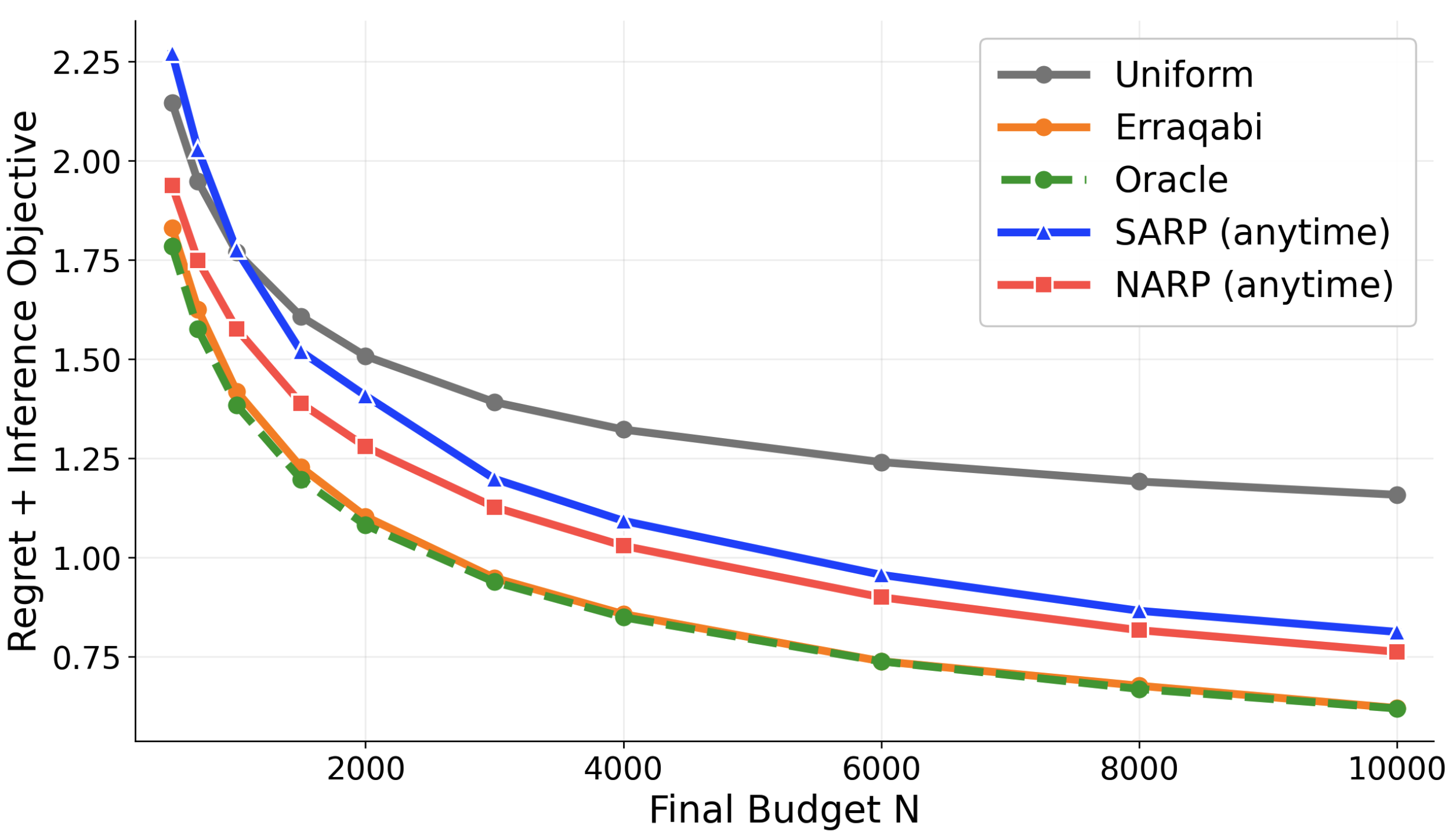}
    \caption{Joint objective as a function of the total budget \(N\) for the 8-arm configuration with \(\boldsymbol{\mu} = [0,0.5,1,1.5,2,2.5,3,3.5]\), \(\boldsymbol{\sigma}=[1,1,2,2,3,3,4,4]\), and objective weight \(\lambda=0.5\).}
    \label{fig:staircase_8arm_obj}
\end{figure}

Figure~\ref{fig:staircase_8arm_obj} plots the combined objective as we increase total budget \(N\) from $500$ to $10,000$. Two features are worth emphasizing. First, both SARP and NARP exhibit the same qualitative decay pattern as the oracle benchmark, consistent with the \(N^{-1/3}\) scaling provided in Theorem~\ref{thm:joint_sarp_rate} and~\ref{thm:NARP}. Second, NARP remains uniformly below SARP in the combined objective over this range of horizons, indicating that the plug-in adaptation of the exploration level yields improvement over the simpler fixed-rate exploration schedule under this configuration.

\begin{figure}[htbp!]
    \centering
    \includegraphics[width=1\linewidth]{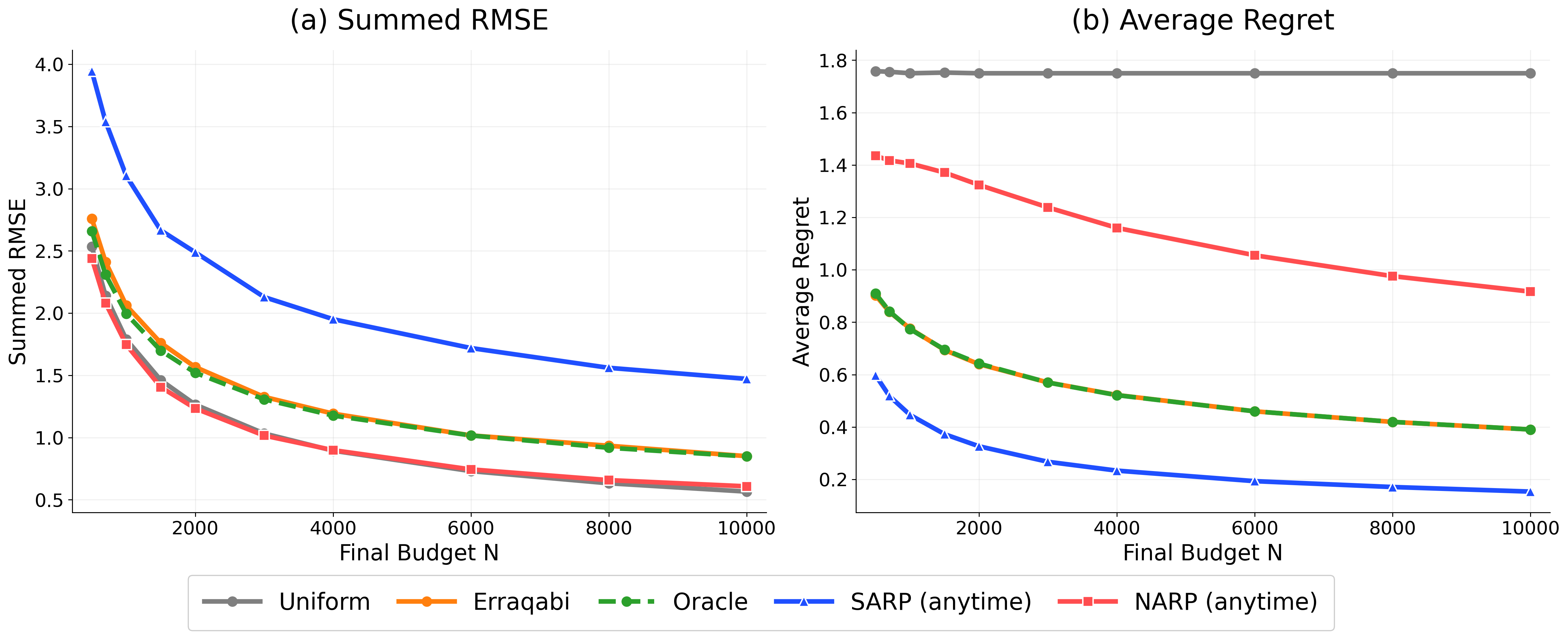}
    \caption{Performance decomposition as a function of the total budget \(N\) for the 8-arm configuration with \(\boldsymbol{\mu} =[0,0.5,1,1.5,2,2.5,3,3.5]\), \(\boldsymbol{\sigma} = [1,1,2,2,3,3,4,4]\), and objective weight \(\lambda=0.5\). The left panel plots summed RMSE, and the right panel plots average regret.}
    \label{fig:staircase_8arm_decompose}
\end{figure}

The decomposition in Figure~\ref{fig:staircase_8arm_decompose} reveals the core mechanism behind our two policies. SARP places more weight on exploitation and therefore controls regret more aggressively, while NARP preserves a more informative allocation and obtains substantially better inferential accuracy. To illustrate this more carefully, we present two panels in Figure~\ref{fig:6_10arm_comparison} that show how the inference--regret mechanism changes for two additional configurations. Panel (b) considers a 10-arm regime with substantial variance heterogeneity and weak separation among the top arms. In this setting, aggressive exploitation can sharply worsen the inferential component, so the gain from moving beyond SARP is visible. Panel (a) considers a 6-arm regime, where the best arm is easier to identify and the variance profile is less extreme. Here, the inferential penalty from exploitation is smaller, so SARP's lower-regret behavior becomes more valuable and its joint objective can come close to both the oracle benchmark. Together, the two panels illustrate that SARP can perform well when estimation is relatively easy, while NARP becomes more useful when variance heterogeneity and small top-arm gaps make inference harder.

\begin{figure}[htbp!]
    \centering
    \includegraphics[width=1\linewidth]{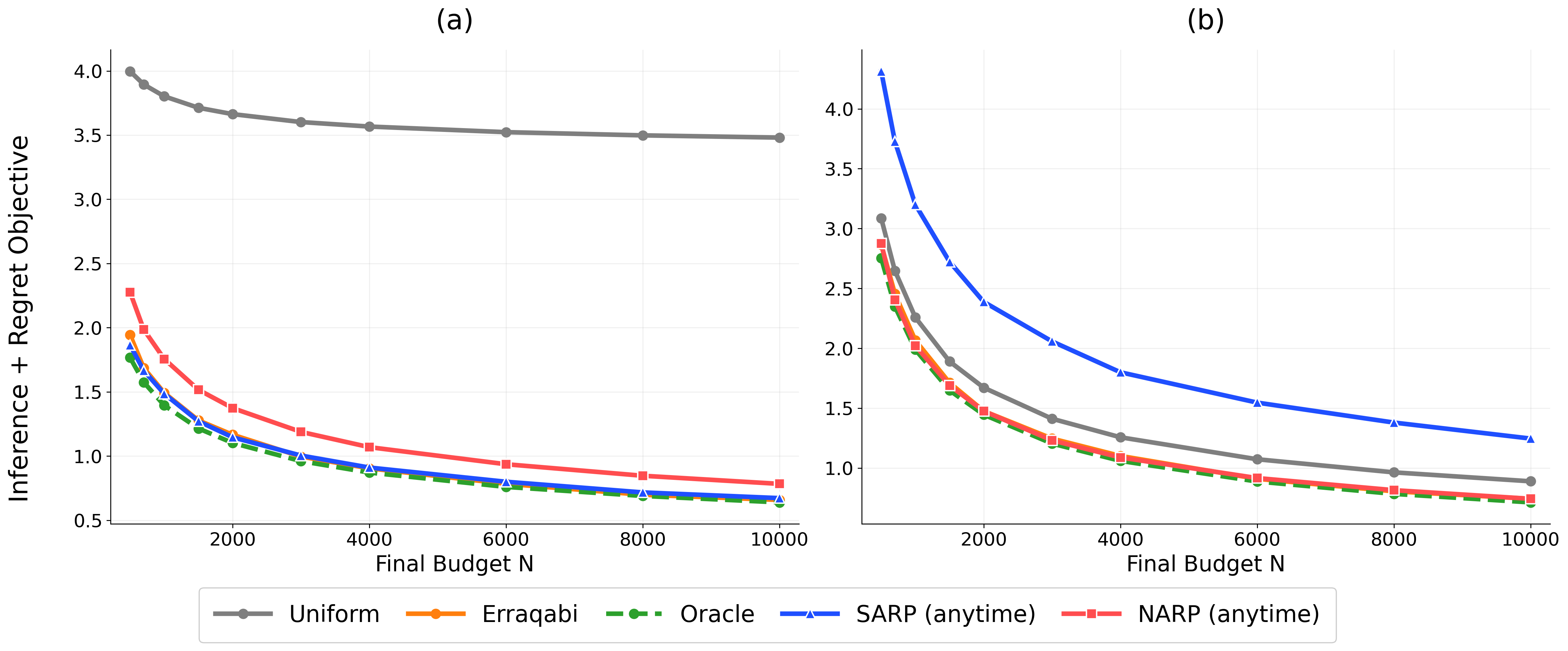}
    \caption{Joint objective as a function of the total budget \(N\) for two configurations with weight $\lambda = 0.5$. Panel (a) shows the 6-arm configuration with \(\boldsymbol{\mu}=[0,1,2,3,4,10]\), \(\boldsymbol{\sigma}=[1,1,2,2,3,3]\). Panel (b) shows the 10-arm configuration with \(\boldsymbol{\mu}=[0,0.2,0.4,0.6,0.8,1,1.1,1.2,1.25,1.3]\), \(\boldsymbol{\sigma}=[1,1,2,2,3,3,5,5,8,10]\). The 6-arm setting has a clearly separated best arm and moderate variance heterogeneity, while the 10-arm setting combines weak top-arm separation with large variance heterogeneity.}
    \label{fig:6_10arm_comparison}
\end{figure}

These comparisons suggest the following interpretation. When the best arm is relatively easy to identify, the simpler and more exploitation-oriented structure of SARP can be adequate, since the inferential cost of exploration remains limited. As the problem becomes harder---because the top-arm gap shrinks, the variance profile becomes more uneven, or both---the benefit of NARP becomes more visible. 

Overall, these simulations show that SARP and NARP provide a transparent and tunable policy family whose behavior is theoretically grounded and easy to interpret. SARP offers a lighter, more exploitation-oriented rule, while NARP uses the estimated variance and gap structure to protect inferential precision. In this sense, our contribution is to identify a practically interpretable frontier of adaptive designs when both inference and regret need to be taken into account.

%\section{Application}

\section{Discussion}\label{sec:discussion}
In summary, this paper studies adaptive experimentation through two complementary lenses: statistical efficiency and online performance. We show that in the pure-inference setting, adaptive allocation can improve upon uniform sampling in finite samples once the pilot phase is informative enough. We then show that in the joint inference--regret setting, simple adaptive policies can attain the natural oracle rate while expressing different operating points on the same tradeoff frontier. The resulting message is not that one adaptive rule uniformly dominates all others, but that the value of adaptivity depends on the structure of the instance and on the experimenter's objective. By making this dependence explicit, our framework helps clarify when adaptive sampling is beneficial, when it is worth paying for additional exploration, and how that choice can be encoded in a policy that remains interpretable and practically deployable.

\newpage
\section*{Acknowledgement}
The authors thank Jinglong Zhao and Chonghuan Wang for helpful discussions. 

{
\bibliographystyle{ims}
\bibliography{bib}
}

\newpage

\appendix

\section{Technical Lemmas}
\begin{lemma}[MSE Decomposition of Two-Stage Horvitz--Thompson]\label{lem:mse_twostage_ht}
Under the setup of Algorithm~\ref{alg:two_stage_pcipw}, but set estimator to be 
\[
\hat\mu_i^{\mathrm{HT}} := \frac{1}{N}\sum_{t=1}^N \frac{\ind\{I_t=i\}X_t}{p_{t,i}}, \quad i\in[K],
\]
which is the global Horvitz--Thompson estimator over horizon $N$. Then the total MSE for arm $i$ decomposes as
\[
\mathrm{MSE}(i, \hat \pi^{\rm AN}) = \frac{N_1}{N} \mathrm{MSE}(i, \pi^{\rm uni}) + \frac{1}{N^2}\sum_{t=N_1+1}^{N} \EE \left[\frac{\mu_i^2+\sigma_i^2}{p_{t,i}}-\mu_i^2\right] \ \text{ under HT estimator.}
\]
\end{lemma}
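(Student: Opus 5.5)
We follow the same route as Lemma~\ref{lem:mse_twostage_pcipw}, splitting the HT estimator at the pilot boundary. Write
\[
\hat\mu_i^{\mathrm{HT}}-\mu_i = \frac{1}{N}\sum_{t=1}^{N}\Bigl(\frac{\ind\{I_t=i\}X_t}{p_{t,i}}-\mu_i\Bigr) =: \frac{1}{N}\sum_{t=1}^N Z_t .
\]
With respect to the filtration $\history_{t}$, each $Z_t$ is a martingale difference. Indeed, conditional on $\history_{t-1}$, the probability $p_{t,i}$ is fixed: it equals $1/K$ in the pilot (or is deterministic when the pilot is a fixed round-robin), and equals $\hat p_i^{\mathrm{Ney}}$ afterwards, which is $\history_{N_1}$-measurable. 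Moreover, the fresh reward satisfies $\EE[\ind\{I_t=i\}X_t\mid\history_{t-1}]=p_{t,i}\mu_i$. Hence all cross terms $\EE[Z_sZ_t]$ with $s<t$ vanish by the tower property, and
\[
\mathrm{MSE}(i,\pi^{\rm AN}) = \frac{1}{N^2}\sum_{t=1}^N \EE[Z_t^2].
\]

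Next we compute the conditional second moments. Given $\history_{t-1}$,
\[
\EE\bigl[\ind\{I_t=i\}X_t^2\mid\history_{t-1}\bigr]=p_{t,i}(\mu_i^2+\sigma_i^2),
\]
so
\[
\EE[Z_t^2\mid\history_{t-1}] = \frac{\mu_i^2+\sigma_i^2}{p_{t,i}} - 2\mu_i^2+\mu_i^2 = \frac{\mu_i^2+\sigma_i^2}{p_{t,i}}-\mu_i^2 .
\]
Taking expectations and summing over $t>N_1$ gives exactly the second term of the statement.

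It remains to identify the pilot block $\frac{1}{N^2}\sum_{t\le N_1}\EE[Z_t^2]$ with $\frac{N_1}{N}\mathrm{MSE}(i,\pi^{\rm uni})$. In Lemma~\ref{lem:mse_twostage_pcipw} this identification is done under the paper's convention, and we will mirror it. For the pilot block, the paper treats the pilot as a deterministic round-robin, so the pilot contribution reduces to the variance of the pilot sample mean scaled by the pilot's share $N_1/N$ of the horizon. Here $\mathrm{MSE}(i,\pi^{\rm uni})=K\sigma_i^2/N$. This step is the main obstacle. The pilot terms computed literally with $p_{t,i}=1/K$ give $\frac{N_1}{N^2}\bigl(K(\mu_i^2+\sigma_i^2)-\mu_i^2\bigr)$, which does not equal $\frac{N_1}{N}\cdot\frac{K\sigma_i^2}{N}$ unless $\mu_i=0$. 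Under a fixed round-robin design, $Z_t$ in the pilot is no longer mean zero individually, only on average. So I would first check precisely how Lemma~\ref{lem:mse_twostage_pcipw} defines the pilot contribution, and adopt the same accounting: treat the first-stage contribution as the proportional pilot MSE by convention, i.e.\ the pilot-only estimator weighted by $N_1/N$. Failing that, I would state the identity with the pilot term written explicitly and note that it agrees with the displayed form after centering.

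Once the pilot term is settled, combining the two blocks yields the decomposition.
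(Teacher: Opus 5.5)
Your handling of the post-pilot rounds is correct and matches the paper. For $t>N_1$ the weight $\hat p_i^{\mathrm{Ney}}$ is $\history_{N_1}$-measurable, the centered scores $Z_t$ are martingale differences, and $\EE[Z_t^2\mid\history_{t-1}]=(\mu_i^2+\sigma_i^2)/p_{t,i}-\mu_i^2$.

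The genuine gap is the pilot block, which you leave unresolved. Treating it as $\frac{N_1}{N}\mathrm{MSE}(i,\pi^{\rm uni})$ ``by convention'' is not an argument. Moreover, your opening step is false for the design of Algorithm~\ref{alg:two_stage_pcipw}. That step claims every $Z_t$, including $t\le N_1$, is a martingale difference, so that $\mathrm{MSE}(i,\pi^{\rm AN})=N^{-2}\sum_{t=1}^N\EE[Z_t^2]$. With the deterministic balanced pilot, however, $Z_t=K\ind\{I_t=i\}X_t-\mu_i$ has mean $(K-1)\mu_i$ or $-\mu_i$. The cross moments $\EE[Z_sZ_t]=\EE[Z_s]\EE[Z_t]$ between distinct pilot rounds therefore do not vanish. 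If instead the pilot were a genuinely randomized $1/K$ draw, your own computation shows the claimed identity fails whenever $\mu_i\neq 0$. So no round-by-round accounting of the pilot can deliver the statement.

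The missing idea is to aggregate the pilot rather than decompose it round by round. Arm $i$ is pulled exactly $n_1=N_1/K$ times in stage one and carries weight $1/p_{t,i}=K$ there, so
\[
\sum_{t=1}^{N_1}\frac{\ind\{I_t=i\}X_t}{p_{t,i}}=K\sum_{t=1}^{N_1}\ind\{I_t=i\}X_t=Kn_1\,\hat\mu_i^{\mathrm{pilot}}=N_1\hat\mu_i^{\mathrm{pilot}}.
\]
Hence, exactly,
\[
\hat\mu_i^{\mathrm{HT}}-\mu_i=\frac{N_1}{N}\bigl(\hat\mu_i^{\mathrm{pilot}}-\mu_i\bigr)+\frac1N\sum_{t=N_1+1}^{N}Z_t .
\]
The first piece has mean zero and second moment $\frac{N_1^2}{N^2}\cdot\frac{\sigma_i^2}{n_1}=\frac{N_1}{N}\cdot\frac{K\sigma_i^2}{N}=\frac{N_1}{N}\mathrm{MSE}(i,\pi^{\rm uni})$. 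The cross term vanishes because $\hat\mu_i^{\mathrm{pilot}}$ is $\history_{t-1}$-measurable for every $t>N_1$, while $\EE[Z_t\mid\history_{t-1}]=0$. Your martingale computation then handles the remaining sum. This is precisely the paper's argument, and it makes the pilot term an exact identity rather than a convention.
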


\begin{proof}
Define the HT score
\[
Z_{t,i}:=\frac{\ind\{I_t=i\}X_t}{p_{t,i}}.
\]
Then
\[
\hat\mu_i^{\mathrm{HT}}=\frac{1}{N}\sum_{t=1}^N Z_{t,i} = \frac{1}{N}\left(\sum_{t=1}^{N_1}Z_{t,i}+\sum_{t=N_1+1}^{N}Z_{t,i} \right).
\]

Under the balanced pilot design, the first-stage contribution satisfies
\[
\sum_{t=1}^{N_1}Z_{t,i} = K\sum_{t=1}^{N_1}\ind\{I_t=i\}X_t = N_1\hat\mu_i^{\mathrm{pilot}},
\]
where $ \hat \mu_i^{\mathrm{pilot}} := \frac{1}{n_1}\sum_{t=1}^{N_1}\ind\{I_t=i\}X_t.$
Hence
\[
\hat\mu_i^{\mathrm{HT}}-\mu_i = \frac{1}{N}\left(
N_1(\hat\mu_i^{\mathrm{pilot}}-\mu_i) + \sum_{t=N_1+1}^{N}(Z_{t,i}-\mu_i)
\right).
\]

Squaring and taking expectations gives
\begin{align*}
\mathrm{MSE}^{\mathrm{HT}}(i)
&=
\EE\!\left[\left(\frac{N_1}{N}(\hat\mu_i^{\mathrm{pilot}}-\mu_i)\right)^2\right]
+
\EE \left[\left(\frac{1}{N}\sum_{t=N_1+1}^{N}(Z_{t,i}-\mu_i)\right)^2\right] \\
&\qquad
+
\frac{N_1}{N^2}
\EE \left[
(\hat\mu_i^{\mathrm{pilot}}-\mu_i)\sum_{t=N_1+1}^{N}(Z_{t,i}-\mu_i)
\right].
\end{align*}

The cross-term vanishes by iterated expectation, since for each $t>N_1$,
\[
\EE[Z_{t,i}\mid \mathcal F_{t-1}] = \frac{\EE[\ind\{I_t=i\}X_t\mid \mathcal F_{t-1}]}{p_{t,i}} = \mu_i.
\]
Therefore,
\[
\EE \left[ (\hat\mu_i^{\mathrm{pilot}}-\mu_i)\sum_{t=N_1+1}^{N}(Z_{t,i}-\mu_i) \right]=0.
\]

For the pilot term,
\[
\EE \left[\left(\frac{N_1}{N}(\hat\mu_i^{\mathrm{pilot}}-\mu_i)\right)^2\right]
=
\frac{N_1^2}{N^2}\Var(\hat\mu_i^{\mathrm{pilot}})
=
\frac{N_1}{N}\,\mathrm{MSE}_{\mathrm{unif},N}(i).
\]

For the adaptive term, the martingale-difference cross-products vanish, so
\[
\EE\!\left[\left(\frac{1}{N}\sum_{t=N_1+1}^{N}(W_{t,i}-\mu_i)\right)^2\right]
=
\frac{1}{N^2}\sum_{t=N_1+1}^{N}\EE[\Var(W_{t,i}\mid \mathcal F_{t-1})].
\]
Now
\[
\Var(W_{t,i}\mid \mathcal F_{t-1}) = \EE[Z_{t,i}^2\mid \mathcal F_{t-1}] - \mu_i^2.
\]
Since
\[
\EE[Z_{t,i}^2\mid \mathcal F_{t-1}] = \frac{1}{p_{t,i}^2}\EE[\ind\{I_t=i\}X_t^2\mid \mathcal F_{t-1}] = \frac{\mu_i^2+\sigma_i^2}{p_{t,i}},
\]
we obtain
\[
\Var(Z_{t,i}\mid \mathcal F_{t-1}) = \frac{\mu_i^2+\sigma_i^2}{p_{t,i}}-\mu_i^2 = \frac{\sigma_i^2}{p_{t,i}}+\mu_i^2\left(\frac{1}{p_{t,i}}-1\right).
\]
Substituting this into the adaptive term yields the claimed decomposition.
\end{proof}

\begin{lemma}[Multiplicative Chernoff lower-tail bound]
\label{lem:chernoff-lower-tail}
Let $Z_1,\dots,Z_n$ be independent Bernoulli random variables, and define
\[
S_n:=\sum_{j=1}^n Z_j, \quad \mu:=\EE[S_n].
\]
Then for every $\delta\in(0,1)$,
\[
\PP(S_n\le (1-\delta)\mu)\le \exp\!\left(-\frac{\delta^2\mu}{2}\right).
\]
\end{lemma}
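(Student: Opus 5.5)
The plan is the standard exponential-moment (Chernoff) argument applied to the lower tail, followed by an elementary calculus inequality that converts the exact Chernoff exponent into the stated quadratic form. If $\mu=0$ the right-hand side equals $1$ and there is nothing to prove, so assume $\mu>0$ and write $p_j:=\PP(Z_j=1)$, so that $\mu=\sum_{j=1}^n p_j$.

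\textbf{Step 1 (Markov on the negative exponential).} Fix $s>0$. Since $x\mapsto e^{-sx}$ is decreasing, Markov's inequality gives
\[
\PP\bigl(S_n\le (1-\delta)\mu\bigr)=\PP\bigl(e^{-sS_n}\ge e^{-s(1-\delta)\mu}\bigr)\le e^{s(1-\delta)\mu}\,\EE\bigl[e^{-sS_n}\bigr].
\]

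\textbf{Step 2 (factorize and bound the MGF).} By independence, $\EE[e^{-sS_n}]=\prod_{j=1}^n \EE[e^{-sZ_j}]$. Each factor equals $1-p_j(1-e^{-s})$. Using $1+y\le e^{y}$, each factor is at most $\exp\bigl(-p_j(1-e^{-s})\bigr)$. Multiplying over $j$ gives $\EE[e^{-sS_n}]\le \exp\bigl(-\mu(1-e^{-s})\bigr)$.

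\textbf{Step 3 (optimize $s$).} I will choose $s=-\log(1-\delta)>0$, which is valid because $\delta\in(0,1)$. Then $1-e^{-s}=\delta$, and the bound becomes
\[
\PP\bigl(S_n\le (1-\delta)\mu\bigr)\le \exp\Bigl(-\mu\bigl[\delta+(1-\delta)\log(1-\delta)\bigr]\Bigr).
\]

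\textbf{Step 4 (elementary inequality).} It remains to show
\[
\delta+(1-\delta)\log(1-\delta)\ge \delta^2/2 \quad \text{for } \delta\in(0,1).
\]
Set $f(\delta):=\delta+(1-\delta)\log(1-\delta)-\delta^2/2$. Then $f(0)=0$ and $f'(\delta)=-\log(1-\delta)-\delta$. The derivative is nonnegative because $-\log(1-\delta)=\sum_{k\ge1}\delta^k/k\ge\delta$. Hence $f\ge 0$ on $[0,1)$, which yields $\exp(-\delta^2\mu/2)$.

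The only step requiring care is this final inequality, together with checking that the chosen $s$ is positive and finite for $\delta<1$. The monotonicity argument above disposes of it directly, so I expect no real obstacle.
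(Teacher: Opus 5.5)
Your proof is correct and complete. The paper states this lemma without proof, as a standard tool, and your exponential-moment argument with $s=-\log(1-\delta)$ followed by the monotonicity check of $f$ is the standard textbook derivation it implicitly relies on. It also correctly handles non-identical success probabilities $p_j$, which is the case the SARP proof actually needs, since there $\PP(Z_{t,i}=1)=x_t p_i^0$ varies with $t$.
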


\begin{lemma}[Sub-Gaussian Hoeffding inequality, Theorem 2.7.3 in \cite{vershynin2026hdp}]\label{lem:subg-hoeff}
For an arm $i\in[K]$, let
\[
\hat\mu_{i,n}:=\frac1n\sum_{s=1}^n X_{i,s}, \quad \mu_i:=\EE[X_{i,1}].
\]
Assume that $X_{i,1},\dots,X_{i,n}$ are independent and that each centered variable $X_{i,s}-\mu_i$ is $\bar\nu$-sub-Gaussian. Then there exists a universal constant $c_\mu>0$ such that for every $n\ge 1$ and every $\varepsilon>0$,
\[
\PP \left(\left|\hat\mu_{i,n}-\mu_i\right|>\varepsilon\right) \le 2\exp \left(-c_\mu n \varepsilon^2/\bar\nu^2\right).
\]
\end{lemma}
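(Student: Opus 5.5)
The plan is the standard Chernoff--Cramér argument applied to the centered sum $S_n := \sum_{s=1}^n (X_{i,s}-\mu_i) = n(\hat\mu_{i,n}-\mu_i)$. We read ``$\bar\nu$-sub-Gaussian'' in the sense of \cite{vershynin2026hdp}, i.e.\ $\|X_{i,s}-\mu_i\|_{\psi_2}\le \bar\nu$.

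First, convert this norm bound into a moment-generating-function bound. By the equivalence of sub-Gaussian characterizations, valid for centered variables, there is a universal constant $C_0$ such that
\[
\EE\bigl[\exp\bigl(\theta (X_{i,s}-\mu_i)\bigr)\bigr]\le \exp\bigl(C_0\theta^2\bar\nu^2\bigr) \quad \text{for all } \theta\in\mathbb R .
\]
This is the only step that genuinely uses centering, and it carries the main technical content. If I had to reprove it, I would expand the exponential in a Taylor series. The linear term vanishes because $\EE[X_{i,s}-\mu_i]=0$. The higher terms are controlled by the moment bounds $\EE|Y|^p\le (C\bar\nu\sqrt p)^p$ that follow from the $\psi_2$ bound, which gives the $\exp(C\theta^2\bar\nu^2)$ form for $|\theta|\bar\nu\lesssim 1$. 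For large $|\theta|$, I would instead use $\theta Y\le \theta^2\bar\nu^2/2 + Y^2/(2\bar\nu^2)$ together with $\EE\exp(Y^2/\bar\nu^2)\le 2$. I expect this MGF step to be the main (though classical) obstacle. Everything after it is bookkeeping.

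Second, independence of $X_{i,1},\dots,X_{i,n}$ lets the MGF factorize:
\[
\EE\bigl[e^{\theta S_n}\bigr]=\prod_{s=1}^n \EE\bigl[e^{\theta (X_{i,s}-\mu_i)}\bigr]\le \exp\bigl(C_0 n\theta^2\bar\nu^2\bigr).
\]
Markov's inequality then gives, for every $\theta>0$,
\[
\PP(S_n> n\varepsilon)\le \exp\bigl(-\theta n\varepsilon + C_0 n\theta^2\bar\nu^2\bigr).
\]
Choosing $\theta=\varepsilon/(2C_0\bar\nu^2)$ yields $\PP(\hat\mu_{i,n}-\mu_i>\varepsilon)\le \exp\bigl(-n\varepsilon^2/(4C_0\bar\nu^2)\bigr)$.

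Third, apply the same argument to $-S_n$. This is legitimate because $-(X_{i,s}-\mu_i)$ has the same $\psi_2$ norm, so the MGF bound is symmetric in $\theta$. It controls the lower tail with the same exponent. A union bound over the two tails gives the factor $2$, and we set $c_\mu:=1/(4C_0)$. This constant is universal: it depends on neither $n$, $\varepsilon$, nor the distribution, which is exactly the claimed inequality.
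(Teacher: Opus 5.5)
Your argument is correct and is the standard proof of the result the paper simply cites without proof (Vershynin's general Hoeffding inequality): a centered MGF bound $\EE e^{\theta Y}\le e^{C_0\theta^2\bar\nu^2}$, factorization by independence, and a two-sided Chernoff bound. Vershynin packages the middle step as $\|\sum_s Y_s\|_{\psi_2}^2\lesssim\sum_s\|Y_s\|_{\psi_2}^2$ before reading off the tail, and if ``$\bar\nu$-sub-Gaussian'' is instead read as the MGF (variance-proxy) condition, your first step is immediate and the rest goes through unchanged with $c_\mu=1/2$.
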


\begin{lemma}[Bernstein concentration for centered second moments]\label{lem:bernstein-second-moment}
For an arm $i\in[K]$, define
\[
\sigma_i^2:=\Var(X_{i,1}), \quad \hat v_{i,n}:=\frac1n\sum_{s=1}^n (X_{i,s}-\mu_i)^2.
\]
Assume that $X_{i,1},\dots,X_{i,n}$ are independent and that each centered variable $X_{i,s}-\mu_i$ is $\bar\nu$-sub-Gaussian. Then there exists a universal constant $c_v>0$ such that for every $n\ge 1$ and every $\varepsilon>0$,
\[
\Pr\!\left(\left|\hat v_{i,n}-\sigma_i^2\right|>\varepsilon\right)
\le
2\exp\!\left(
-c_v\, n \min\left\{\frac{\varepsilon^2}{\bar\nu^4},\frac{\varepsilon}{\bar\nu^2}\right\}
\right).
\]
\end{lemma}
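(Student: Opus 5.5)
The plan is to reduce the statement to the standard Bernstein inequality for sums of independent, centered sub-exponential random variables, in the same reference frame as Lemma~\ref{lem:subg-hoeff} (\cite{vershynin2026hdp}). Write $Z_s := X_{i,s}-\mu_i$ and $Y_s := Z_s^2 - \sigma_i^2$ for $s=1,\dots,n$. The $Y_s$ are independent, since they are functions of the independent $X_{i,s}$. They are mean zero, since $\EE[Z_s^2]=\Var(X_{i,1})=\sigma_i^2$. Moreover
\[
\hat v_{i,n}-\sigma_i^2=\frac1n\sum_{s=1}^n Y_s ,
\]
so the event in the lemma is $\bigl\{\,\bigl|\sum_{s=1}^n Y_s\bigr|>n\varepsilon\,\bigr\}$.

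\textbf{Step 1: sub-exponential norm of $Z_s^2$.} I will use the identity $\|Z^2\|_{\psi_1}=\|Z\|_{\psi_2}^2$, which follows directly from the Orlicz-norm definitions. Interpreting ``$\bar\nu$-sub-Gaussian'' as $\|Z_s\|_{\psi_2}\le C_0\bar\nu$, with $C_0$ absolute, this gives $\|Z_s^2\|_{\psi_1}\le C_0^2\bar\nu^2$. The same interpretation holds up to an absolute constant if $\bar\nu$ is instead read as a variance proxy or MGF parameter, because these notions are equivalent up to universal constants.

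\textbf{Step 2: centering.} The centering lemma $\|W-\EE W\|_{\psi_1}\le C_1\|W\|_{\psi_1}$ then yields
\[
\|Y_s\|_{\psi_1}\le C_1C_0^2\bar\nu^2=:K_0 .
\]
Equivalently, one can use $\sigma_i^2=\EE Z_s^2\le C\bar\nu^2$ together with the triangle inequality.

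\textbf{Step 3: Bernstein's inequality.} For independent mean-zero $Y_s$ with $\max_s\|Y_s\|_{\psi_1}\le K_0$, Bernstein's inequality gives, for all $t\ge 0$,
\[
\PP\Bigl(\Bigl|\sum_{s=1}^n Y_s\Bigr|\ge t\Bigr)\le 2\exp\Bigl(-c\min\Bigl\{\frac{t^2}{nK_0^2},\frac{t}{K_0}\Bigr\}\Bigr).
\]
Setting $t=n\varepsilon$ gives the exponent $-c\,n\min\{\varepsilon^2/K_0^2,\ \varepsilon/K_0\}$.

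\textbf{Step 4: absorbing constants.} Substitute $K_0=C'\bar\nu^2$ with $C'\ge 1$ (enlarge it if necessary). Then
\[
\min\Bigl\{\frac{\varepsilon^2}{C'^2\bar\nu^4},\frac{\varepsilon}{C'\bar\nu^2}\Bigr\}\ \ge\ \frac{1}{C'^2}\min\Bigl\{\frac{\varepsilon^2}{\bar\nu^4},\frac{\varepsilon}{\bar\nu^2}\Bigr\}.
\]
This produces the claimed bound with $c_v:=c/C'^2$, which is universal.

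I do not expect any genuinely hard step. The only point needing care is Step 1: the paper does not fix the convention for ``$\bar\nu$-sub-Gaussian'' (MGF variance proxy versus $\psi_2$-norm). I would state explicitly that all these conventions agree up to absolute constants, and that this is why the conclusion involves an unspecified universal $c_v$. The fact that the lemma concerns the oracle-centered second moment, with the true $\mu_i$ rather than the sample mean, is exactly what makes the $Y_s$ independent and keeps this direct reduction valid. The passage to the actual sample variance $\hat\sigma^2_{i,t}$ would be handled later by combining this lemma with Lemma~\ref{lem:subg-hoeff}, via $\hat\sigma_{i,n}^2=\hat v_{i,n}-(\hat\mu_{i,n}-\mu_i)^2$.
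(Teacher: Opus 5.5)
Your proposal is correct and follows essentially the same route as the paper: center $(X_{i,s}-\mu_i)^2-\sigma_i^2$, use that the square of a $\bar\nu$-sub-Gaussian variable is sub-exponential with $\psi_1$-norm of order $\bar\nu^2$ (the paper cites Lemma 2.8.5 of Vershynin), and apply Bernstein's inequality for independent mean-zero sub-exponential sums (Corollary 2.9.2 with weights $1/n$). Your Step~4 constant absorption and your remark on the sub-Gaussian convention just make explicit details the paper leaves implicit.
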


\begin{proof}
Let
\[
Z_s := (X_{i,s}-\mu_i)^2-\sigma_i^2.
\]
Since $X_{i,s}-\mu_i$ is $\bar\nu$-sub-Gaussian, Lemma 2.8.5 in \cite{vershynin2026hdp} implies that $(X_{i,s}-\mu_i)^2$ is sub-exponential with $\psi_1$-norm of order $\bar\nu^2$, and hence so is $Z_s$. Applying Corollary 2.9.2 in \cite{vershynin2026hdp} to the independent mean-zero sub-exponential variables $Z_1,\dots,Z_n$ with coefficients $a_s\equiv 1/n$ gives our result.
\end{proof}

\begin{corollary}[Empirical standard deviation concentration]\label{lem:sd-concentration}

For an arm $i\in[K]$, define
\[
\hat\sigma_{i,n}^2:=\frac1n \sum_{s=1}^n (X_{i,s}-\hat\mu_{i,n})^2, \quad \sigma_i^2:=\Var(X_{i,1}).
\]
Assume that $X_{i,1},\dots,X_{i,n}$ are independent, that each centered variable $X_{i,s}-\mu_i$ is $\bar\nu$-sub-Gaussian, and that $\sigma_i>0$. Then there exists a universal constant $c_\sigma>0$ such that for every $n\ge 1$ and every $\varepsilon\in(0,1/2]$,
\[
\PP \left(\left|\frac{\hat\sigma_{i,n}}{\sigma_i}-1\right|>\varepsilon\right) \le 4\exp \left(-c_\sigma  n  \sigma_i^4\varepsilon^2/\bar\nu^4\right).
\]
\end{corollary}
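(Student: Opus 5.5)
The plan is to reduce $\hat\sigma_{i,n}$ to the centered second moment $\hat v_{i,n}$ of Lemma~\ref{lem:bernstein-second-moment}, control the correction term with Lemma~\ref{lem:subg-hoeff}, and then pass from variance to standard deviation. We start from the standard identity
\[
\hat\sigma_{i,n}^2 = \hat v_{i,n} - (\hat\mu_{i,n}-\mu_i)^2 .
\]
This gives
\[
\bigl|\hat\sigma_{i,n}^2-\sigma_i^2\bigr| \le |\hat v_{i,n}-\sigma_i^2| + (\hat\mu_{i,n}-\mu_i)^2 .
\]
Next, since $\hat\sigma_{i,n},\sigma_i\ge 0$,
\[
\left|\frac{\hat\sigma_{i,n}}{\sigma_i}-1\right| = \frac{|\hat\sigma_{i,n}^2-\sigma_i^2|}{\sigma_i(\hat\sigma_{i,n}+\sigma_i)} \le \frac{|\hat\sigma_{i,n}^2-\sigma_i^2|}{\sigma_i^2}.
\]
Hence the event $\{|\hat\sigma_{i,n}/\sigma_i-1|>\varepsilon\}$ is contained in the union of two events:
\[
\{|\hat v_{i,n}-\sigma_i^2|>\varepsilon\sigma_i^2/2\} \quad\text{and}\quad \{(\hat\mu_{i,n}-\mu_i)^2>\varepsilon\sigma_i^2/2\}.
\]

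For the first event we apply Lemma~\ref{lem:bernstein-second-moment} with deviation $\varepsilon\sigma_i^2/2$. This gives the bound $2\exp(-c_v n\min\{\varepsilon^2\sigma_i^4/(4\bar\nu^4),\,\varepsilon\sigma_i^2/(2\bar\nu^2)\})$. The main point is to check that the quadratic branch of the minimum is always the active one. Since $\sigma_i^2=\Var(X_{i,1})\lesssim\bar\nu^2$ for a $\bar\nu$-sub-Gaussian variable (up to a universal constant, which I would absorb into the definition of $\bar\nu$ or into $c_\sigma$), and since $\varepsilon\le 1/2$, we have $\varepsilon\sigma_i^2/\bar\nu^2\lesssim 1$. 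Therefore $\varepsilon^2\sigma_i^4/\bar\nu^4 \lesssim \varepsilon\sigma_i^2/\bar\nu^2$, so the minimum is at least a constant times $\varepsilon^2\sigma_i^4/\bar\nu^4$.

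For the second event we apply Lemma~\ref{lem:subg-hoeff} with deviation $\sqrt{\varepsilon\sigma_i^2/2}$. This gives the bound $2\exp(-c_\mu n\varepsilon\sigma_i^2/(2\bar\nu^2))$. By the same comparison, $\varepsilon\sigma_i^2/\bar\nu^2 \ge \varepsilon^2\sigma_i^4/\bar\nu^4$ up to constants, so this term is also at most $2\exp(-c' n\sigma_i^4\varepsilon^2/\bar\nu^4)$.

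Adding the two bounds gives $4\exp(-c_\sigma n\sigma_i^4\varepsilon^2/\bar\nu^4)$, where $c_\sigma$ is the minimum of the constants from the two cases. The only delicate step is the comparison $\sigma_i^2\le C\bar\nu^2$, which is needed to collapse both tails to the single quadratic rate. I would handle it by invoking the standard moment bound for sub-Gaussian variables, $\EE[(X-\mu)^2]\le C\|X-\mu\|_{\psi_2}^2$, and absorbing the resulting universal constant $C$ into $c_\sigma$.
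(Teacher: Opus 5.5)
Your proposal is correct and follows the paper's proof essentially step for step. It uses the same identity $\hat\sigma_{i,n}^2=\hat v_{i,n}-(\hat\mu_{i,n}-\mu_i)^2$, the same split at level $\varepsilon\sigma_i^2/2$, the same two concentration lemmas, and the same comparison (via $\sigma_i^2\lesssim\bar\nu^2$ and $\varepsilon\le 1/2$) to collapse both tails to the quadratic rate $n\sigma_i^4\varepsilon^2/\bar\nu^4$. The only differences are cosmetic: you pass from standard deviation to variance through the factorization $|\hat\sigma_{i,n}-\sigma_i|=|\hat\sigma_{i,n}^2-\sigma_i^2|/(\hat\sigma_{i,n}+\sigma_i)$ instead of the paper's case analysis on $(1\pm\varepsilon)^2$, and you allow a universal constant in $\sigma_i^2\lesssim\bar\nu^2$, whereas the paper asserts $\sigma_i^2\le\bar\nu^2$ directly (which holds with constant $1$ under the usual conventions).
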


\begin{proof}
Define
\[
\hat v_{i,n}:=\frac1n\sum_{s=1}^n (X_{i,s}-\mu_i)^2.
\]
Using
\[
\hat\sigma_{i,n}^2 = \frac1n\sum_{s=1}^n (X_{i,s}-\hat\mu_{i,n})^2 = \frac1n\sum_{s=1}^n (X_{i,s}-\mu_i)^2-(\hat\mu_{i,n}-\mu_i)^2 = \hat v_{i,n}-(\hat\mu_{i,n}-\mu_i)^2,
\]
we have $|\hat\sigma_{i,n}^2-\sigma_i^2| \le |\hat v_{i,n}-\sigma_i^2|+(\hat\mu_{i,n}-\mu_i)^2.$ Now, if $\left|\frac{\hat\sigma_{i,n}}{\sigma_i}-1\right|>\varepsilon,$ then either $\hat\sigma_{i,n}>(1+\varepsilon)\sigma_i$ or $\hat\sigma_{i,n}<(1-\varepsilon)\sigma_i$. In either case,
\[
\left|\frac{\hat\sigma_{i,n}^2}{\sigma_i^2}-1\right| > \varepsilon,
\]
because $ (1+\varepsilon)^2-1=2\varepsilon+\varepsilon^2>\varepsilon$ and $1-(1-\varepsilon)^2=2\varepsilon-\varepsilon^2\ge \varepsilon$ for every $\varepsilon\in(0,1]$. Therefore
\[
\left\{\left|\frac{\hat\sigma_{i,n}}{\sigma_i}-1\right|>\varepsilon\right\} \subseteq \left\{|\hat\sigma_{i,n}^2-\sigma_i^2|>\sigma_i^2\varepsilon\right\}.
\]

Combining this inclusion with the previous decomposition yields
\[
\left\{\left|\frac{\hat\sigma_{i,n}}{\sigma_i}-1\right|>\varepsilon\right\} \subseteq \left\{|\hat v_{i,n}-\sigma_i^2| > \frac12 \sigma_i^2\varepsilon\right\} \cup \left\{(\hat\mu_{i,n}-\mu_i)^2>\frac12 \sigma_i^2\varepsilon\right\}.
\]
Hence, by a union bound,
\begin{align*}
\PP \left(\left|\frac{\hat\sigma_{i,n}}{\sigma_i}-1\right|>\varepsilon\right) &\le \PP \left(|\hat v_{i,n}-\sigma_i^2|>\frac12 \sigma_i^2\varepsilon\right) + \PP \left(|\hat\mu_{i,n}-\mu_i|>\sigma_i\sqrt{\frac{\varepsilon}{2}}\right).
\end{align*}

Applying Lemma~\ref{lem:bernstein-second-moment} and Lemma~\ref{lem:subg-hoeff}, we obtain
\begin{align*}
\PP \left(\left|\frac{\hat\sigma_{i,n}}{\sigma_i}-1\right|>\varepsilon\right) &\le 2\exp \left( -c_v n \min\left\{ \frac{\sigma_i^4\varepsilon^2}{4\bar\nu^4}, \frac{\sigma_i^2\varepsilon}{2\bar\nu^2} \right\} \right) + 2\exp \left( -c_\mu n \frac{\sigma_i^2\varepsilon}{2\bar\nu^2} \right).
\end{align*}

It remains to compare the linear and quadratic terms. Since $X_{i,s}-\mu_i$ is $\bar\nu$-sub-Gaussian, its variance satisfies $\sigma_i^2=\Var(X_{i,s})\le \bar\nu^2.$ Also, $\varepsilon\in(0,1/2]$, therefore
\[
\frac{\sigma_i^4\varepsilon^2}{4\bar\nu^4} \le \frac{\sigma_i^2\varepsilon}{4\bar\nu^2} \le \frac{\sigma_i^2 \varepsilon}{2\bar\nu^2}.
\]
Hence the quadratic term is the smaller one, so $\min\left\{ \frac{\sigma_i^4\varepsilon^2}{4\bar\nu^4}, \frac{\sigma_i^2\varepsilon}{2\bar\nu^2} \right\} = \frac{\sigma_i^4\varepsilon^2}{4\bar\nu^4},$ and similarly $\frac{\sigma_i^2\varepsilon}{2\bar\nu^2} \ge \frac{\sigma_i^4\varepsilon^2}{2\bar\nu^4} \ge \frac{\sigma_i^4\varepsilon^2}{4\bar\nu^4}.$ Thus both exponentials are bounded by the same quadratic scale, giving
\[
\PP \left(\left|\frac{\hat\sigma_{i,n}}{\sigma_i}-1\right|>\varepsilon\right) \le 4\exp \left(-c_\sigma  n \frac{\sigma_i^4\varepsilon^2}{\bar\nu^4}\right),
\]
where
\[
c_\sigma:=\min\left\{\frac{c_v}{4},\frac{c_\mu}{4}\right\}.
\]
This proves the claim.
\end{proof}

\begin{lemma}[Freedman inequality, Theorem 1.1 from \cite{tropp2011freedman}]\label{lem:freed_ineq}
Let $(Y_k)_{k\ge 0}$ be a real-valued martingale adapted to $(\mathcal F_k)$, with differences $X_k:=Y_k-Y_{k-1}$. Assume that, almost surely, $X_k\le R$ for all  $k\ge 1.$ Define the predictable quadratic variation $ W_k:=\sum_{j=1}^k \EE[X_j^2\mid\mathcal F_{j-1}].$ Then, for all $t\ge 0$ and $\sigma^2>0$,
\[
\PP \left(\exists\, k\ge 0:\ Y_k\ge t \ \text{and}\ W_k\le \sigma^2\right) \le \exp \left(-\frac{t^2/2}{\sigma^2+Rt/3}\right).
\]
\end{lemma}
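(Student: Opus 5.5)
Assume $Y_0=0$, as is implicit in the statement and in the form of the bound. The plan is the classical exponential-supermartingale argument combined with a stopping time, which gives a Bennett-type bound; the stated Bernstein form then follows from an elementary inequality.

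\textbf{Step 1 (one-step moment generating function bound).} Fix $\theta>0$ and set
\[
g(\theta):=\frac{e^{\theta R}-\theta R-1}{R^2}.
\]
The function $\phi(x):=(e^{x}-x-1)/x^2$, extended by $\phi(0)=1/2$, is nondecreasing on $\mathbb R$. Since $\theta X_k\le \theta R$, this gives
\[
e^{\theta X_k}\le 1+\theta X_k+\theta^2 X_k^2\,\phi(\theta R).
\]
Here $\theta^2\phi(\theta R)=g(\theta)$. Take conditional expectations and use $\EE[X_k\mid\mathcal F_{k-1}]=0$, which holds because $(Y_k)$ is a martingale. Together with $1+u\le e^u$, this yields
\[
\EE\!\left[e^{\theta X_k}\mid\mathcal F_{k-1}\right]\le \exp\!\left(g(\theta)\,\EE[X_k^2\mid\mathcal F_{k-1}]\right).
\]

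\textbf{Step 2 (supermartingale and stopping).} Define $S_k:=\exp\bigl(\theta Y_k-g(\theta)W_k\bigr)$. Because $W_k$ is $\mathcal F_{k-1}$-measurable (it is predictable), Step 1 gives $\EE[S_k\mid\mathcal F_{k-1}]\le S_{k-1}$. Hence $(S_k)$ is a nonnegative supermartingale with $S_0=1$.

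Let $\tau:=\inf\{k\ge 0: Y_k\ge t,\ W_k\le\sigma^2\}$. The event in the lemma is exactly $\{\tau<\infty\}$. On this event,
\[
S_\tau\ge \exp\bigl(\theta t-g(\theta)\sigma^2\bigr).
\]
By optional stopping for the bounded stopping time $\tau\wedge n$, we have $\EE[S_{\tau\wedge n}]\le 1$. Applying Markov's inequality on $\{\tau\le n\}$ and letting $n\to\infty$ (monotone convergence of the events $\{\tau\le n\}$) gives
\[
\PP(\tau<\infty)\le \exp\bigl(-\theta t+g(\theta)\sigma^2\bigr).
\]
Equivalently, one may invoke Ville's maximal inequality directly. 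This passage from finite to infinite horizon is the only delicate point, and the monotone limit handles it.

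\textbf{Step 3 (optimization in $\theta$).} Choose $\theta=R^{-1}\log(1+Rt/\sigma^2)$. The exponent then equals
\[
-\frac{\sigma^2}{R^2}\,h\!\left(\frac{Rt}{\sigma^2}\right),\qquad h(u):=(1+u)\log(1+u)-u.
\]
Finally, use the elementary bound $h(u)\ge \dfrac{u^2}{2(1+u/3)}$ for $u\ge0$. This can be checked by comparing derivatives: both sides vanish at $u=0$, and one then compares second derivatives. Substituting gives
\[
\frac{\sigma^2}{R^2}\,h\!\left(\frac{Rt}{\sigma^2}\right)\ \ge\ \frac{t^2/2}{\sigma^2+Rt/3},
\]
which is the claimed bound.

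\textbf{Expected obstacle.} I expect the main effort to lie in two verifications. The first is the monotonicity of $\phi$, which is what allows only a one-sided bound $X_k\le R$ to suffice. The second is the scalar inequality for $h$. The probabilistic structure itself is routine.
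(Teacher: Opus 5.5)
Your proof is correct. The paper does not prove this lemma; it imports it as the $d=1$ case of Tropp's matrix Freedman inequality \cite{tropp2011freedman}.

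Your argument is the classical scalar proof, and it is essentially Tropp's proof specialized to one dimension. It has the same one-step bound with $g(\theta)=(e^{\theta R}-\theta R-1)/R^2$, the same nonnegative supermartingale $\exp(\theta Y_k-g(\theta)W_k)$ stopped at the first time the event occurs, and the same choice $\theta=R^{-1}\log(1+Rt/\sigma^2)$ followed by the Bennett-to-Bernstein step. Working in the scalar setting removes the only genuinely matrix-analytic ingredient of Tropp's proof: Lieb's concavity theorem, which he needs to show that $\mathrm{tr}\exp(\theta Y_k-g(\theta)W_k)$ is a supermartingale. Your version is therefore fully self-contained. The citation gives the paper brevity and a matrix extension it never uses.

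Two small remarks on hypotheses. First, your assumption $Y_0=0$ is genuinely required: as printed, the statement fails at $k=0$ if, say, $Y_0\equiv t$. Second, your choice of $\theta$ needs $R>0$. The case $R=0$ is trivial, because $X_k\le 0$ together with $\EE[X_k\mid\mathcal F_{k-1}]=0$ forces $X_k=0$ almost surely.

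The two verifications you flagged are routine:
\begin{itemize}
\item $\phi(x)=\int_0^1(1-s)e^{sx}\,ds$ is visibly nondecreasing in $x$.
\item For $f(u):=h(u)-\frac{u^2}{2(1+u/3)}$ one has $f(0)=f'(0)=0$ and $f''(u)=\frac{1}{1+u}-\frac{27}{(3+u)^3}\ge 0$, because $(3+u)^3\ge 27(1+u)$ for $u\ge 0$. You need $f'(0)=0$, not just $f(0)=0$, before the second-derivative comparison closes the argument.
\end{itemize}
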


\begin{theorem}[Doob's $L^2$ maximal inequality, Theorem 4.4.4 from \cite{durrett2019probability} with $p=2$]
\label{thm:doob-l2}
Let $(M_n,\mathcal F_n)_{n\ge 0}$ be a square-integrable martingale. Then for every integer $m\ge 1$,
\[
\EE \left[\max_{0\le n\le m} |M_n|^2\right] \le 4 \EE[|M_m|^2].
\]
\end{theorem}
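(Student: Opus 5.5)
The plan is the classical route: prove a weak-type maximal bound for the nonnegative submartingale $|M_n|$, then integrate it over levels and close with Cauchy--Schwarz. Write $M^\ast_m := \max_{0\le n\le m}|M_n|$. Since $M_n$ is a square-integrable martingale and $x\mapsto|x|$ is convex, conditional Jensen gives $\EE[|M_{n+1}|\mid\mathcal F_n]\ge|M_n|$. So $(|M_n|)_{n\le m}$ is a nonnegative submartingale with finite second moments.

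\emph{Step 1 (weak-type inequality).} Fix $\lambda>0$ and define the bounded stopping time $\tau:=\min\{n\le m: |M_n|\ge\lambda\}$, with $\tau:=m$ if no such $n$ exists. Let $A:=\{M^\ast_m\ge\lambda\}$, and decompose $A$ into the disjoint events $A_k:=\{\tau=k,\ |M_k|\ge\lambda\}\in\mathcal F_k$. On $A_k$ we have $\lambda\le|M_k|$. The submartingale property gives $\EE[|M_k|\ind_{A_k}]\le\EE[|M_m|\ind_{A_k}]$. Summing over $k$ yields
\[
\lambda\,\PP(M^\ast_m\ge\lambda)\le\EE\bigl[|M_m|\,\ind\{M^\ast_m\ge\lambda\}\bigr].
\]

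\emph{Step 2 (layer-cake integration).} Use $\EE[(M^\ast_m)^2]=\int_0^\infty 2\lambda\,\PP(M^\ast_m\ge\lambda)\,d\lambda$ and insert the bound from Step 1. Exchanging expectation and integral by Tonelli (all integrands are nonnegative), the right-hand side becomes
\[
\int_0^\infty 2\,\EE\bigl[|M_m|\ind\{M^\ast_m\ge\lambda\}\bigr]d\lambda=2\,\EE\bigl[|M_m|\,M^\ast_m\bigr].
\]
By Cauchy--Schwarz, this is at most $2\,\|M_m\|_{L^2}\,\|M^\ast_m\|_{L^2}$.

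\emph{Step 3 (closing).} Note that $M^\ast_m\le\sum_{n\le m}|M_n|$, so $\|M^\ast_m\|_{L^2}<\infty$ by square integrability. The main point needing care is therefore only this finiteness, which justifies dividing both sides by $\|M^\ast_m\|_{L^2}$. If $\|M^\ast_m\|_{L^2}=0$ the claim is trivial; otherwise dividing gives $\|M^\ast_m\|_{L^2}\le 2\|M_m\|_{L^2}$, and squaring gives the stated constant $4$.

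No truncation argument is required, since the horizon $m$ is finite and each $M_n$ is already in $L^2$. The remaining verification is the submartingale step in Step 1, which uses only $A_k\in\mathcal F_k$ and the tower property.
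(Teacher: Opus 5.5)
Your proof is correct and follows the standard argument: the paper gives no proof of its own but cites Durrett's Theorem 4.4.4, which is proved by the same Doob weak-type inequality, layer-cake integration, and H\"older (here Cauchy--Schwarz) step. The only difference is that Durrett truncates, working with $M^\ast_m\wedge K$ and letting $K\to\infty$ to avoid needing $\|M^\ast_m\|_{L^2}<\infty$ in advance; you legitimately replace this with the bound $M^\ast_m\le\sum_{n\le m}|M_n|\in L^2$, which the square-integrability hypothesis provides.
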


\begin{lemma}[MSE bound from a deterministic pull lower bound] \label{lem:dyadic-mse-pullorder}
Fix an arm $i\in[K]$. Under the independent-sequences model, let
\[
S_{i,n}:=\sum_{s=1}^n (Y_{i,s}-\mu_i), \quad \hat\mu_i(N)-\mu_i=\frac{S_{i,T_i(N)}}{T_i(N)}.
\]
Let $\mathcal E$ be any event such that on $\mathcal E$ one has the deterministic lower bound $T_i(N)\ge n_0$ for some integer $n_0\ge 1$. Then there exists a universal constant $C < \infty$ such that
\[
\EE \left[(\hat\mu_i(N)-\mu_i)^2 \ind\{\mathcal E\}\right] \le C \frac{\sigma_i^2}{n_0}.
\]
\end{lemma}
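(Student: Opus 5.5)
The plan is to reduce to a statement about sums of squared noise. Write $S_{i,n}=\sum_{s=1}^n (Y_{i,s}-\mu_i)$, where $(Y_{i,s})_{s\ge1}$ is the i.i.d.\ reward sequence of arm $i$ in the independent-sequences model. Then $\hat\mu_i(N)-\mu_i=S_{i,T_i(N)}/T_i(N)$. The difficulty is that $T_i(N)$ is random and history-dependent, so we cannot simply condition on it.

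The first step uses only the deterministic lower bound on $\mathcal E$. On $\mathcal E$ we have $T_i(N)\ge n_0$, and of course $T_i(N)\le N$. Hence
\[
(\hat\mu_i(N)-\mu_i)^2\ind\{\mathcal E\}\le \sup_{n\ge n_0}\frac{S_{i,n}^2}{n^2}.
\]
The right-hand side no longer involves the policy. It is a functional of the single i.i.d.\ sequence $(Y_{i,s})$, so the statement becomes a maximal inequality for the random walk $S_{i,n}$.

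The second step is a dyadic peeling argument, which is suggested by the lemma's label. Partition $[n_0,\infty)$ into blocks $B_k=[2^k n_0,2^{k+1}n_0)$ for $k\ge0$. On $B_k$,
\[
\frac{S_{i,n}^2}{n^2}\le \frac{\max_{m\le 2^{k+1}n_0} S_{i,m}^2}{(2^k n_0)^2}.
\]
Since $S_{i,m}$ is a square-integrable martingale, Theorem~\ref{thm:doob-l2} gives
\[
\EE\Bigl[\max_{m\le 2^{k+1}n_0}S_{i,m}^2\Bigr]\le 4\cdot 2^{k+1}n_0\,\sigma_i^2 .
\]
Bounding the supremum by the sum over blocks, we get
\[
\EE\Bigl[\sup_{n\ge n_0}\frac{S_{i,n}^2}{n^2}\Bigr]\le \sum_{k\ge0}\frac{8\cdot 2^{k}n_0\sigma_i^2}{4^k n_0^2}=\frac{8\sigma_i^2}{n_0}\sum_{k\ge0}2^{-k}=\frac{16\sigma_i^2}{n_0}.
\]
This yields the claim with $C=16$.

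The main obstacle is justifying the first step rigorously. Replacing $T_i(N)$ by a supremum requires that the $s$-th pull of arm $i$ returns exactly $Y_{i,s}$ regardless of the policy, which is the stack-of-rewards construction. Under that coupling the bound holds pathwise and no optional-stopping argument is needed. Beyond that, the only checks are that the Doob maximal inequality applies with the finite index set $m\le 2^{k+1}n_0$, and that taking $\sup_{n\ge n_0}$ rather than $\max_{n_0\le n\le N}$ keeps the constant universal, independent of $N$ and of the event $\mathcal E$.
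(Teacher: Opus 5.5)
Your proof is correct and takes essentially the same route as the paper: under the independent-sequences coupling, dyadic peeling over the size of $T_i(N)$ followed by Doob's $L^2$ maximal inequality on the reward random walk, then a geometric sum. The only difference is cosmetic: you anchor the blocks at $n_0$ rather than at $2^{\lfloor \log_2 n_0\rfloor}$, which tightens the constant from the paper's $C=32$ to $C=16$.
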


\begin{proof}
We use a dyadic peeling argument. The reason for this is that the pull count $T_i(N)$ is random, so we cannot directly treat $\tfrac{S_{i,T_i(N)}^2}{T_i(N)^2}$ as if the denominator were deterministic. The event $\mathcal E$ only guarantees the lower bound $T_i(N)\ge n_0$, but does not fix the exact realized size of $T_i(N)$. We therefore partition $\mathcal E$ into dyadic slices according to the magnitude of $T_i(N)$, so that on each slice the denominator can be controlled by a deterministic scale.

Let $ m_0 := \lfloor \log_2 n_0\rfloor,$ so that $2^{m_0}\le n_0<2^{m_0+1}.$ For each integer $m\ge m_0$, define the dyadic slice
\[
\mathcal E_m:=\mathcal E\cap \{2^m \le T_i(N)<2^{m+1}\},
\]
then $\ind\{\mathcal E\}=\sum_{m=m_0}^{\infty}\ind\{\mathcal E_m\}.$ This decomposition allows us to control the random denominator $T_i(N)$ on each slice by the deterministic scale $2^m$.

Now
\begin{equation*}
    \EE \left[(\hat\mu_i(N)-\mu_i)^2\ind\{\mathcal E\}\right] = \sum_{m=m_0}^{\infty} \EE \left[ \frac{S_{i,T_i(N)}^2}{T_i(N)^2}\ind\{\mathcal E_m\} \right].    
\end{equation*}

On the event $\mathcal E_m$, we have $T_i(N)\in[2^m,2^{m+1})$, so
\[
\frac{S_{i,T_i(N)}^2}{T_i(N)^2}\ind\{\mathcal E_m\} \le \frac{1}{2^{2m}} \Bigl(\max_{1\le n<2^{m+1}} |S_{i,n}|\Bigr)^2 \ind\{\mathcal E_m\}.
\]
Therefore
\begin{equation}\label{eq:dyadic_decomp}
    \EE \left[(\hat\mu_i(N)-\mu_i)^2\ind\{\mathcal E\}\right] \le \sum_{m=m_0}^{\infty} 2^{-2m} \EE \left[ \max_{1\le n<2^{m+1}} |S_{i,n}|^2 \right].
\end{equation}

Now $(S_{i,n})_{n\ge 0}$ is an $L^2$-martingale with respect to its canonical filtration $\mathcal G_{i,n}:=\sigma(Y_{i,1},\dots,Y_{i,n}),$ and since the increments $Y_{i,s}-\mu_i$ are independent, mean zero, and have variance $\sigma_i^2$, we have $ \EE[|S_{i,2^{m+1}}|^2]=2^{m+1}\sigma_i^2.$ Now by Theorem~\ref{thm:doob-l2}, we have
\[
\EE \left[ \max_{1\le n<2^{m+1}} |S_{i,n}|^2 \right] \le 4\cdot 2^{m+1}\sigma_i^2 = 8\cdot 2^m \sigma_i^2.
\]

Substituting this into~\eqref{eq:dyadic_decomp} gives
\begin{align*}
\EE \left[(\hat\mu_i(N)-\mu_i)^2\ind\{\mathcal E\}\right] &\le \sum_{m=m_0}^{\infty} 2^{-2m}\cdot 8\cdot 2^m \sigma_i^2 \\
&=
8\sigma_i^2 \sum_{m=m_0}^{\infty} 2^{-m}
=
8\sigma_i^2 \cdot 2^{-m_0+1}
=
16\sigma_i^2\,2^{-m_0}.
\end{align*}

Finally, since $m_0=\lfloor \log_2 n_0\rfloor$, we have $2^{-m_0}\le 2 / n_0.$ Hence
\[
\EE \left[(\hat\mu_i(N)-\mu_i)^2\ind\{\mathcal E\}\right] \le 32 \frac{\sigma_i^2}{n_0}.
\]
This proves the claim with $C=32$.
\end{proof}

\section{Proofs of Main Results}

\subsection{Proof of Lemma~\ref{lem:mse_twostage_pcipw}}\label{sec:mse_pf}

\begin{proof}
Fix an arm \(i\in[K]\). Let \(\hat\mu_i^{\mathrm{pilot}}\) denote the pilot sample
mean based on the first \(N_1\) rounds, where stage~1 uses uniform allocation.
To make the decomposition exact, write the final Pilot-centered IPW estimator in score form:
\[
\hat\mu_i^{\mathrm{PCIPW}} = \frac{1}{N}\sum_{t=1}^N Z_{t,i}, \quad Z_{t,i} := \hat\mu_i^{\mathrm{pilot}} + \frac{\ind \{A_t=i\}\bigl(X_t-\hat\mu_i^{\mathrm{pilot}}\bigr)}{p_{t,i}}.
\]
For \(t\le N_1\), the stage-1 allocation is uniform, so \(p_{t,i}=1/K\).

We first simplify the pilot contribution. Since \(\hat\mu_i^{\mathrm{pilot}}\) is the sample mean of the pilot observations from arm \(i\), the residuals sum to zero, ie. $ \sum_{t=1}^{N_1}\ind \{A_t=i\}\bigl(X_t-\hat\mu_i^{\mathrm{pilot}}\bigr)=0.$ Hence
\[
\sum_{t=1}^{N_1} Z_{t,i} = \sum_{t=1}^{N_1}\hat\mu_i^{\mathrm{pilot}} + K\sum_{t=1}^{N_1}\ind \{A_t = i\} \bigl(X_t -\hat \mu_i^{\mathrm{pilot}}\bigr) = N_1 \hat \mu_i^{\mathrm{pilot}}.
\]
Therefore,
\[
\hat\mu_i^{\mathrm{PCIPW}} - \mu_i = \frac{N_1}{N}\bigl(\hat\mu_i^{\mathrm{pilot}}-\mu_i\bigr) + \frac{1}{N}\sum_{t=N_1+1}^N \bigl(Z_{t,i}-\mu_i\bigr).
\]

Squaring and taking expectations gives
\begin{align*}
\mathrm{MSE}(i,\pi^{\rm AN})
&= \EE \left[ \left( \frac{N_1}{N}(\hat\mu_i^{\mathrm{pilot}}-\mu_i) + \frac{1}{N}\sum_{t=N_1+1}^N (Z_{t,i}-\mu_i) \right)^2 \right] \\
&= \EE \left[\left(\frac{N_1}{N}(\hat\mu_i^{\mathrm{pilot}}-\mu_i)\right)^2\right] + \EE \left[\left(\frac{1}{N}\sum_{t=N_1+1}^N (Z_{t,i}-\mu_i)\right)^2\right] \\
& \quad + \frac{2N_1}{N^2} \EE \left[ (\hat\mu_i^{\mathrm{pilot}}-\mu_i) \sum_{t=N_1+1}^N (Z_{t,i}-\mu_i) \right].
\end{align*}

We now show that the cross-term vanishes. For each \(t>N_1\), \(\hat\mu_i^{\mathrm{pilot}}-\mu_i\) is \(\mathcal F_{t-1}\)-measurable, so by the tower property,
\begin{align*}
\EE \left[ (\hat\mu_i^{\mathrm{pilot}}-\mu_i)(Z_{t,i}-\mu_i) \right] &= \EE \left[ (\hat\mu_i^{\mathrm{pilot}}-\mu_i) \EE[Z_{t,i}-\mu_i\mid \mathcal F_{t-1}] \right].
\end{align*}
But conditional on \(\mathcal F_{t-1}\),
\begin{align*}
\EE[Z_{t,i}\mid \mathcal F_{t-1}] &= \hat\mu_i^{\mathrm{pilot}} + \frac{1}{p_{t,i}} \EE \left[ \ind \{A_t=i\}(X_t \hat\mu_i^{\mathrm{pilot}})\middle| \mathcal F_{t-1} \right] \\
&= \hat\mu_i^{\mathrm{pilot}} + \frac{1}{p_{t,i}} p_{t,i}  \EE\left[ X_t-\hat\mu_i^{\mathrm{pilot}} \middle| A_t=i,\mathcal F_{t-1} \right] \\
&= \hat\mu_i^{\mathrm{pilot}}+(\mu_i-\hat\mu_i^{\mathrm{pilot}}) =\mu_i.
\end{align*}
Thus \(\EE[Z_{t,i}-\mu_i\mid \mathcal F_{t-1}]=0\), so the cross-term is zero. We are left with two terms.

For the first term, since stage~1 is uniform, arm \(i\) receives \(N_1/K\) pilot samples, and therefore
\[
\Var(\hat\mu_i^{\mathrm{pilot}})=\frac{\sigma_i^2}{N_1/K}.
\]
Hence
\[
\EE \left[\left(\frac{N_1}{N}(\hat\mu_i^{\mathrm{pilot}}-\mu_i)\right)^2\right] = \frac{N_1^2}{N^2}\Var(\hat\mu_i^{\mathrm{pilot}}) = \frac{N_1^2}{N^2}\cdot \frac{\sigma_i^2}{N_1/K} = \frac{N_1}{N} \mathrm{MSE}(i,\pi^{\rm uni}).
\]

For the second term, define \(\tilde Z_{t,i}:=Z_{t,i}-\mu_i\). Then
\begin{align*}
\EE \left[\left(\frac{1}{N}\sum_{t=N_1+1}^N \tilde Z_{t,i}\right)^2\right] &= \frac{1}{N^2} \EE\left[ \sum_{t=N_1+1}^N \tilde Z_{t,i}^2 + 2\sum_{N_1+1\le s<t\le N}\tilde Z_{s,i}\tilde Z_{t,i} \right].
\end{align*}
The cross-terms are again zero by previous analysis: for \(s<t\), \(\tilde Z_{s,i}\) is \(\mathcal F_{t-1}\)-measurable and \(\EE[\tilde Z_{t,i}\mid\mathcal F_{t-1}]=0\), so $ \EE[\tilde Z_{s,i}\tilde Z_{t,i}] = \EE \left[\tilde Z_{s,i} \EE[\tilde Z_{t,i}\mid\mathcal F_{t-1}]\right] =0.$
Therefore,
\[
\EE \left[\left(\frac{1}{N}\sum_{t=N_1+1}^N \tilde Z_{t,i}\right)^2\right] = \frac{1}{N^2}\sum_{t=N_1+1}^N \EE[\tilde Z_{t,i}^2].
\]
Since \(\EE[\tilde Z_{t,i}\mid\mathcal F_{t-1}]=0\),
\[
\EE[\tilde Z_{t,i}^2] = \EE \left[\Var(Z_{t,i}\mid\mathcal F_{t-1})\right].
\]

It remains to compute this conditional variance. A direct expansion yields
\begin{align*}
\EE[Z_{t,i}^2\mid \mathcal F_{t-1}] &= (\hat\mu_i^{\mathrm{pilot}})^2 + 2\hat\mu_i^{\mathrm{pilot}} \EE \left[ \frac{\ind \{A_t=i\}(X_t-\hat\mu_i^{\mathrm{pilot}})}{p_{t,i}} \middle| \mathcal F_{t-1} \right] \\
&\quad + \EE \left[ \frac{\ind \{A_t=i\}(X_t-\hat\mu_i^{\mathrm{pilot}})^2}{p_{t,i}^2} \middle| \mathcal F_{t-1} \right].
\end{align*}
Conditional on \(\mathcal F_{t-1}\), both \(\hat\mu_i^{\mathrm{pilot}}\) and \(p_{t,i}\) are fixed, so the middle term equals $2\hat\mu_i^{\mathrm{pilot}}(\mu_i-\hat\mu_i^{\mathrm{pilot}}).$ For the last term, by bias-variance decomposition,
\begin{align*}
\EE \left[ \ind \{A_t=i\}(X_t-\hat\mu_i^{\mathrm{pilot}})^2 \middle| \mathcal F_{t-1} \right] &= p_{t,i}  \EE \left[(X_t - \hat\mu_i^{\mathrm{pilot}})^2 \mid A_t=i, \mathcal F_{t-1}\right] \\
&= p_{t,i}\Bigl(\sigma_i^2+(\mu_i - \hat\mu_i^{\mathrm{pilot}})^2\Bigr).
\end{align*}
Hence
\[
\EE[Z_{t,i}^2\mid \mathcal F_{t-1}] = (\hat\mu_i^{\mathrm{pilot}})^2 + 2\hat\mu_i^{\mathrm{pilot}}(\mu_i -\hat\mu_i^{\mathrm{pilot}}) + \frac{\sigma_i^2 + (\mu_i-\hat\mu_i^{\mathrm{pilot}})^2}{p_{t,i}}.
\]
Subtracting \(\mu_i^2\) gives
\begin{align*}
\Var(Z_{t,i}\mid \mathcal F_{t-1}) &= \EE[Z_{t,i}^2\mid \mathcal F_{t-1}] - \mu_i^2 \\
&= \frac{\sigma_i^2}{p_{t,i}} + (\mu_i-\hat\mu_i^{\mathrm{pilot}})^2\left(\frac{1}{p_{t,i}}-1\right).
\end{align*}
Therefore,
\[
\EE \left[\left(\frac{1}{N}\sum_{t=N_1+1}^N (Z_{t,i}-\mu_i)\right)^2\right] = \frac{1}{N^2}\sum_{t=N_1+1}^N \EE \left[\frac{\sigma_i^2}{p_{t,i}}\right] + \frac{1}{N^2}\sum_{t=N_1+1}^N \EE \left[ (\mu_i-\hat \mu_i^{\mathrm{pilot}})^2\left(\frac{1}{p_{t,i}}-1\right) \right].
\]

Combining the pilot term and the adaptive term proves
\begin{align*}
\mathrm{MSE}(i,\pi^{\rm AN}) &= \frac{N_1}{N}\,\mathrm{MSE}(i,\pi^{\rm uni}) + \frac{1}{N^2}\sum_{t=N_1+1}^{N} \EE \left[\frac{\sigma_i^2}{p_{t,i}}\right] + \frac{1}{N^2}\sum_{t=N_1+1}^{N} \EE \left[ (\mu_i-\hat\mu_i^{\mathrm{pilot}})^2 \left(\frac{1}{p_{t,i}}-1\right) \right].
\end{align*}
\end{proof}

\subsection{Proof of Theorem~\ref{thm:inf_suff}}\label{pf:inf_suff}

\begin{proof}
Under uniform sampling, each arm receives $N/K$ samples, so
\[
\sum_{i=1}^K \mathrm{MSE}(i,\pi^{\rm uni}) = \sum_{i=1}^K \frac{\sigma_i^2}{N/K} = \frac{K}{N}V.
\]

For the two-stage adaptive Neyman allocation with the AIPW estimator, by Lemma~\ref{lem:mse_twostage_pcipw} and summing over all arms $i\in[K]$, we have 
\begin{align*}
\sum_{i=1}^K \mathrm{MSE}(i,\pi^{\rm AN})
&= \sum_{i=1}^K \left[ \frac{N_1}{N}\frac{K\sigma_i^2}{N} + \frac{N-N_1}{N^2}\sigma_i^2 \left( \EE \left[\frac{1}{\hat p_i^{\rm Ney}}\right] \left(1+\frac{K}{N_1}\right) - \frac{K}{N_1} \right) \right] \\
&= \frac{N_1KV}{N^2} + \frac{N-N_1}{N^2} \left[ \left(1+\frac{K}{N_1}\right) \sum_{i=1}^K \sigma_i^2 \EE \left[\frac{1}{\hat p_i^{\rm Ney}}\right] - \frac{K}{N_1}V \right].
\end{align*}

We compare this with the uniform benchmark and investigate the condition
\[
\sum_{i=1}^K \mathrm{MSE}(i,\pi^{\rm AN}) \le \sum_{i=1}^K \mathrm{MSE}(i,\pi^{\rm uni}).
\]
By simple algebra, this is equivalent to
\[
\frac{N-N_1}{N^2} \left[ \left(1+\frac{K}{N_1}\right) \sum_{i=1}^K \sigma_i^2 \EE \left[\frac{1}{\hat p_i^{\rm Ney}}\right] - \frac{K}{N_1}V \right] \le \frac{N-N_1}{N^2}KV.
\]
Since $(N-N_1)/N^2>0$, this is equivalent to
\[
\left(1+\frac{K}{N_1}\right)
\sum_{i=1}^K \sigma_i^2 \EE\!\left[\frac{1}{\hat p_i^{\rm Ney}}\right]
\le
KV+\frac{K}{N_1}V
=
KV\left(1+\frac{1}{N_1}\right).
\]

Next, we evaluate the summation term. By definition of the plug-in Neyman allocation,
\[
\frac{1}{\hat p_i^{\rm Ney}} s= \frac{\sum_{j=1}^K \hat\sigma_j}{\hat\sigma_i} = 1+\sum_{j\neq i}\frac{\hat\sigma_j}{\hat\sigma_i}.
\]
Taking expectations gives
\[
\EE \left[\frac{1}{\hat p_i^{\rm Ney}}\right] = 1+\sum_{j\neq i}\EE \left[\frac{\hat\sigma_j}{\hat\sigma_i}\right].
\]

Now by the Gaussian pilot assumption. Since each arm receives $n_1=N_1/K$ pilot samples, we have $\frac{(n_1-1)\hat\sigma_i^2}{\sigma_i^2}\sim \chi^2_\nu, \quad \nu:=n_1-1,$ ie. follows a chi-squared distribution. In addition, for $j\neq i$, the variables $ \frac{(n_1-1)\hat\sigma_j^2}{\sigma_j^2}$ and $\frac{(n_1-1)\hat\sigma_i^2}{\sigma_i^2}$ are independent. Therefore
\[
\frac{\hat\sigma_j}{\hat\sigma_i} = \frac{\sigma_j}{\sigma_i} \sqrt{ \frac{(n_1-1)\hat\sigma_j^2/\sigma_j^2}{(n_1-1)\hat\sigma_i^2/\sigma_i^2} } = \frac{\sigma_j}{\sigma_i}\sqrt{F_{\nu,\nu}},
\]
where $F_{\nu,\nu}$ is an $F$-distributed random variable with $(\nu,\nu)$ degrees of freedom. Taking expectations yields
\[
\EE \left[\frac{\hat\sigma_j}{\hat\sigma_i}\right] = \frac{\sigma_j}{\sigma_i} \EE \left[\sqrt{F_{\nu,\nu}}\right] =: \frac{\sigma_j}{\sigma_i} \beta_\nu.
\]
Substituting back, we obtain
\[
\EE \left[\frac{1}{\hat p_i^{\rm Ney}}\right] = 1 + \beta_\nu\sum_{j\neq i}\frac{\sigma_j}{\sigma_i}.
\]

Hence
\begin{align*}
\sum_{i=1}^K \sigma_i^2 \EE \left[\frac{1}{\hat p_i^{\rm Ney}}\right] &= \sum_{i=1}^K \sigma_i^2 + \beta_\nu \sum_{i=1}^K \sigma_i \sum_{j\neq i}\sigma_j = V+\beta_\nu(S^2-V).
\end{align*}
Substituting this into the previous inequality yields
\[
\left(1+\frac{K}{N_1}\right)\left(V+\beta_\nu(S^2-V)\right) \le KV\left(1+\frac{1}{N_1}\right).
\]
Some algebra gives us the following,
\[
\beta_\nu \le \frac{(K-1)V}{S^2-V}\cdot \frac{N_1}{N_1+K} = \frac{(K-1)V}{S^2-V}\cdot \frac{1}{1+\frac{K}{N_1}}.
\]

This is exactly condition~\eqref{eq:refined_condition_beta}, and the equivalence follows from the chain of equivalent rearrangements above.
\end{proof}

\subsection{Proof of Lemma~\ref{lem:oracle_fixedalloc_joint_clean}}\label{subsec:oracle_joint_pf}

\begin{proof}
We study the minimization of
\[
\mathcal J_N(p)=\lambda \sum_{i=1}^K \frac{\sigma_i}{\sqrt{N p_i}}+(1-\lambda)\sum_{i=1}^K p_i\Delta_i, \quad p\in\mathcal S_K.
\]

Since $\sigma_i>0$ for all $i\in[K]$, the term $p_i\mapsto \sigma_i/\sqrt{N p_i}$ diverges to $+\infty$ as $p_i\downarrow 0$. Hence any minimizer must lie in the interior of the simplex. Moreover, each map $p_i\mapsto \frac{\sigma_i}{\sqrt{N p_i}}$ is strictly convex on $(0,\infty)$, while the regret term is linear. Therefore $\mathcal J_N$ is strictly convex on the simplex, so it admits a unique minimizer $p^\star(N)\in\mathcal S_K$.

We now characterize this minimizer by the KKT conditions. Since the minimizer is interior, there are no active inequality constraints, and the Lagrangian is
\[
\mathcal L(p,\alpha) = \lambda \sum_{i=1}^K \frac{\sigma_i}{\sqrt{N p_i}} + (1-\lambda)\sum_{i=1}^K p_i\Delta_i + \alpha\Bigl(1-\sum_{i=1}^K p_i\Bigr).
\]
Differentiating with respect to $p_i$ gives
\[
\frac{\partial \mathcal L}{\partial p_i} = -\frac{\lambda \sigma_i}{2\sqrt N} p_i^{-3/2} + (1-\lambda)\Delta_i - \alpha.
\]
At the optimum, this is $0$ for every $i\in[K]$, hence
\[
\frac{\lambda \sigma_i}{2\sqrt N} p_i^{-3/2} = (1-\lambda)\Delta_i-\alpha.
\]
Solving for $p_i$ yields
\[
p_i^\star(N) = \left[
\frac{\lambda \sigma_i}{2\sqrt N\bigl((1-\lambda)\Delta_i-\alpha_N^\star\bigr)} \right]^{2/3}, \quad i\in[K],
\]
where the multiplier $\alpha_N^\star$ is chosen so that $\sum_{i=1}^K p_i^\star(N)=1$. This proves~\eqref{eq:q_star_closed_clean}.

We next study the asymptotic behavior as $N\to\infty$. Since $\Delta_{i^\star}=0$, the positivity of the denominator for $i=i^\star$ forces $ -\alpha_N^\star>0.$ We denote $a_N:=-\alpha_N^\star>0.$ Then
\[
p_{i^\star}^\star(N) = \left(\frac{\lambda \sigma_{i^\star}}{2\sqrt N a_N}\right)^{2/3},
\]
and for every suboptimal arm $i\neq i^\star$,
\[
p_i^\star(N) = \left(
\frac{\lambda \sigma_i}{2\sqrt N\bigl((1-\lambda)\Delta_i+a_N\bigr)} \right)^{2/3}.
\]

We will now show the rate of the optimal arm and $a_N=\frac{\lambda \sigma_{i^\star}}{2}N^{-1/2}(1+o(1))$, which can help us prove the asymptotic rate for the pull rate for suboptimal arms. Indeed, from the normalization condition, $1 = p_{i^\star}^\star(N) + \sum_{i\neq i^\star} p_i^\star(N).$ For every $i\neq i^\star$, since $\Delta_i>0$ is fixed and $a_N>0$,
\[
p_i^\star(N) \le \left( \frac{\lambda \sigma_i}{2(1-\lambda)\Delta_i} \right)^{2/3}N^{-1/3} \implies \sum_{i\neq i^\star} p_i^\star(N)=O(N^{-1/3}).
\]
Therefore
\[
p_{i^\star}^\star(N)=1-O(N^{-1/3}) \implies p_{i^\star}^\star(N)\to 1.
\]
Using the explicit formula for $p_{i^\star}^\star(N)$, this means $\left(\frac{\lambda \sigma_{i^\star}}{2\sqrt N\,a_N}\right)^{2/3}\to 1, $ which is equivalent to
\[
a_N=\frac{\lambda \sigma_{i^\star}}{2}N^{-1/2}(1+o(1)).
\]

Substituting this into the suboptimal-arm formula, for each $i\neq i^\star$ we obtain
\[
p_i^\star(N) = \left( \frac{\lambda \sigma_i}{2\sqrt N\bigl((1-\lambda)\Delta_i+a_N\bigr)} \right)^{2/3} = \left( \frac{\lambda \sigma_i}{2(1-\lambda)\Delta_i} \right)^{2/3} N^{-1/3}(1+o(1)),
\]
since $a_N=o(1)$ while each $\Delta_i>0$ is fixed. This proves the second part of~\eqref{eq:q_star_asymp_clean}. 

It remains to evaluate the objective at $p^\star(N)$. For the regret term,
\[
(1-\lambda)\sum_{i=1}^K p_i^\star(N)\Delta_i
=
(1-\lambda)\sum_{i\neq i^\star} p_i^\star(N)\Delta_i
=
\Theta(N^{-1/3}),
\]
because each suboptimal arm contributes a constant multiple of $N^{-1/3}$.

For the RMSE term,
\[
\lambda \sum_{i=1}^K \frac{\sigma_i}{\sqrt{N p_i^\star(N)}}.
\]
For the optimal arm, $
\frac{\sigma_{i^\star}}{\sqrt{N p_{i^\star}^\star(N)}} = \Theta(N^{-1/2}),$ since $p_{i^\star}^\star(N)\to 1$. For each suboptimal arm $i\neq i^\star$,
\[
\frac{\sigma_i}{\sqrt{N p_i^\star(N)}}
=
\Theta(N^{-1/3}),
\]
because $p_i^\star(N)=\Theta(N^{-1/3})$. Hence
\[
\lambda \sum_{i=1}^K \frac{\sigma_i}{\sqrt{N p_i^\star(N)}}=\Theta(N^{-1/3}),
\]
the dominant contribution coming from the suboptimal arms.

Combining the RMSE and regret terms yields
\[
\mathcal J_N\bigl(p^\star(N)\bigr)=\Theta(N^{-1/3}),
\]
which proves~\eqref{eq:oracle_opt_rate_clean}.
\end{proof}

\subsection{Proof of Theorem~\ref{thm:joint_sarp_rate}}

\begin{proof}
Fix a horizon $N\ge 1$, and for each arm $i\in[K]$ define $\sigma_i^2:=\Var(X_{i,1}),$ which is finite since $X_{i,s}-\mu_i$ is $\bar\nu$-sub-Gaussian. We first prove that the exploration branch alone gives each arm order $N^{2/3}$ samples with high probability, which yields the desired RMSE rate; we then show that the regret contribution is also of order at most $N^{-1/3}$.

\paragraph{Inference bound.}
For each arm $i\in[K]$ and each round $t>K$, define
\[
Z_{t,i}:=\ind\{\text{round }t\text{ uses the exploration branch and selects arm }i\}.
\]
By the construction of Algorithm~\ref{alg:sarp}, for each fixed $i$, the variables $(Z_{t,i})_{t=K+1}^{\infty}$ are independent Bernoulli random variables with
\[
\PP (Z_{t,i}=1)=x_t p_i^0, \quad x_t:=\min\{1,c_x t^{-1/3}\}.
\]

We focus on the exploration pulls, and show that these suffice to give inference optimality. Define the number of explicit exploration pulls of arm $i$ up to horizon $N$ by $T_i^{\mathrm{exp}}(N):=\sum_{t=K+1}^N Z_{t,i}.$ We denote its mean as
\[
\bar n_i:=\EE[T_i^{\mathrm{exp}}(N)] = \sum_{t=K+1}^N x_t p_i^0 = p_i^0\sum_{t=K+1}^N x_t.
\]
We now show that $\bar n_i$ is of order $N^{2/3}$, which is precisely the sample-size scale needed for the RMSE term to be of order $N^{-1/3}$. Let $t_\star:=\lceil c_x^3\rceil,$ and $s_0:=\max\{K+1,t_\star\}.$ Then for every $t\ge s_0$, we have $x_t=c_x t^{-1/3}$, so the clipping only affects finitely many initial rounds and is asymptotically negligible. Therefore, for every $N\ge s_0$,
\[
\sum_{t=K+1}^N x_t \ge c_x\sum_{t=s_0}^N t^{-1/3}.
\]
Using the integral comparison for the decreasing positive function $x\mapsto x^{-1/3}$,
\[
\sum_{t=s_0}^N t^{-1/3} \ge \int_{s_0}^{N+1} x^{-1/3}\,dx = \frac32\bigl((N+1)^{2/3}-s_0^{2/3}\bigr).
\]
Now choose $N_0:=\left\lceil 2^{3/2}s_0\right\rceil.$ Then for every $N\ge N_0$, one has $s_0^{2/3}\le \tfrac12 N^{2/3}$, and therefore
\[
\sum_{t=K+1}^N x_t \ge \frac{3c_x}{2}\bigl(N^{2/3}-s_0^{2/3}\bigr) \ge \frac{3c_x}{4}N^{2/3}.
\]
Consequently, for every $N\ge N_0$,
\[
\bar n_i \ge \frac{3c_x p_i^0}{4}N^{2/3} =: \kappa_i N^{2/3}, \quad \kappa_i:=\frac{3c_x p_i^0}{4}>0.
\]

However, a lower bound on the expectation  $\bar n_i$ alone is not sufficient, since the realized count $T_i$ may still be pathologically small on some sample paths and thereby spoil the final RMSE rate. We therefore introduce a high-probability event on which the realized exploration counts are simultaneously bounded below by a constant fraction of their means.

Define the good event $ \mathcal E_{\mathrm{cnt}}(N) := \Bigl\{ T_i^{\mathrm{exp}}(N)\ge \tfrac12 \bar n_i\text{ for all } i\in[K] \Bigr\}.$ Since $T_i^{\mathrm{exp}}(N)$ is a sum of independent Bernoulli variables, by the multiplicative Chernoff bound stated in Lemma~\ref{lem:chernoff-lower-tail}, for every $i\in[K]$ and every $N\ge N_0$,
\[
\PP \left(T_i^{\mathrm{exp}}(N)<\frac12\bar n_i\right) \le \exp(-\bar n_i/8) \le \exp(-\kappa_i N^{2/3}/8).
\]

On this event $\mathcal E_{\mathrm{cnt}}(N)$, the exploration branch alone already guarantees that each arm is sampled at least on the order of $N^{2/3}$ times. Specifically, we have $T_i = T_{i,N} \ge T_i^{\mathrm{exp}}(N) \ge \frac12\bar n_i \ge \frac{\kappa_i}{2}N^{2/3}.$ Applying Lemma~\ref{lem:dyadic-mse-pullorder} with
$n_0:=\left\lfloor \frac{\kappa_i}{2}N^{2/3}\right\rfloor$ gives
\[
\EE \left[(\hat\mu_i(N)-\mu_i)^2\ind\{\mathcal E_{\mathrm{cnt}}(N)\}\right] \lesssim N^{-2/3}.
\]

It remains to control the contribution of the bad event. Applying a union bound over $i\in[K]$, we obtain $ \PP(\mathcal E_{\mathrm{cnt}}(N)^c) \le K\exp(-c_{\mathrm{cnt}}N^{2/3})$ for some constant $c_{\mathrm{cnt}}>0$. Thus the bad event is exponentially unlikely in $N^{2/3}$, and its contribution can be absorbed into the expectation by a Cauchy--Schwarz argument and the sub-Gaussian assumption. Since the rewards are sub-Gaussian, they have finite fourth moments. We denote,
\[
\hat\mu_i(N)-\mu_i=\frac{S_{i,T_i}}{T_i}, \quad S_{i,n}:=\sum_{s=1}^n (Y_{i,s}-\mu_i),
\]
and $T_i\ge 1$ deterministically because of the initialization phase. Therefore
\[
\EE \left[(\hat\mu_i(N)-\mu_i)^4\right] = \sum_{m=1}^N \EE \left[\left(\frac{S_{i,m}}{m}\right)^4 \ind\{T_i=m\}\right] \le \sum_{m=1}^N
\EE \left[\left(\frac{S_{i,m}}{m}\right)^4\right].
\]

For each $m\ge 1$, the average $m^{-1}S_{i,m}$ is again sub-Gaussian with scale of order $\bar\nu/\sqrt m$, so $ \EE \left[\left(\frac{S_{i,m}}{m}\right)^4\right] \lesssim \frac{\bar\nu^4}{m^2}.$ Hence
\[
\EE \left[(\hat\mu_i(N)-\mu_i)^4\right] \lesssim \bar\nu^4 \sum_{m=1}^\infty m^{-2} <\infty.
\]
By Cauchy--Schwarz,
\[
\EE \left[(\hat\mu_i(N)-\mu_i)^2\ind\{\mathcal E_{\mathrm{cnt}}(N)^c\}\right] \le \EE \left[(\hat\mu_i(N)-\mu_i)^4\right]^{1/2} \PP(\mathcal E_{\mathrm{cnt}}(N)^c)^{1/2} = o(N^{-2/3}).
\]
Hence the bad event is negligible to the main $N^{-2/3}$ contribution from the good event. Therefore,
\[
\EE \left[(\hat\mu_i(N)-\mu_i)^2\right] = O(N^{-2/3}) \implies \sqrt{\EE \left[(\hat\mu_i(N)-\mu_i)^2\right]} = O(N^{-1/3}).
\]
Summing over the fixed set of arms gives the same $O(N^{-1/3})$ rate.

\paragraph{Regret bound.}
We now verify that the same exploration schedule that is sufficient for inference contributes no more than order $N^{-1/3}$ to the normalized regret. Decompose the regret as $R_N = R_N^{\mathrm{init}} + R_N^{\mathrm{explore}} + R_N^{\ALG},$ where the three terms correspond respectively to the initialization rounds, the explicit exploration branch, and the exploitation branch governed by $\ALG$. Let $\Delta_{\max}:=\max_{i\neq i^\star}\Delta_i.$ For the initialization term,
\[
R_N^{\mathrm{init}}\le K\Delta_{\max} \implies \frac{\EE[R_N^{\mathrm{init}}]}{N}=O(N^{-1}).
\]

For the exploration term, on round $t>K$, the expected regret contributed by explicit exploration is $x_t \sum_{i\neq i^\star}\Delta_i p_i^0.$
Hence $ \EE[R_N^{\mathrm{explore}}] = \sum_{t=K+1}^N x_t \sum_{i\neq i^\star}\Delta_i p_i^0,$
so
\[
\frac{\EE[R_N^{\mathrm{explore}}]}{N} = \frac{1}{N}\sum_{t=K+1}^N x_t \sum_{i\neq i^\star}\Delta_i p_i^0.
\]
Using $x_t\le c_x t^{-1/3}$ for all $t\ge t_\star$, and absorbing the finitely many terms $t<t_\star$ into a constant, the same integral comparison gives us
\[
\sum_{t=K+1}^N x_t = O \left(\sum_{t=1}^N t^{-1/3}\right) = O(N^{2/3}).
\]
Therefore
\[
\frac{\EE[R_N^{\mathrm{explore}}]}{N} = O(N^{-1/3}).
\]

Finally, interpret $\ALG$ as updating only on exploitation rounds. If $M_N^{\ALG}$ denotes the random number of exploitation rounds up to time $N$, then $M_N^{\ALG}\le N$, and conditional on $M_N^{\ALG}=m$, the exploitation branch is exactly a run of $\ALG$ for $m$ rounds. Hence, by Definition~\ref{def:ALG-Exp-control-clean},
\[
\EE[R_N^{\ALG}\mid M_N^{\ALG}=m] \le C_{\ALG}\sqrt{K m\log m} \le C_{\ALG}\sqrt{K N\log N}.
\]
Taking expectation gives
\[
\frac{\EE[R_N^{\ALG}]}{N} \le C_{\ALG}\sqrt{\frac{K\log N}{N}} = o(N^{-1/3}).
\]

Combining the three pieces yields
\[
\frac{\EE[R_N]}{N}=O(N^{-1/3}).
\]

\paragraph{Combining Inference and Regret}
Combining the inference and regret bounds, we obtain
\[
\mathcal J_N(\pi^{\rm SARP}) = \lambda \sum_{i=1}^K \sqrt{\EE \left[(\hat\mu_i(N)-\mu_i)^2\right]} + (1-\lambda)\frac{\EE[R_N]}{N} = O(N^{-1/3}).
\]
Equivalently, there exist constants $C<\infty$ and $N_0<\infty$ such that
\[
\mathcal J_N(\pi^{\rm SARP})\le C N^{-1/3}, \quad \forall N\ge N_0.
\]
This proves the theorem.
\end{proof}

\subsection{Proof of Theorem~\ref{thm:NARP}}

%\blue{The proof below is somewhat stronger than what is strictly needed for the \(N^{-1/3}\) rate. A weaker nondegeneracy argument would suffice for rate optimality, but we adopt the more detailed stabilization proof because it makes the mechanism of NARP explicit: after a checkpoint, the plug-in quantities become uniformly well behaved and the algorithm effectively runs the intended oracle-calibrated \(t^{-1/3}\) exploration rule. This is useful not only for the present theorem, but also because it isolates the ingredients needed for a future oracle-tracking refinement, which is the natural explanation for NARP's empirical improvement over SARP.}

\begin{proof}
Fix a horizon $N\ge 1$, and analyze the single policy $\pi^{\rm NARP}$ up to round $N$. For each arm $i\in[K]$, define $ \sigma_i^2:=\Var(X_{i,1}),$ which is finite by $\bar \nu-$sub-Gaussianity. If some arm has $\sigma_i=0$, then that arm is deterministic and its inference contribution is simpler to control. Thus, for the main argument, we restrict attention to the nondegenerate case $\sigma_i>0$ for all $i\in[K]$, and define
\[
\Delta_{\max}:=\max_{i\neq i^\star}\Delta_i,\quad \Delta_{\min}:=\min_{i\neq i^\star}\Delta_i,\quad \sigma_{\min}:=\min_{i\in[K]}\sigma_i.
\]
Also define the population rooted-Neyman quantities
\[
p_i^{\rm rN}:=\frac{\sigma_i^{2/3}}{\sum_{k=1}^K \sigma_k^{2/3}},\qquad
M:=\sum_{i\neq i^\star}\sigma_i (p_i^{\rm rN})^{-1/2},\qquad
L:=\sum_{i\neq i^\star}\Delta_i p_i^{\rm rN}.
\]

Unlike the proof of SARP, where the exploration counts are sums of independent Bernoulli variables with a fixed mean profile, the proof of NARP must first show that the data-driven plug-in exploration rule becomes reliable. We therefore begin with a stabilization step: after a deterministic checkpoint, every arm has already been sampled often enough that the empirical means, empirical standard deviations, and the plug-in summaries \(\hat M_{t-1}\) and \(\hat L_{t-1}\) are uniformly well behaved. Once this is established, the remainder of the argument parallels the logic of the SARP proof.

\paragraph{Checkpoint phase concentration.}

We first establish a deterministic lower bound on the minimum arm count rate. Let
\[
m_t:=\min_{i\in[K]} T_{i,t}, \quad \tilde m_t:=m_t-m_0.
\]
Since each arm is pulled exactly $m_0$ times during the warm start, we have $\tilde m_t\ge 0$ for all $t\ge m_0K$. We claim that there exists a constant $C_g<\infty$, depending only on $(K,m_0,\alpha)$, such that for all $t\ge m_0K+1$,
\begin{equation}\label{eq:count_rate}
    m_t\ge \alpha\sqrt t-C_g.
\end{equation}

To prove this, it suffice to show that $\tilde m_t\ge \alpha\sqrt t-C_g'$ for some finite constant $C_g'$.

Fix $t_0:=\max\{m_0K+1,\lceil (\alpha K)^2\rceil\}.$ For $t\le t_0$, the claim is trivial after enlarging the constant, since $\tilde m_t\ge 0$. Now partition the time axis after $t_0$ into blocks of length $K$:
\[
B_r:=\{t_r+1,\dots,t_r+K\},
\qquad
t_r:=t_0+rK,
\qquad r\ge 0.
\]
We will show that the deficit of the shifted minimum count relative to the target $\alpha\sqrt t$ cannot increase from one block endpoint to the next. Indeed, for every $r\ge 0$,
\[
\alpha(\sqrt{t_r+K}-\sqrt{t_r}) = \frac{\alpha K}{\sqrt{t_r+K}+\sqrt{t_r}} \le 1,
\]
where the last inequality holds because $t_r\ge t_0\ge (\alpha K)^2$. Thus the target level $\alpha\sqrt t$ increases by at most $1$ over any single block. Now fix a block $B_r$. Two cases can occur:

\begin{enumerate}
    \item If there exists some $u\in B_r$ such that $\tilde m_{u-1}\ge \lceil \alpha\sqrt u\rceil,$ then at that time the minimum count is already at least the target level. Since $\tilde m_t$ is nondecreasing in $t$, we have $\tilde m_{t_r+K}\ge \tilde m_{u-1}\ge \alpha\sqrt u.$ Because the target increases by at most $1$ over the remainder of the block,
    \[
    \alpha\sqrt{t_r+K}\le \alpha\sqrt u+1 \implies \alpha\sqrt{t_r+K}-\tilde m_{t_r+K}\le 1.
    \]

    \item Otherwise, for every $u\in B_r$, we have $\tilde m_{u-1}<\lceil \alpha\sqrt u\rceil.$ In this case every round in the block is forced. Since each forced round pulls a currently least-sampled arm, after $K$ such rounds every arm that was at the minimum level at the start of the block has been incremented at least once. Hence the minimum count increases by at least one:
\[
\tilde m_{t_r+K}\ge \tilde m_{t_r}+1.
\]
Using again that the target increases by at most $1$ over the block, we obtain
\[
\alpha\sqrt{t_r+K}-\tilde m_{t_r+K}
\le
\alpha\sqrt{t_r}-\tilde m_{t_r}.
\]

\end{enumerate}
    
For each block endpoint $t_r$, define the deficit $d_r:=\alpha\sqrt{t_r}-\tilde m_{t_r}.$ From the two cases above, we have shown that $d_{r+1}\le \max\{d_r,1\}.$ Hence, by induction, $d_r\le C_g':=\max\{d_0,1\}$  for all $r\ge 0.$ Equivalently,
\[
\tilde m_{t_r}\ge \alpha\sqrt{t_r}-C_g', \quad r\ge 0.
\]

Now we extend to the $t\ge t_0$ arbitrary case, and choose $r$ such that $t_r\le t<t_r+K=t_{r+1}.$ Since $\tilde m_t$ is nondecreasing in $t$, and the target increases by at most $1$ over any block,
\[
\alpha\sqrt t\le \alpha\sqrt{t_r+K}\le \alpha\sqrt{t_r}+1.
\]
Combining the two bounds gives
\[
\alpha\sqrt t-\tilde m_t \le \alpha\sqrt{t_r}+1-\tilde m_{t_r} = d_r+1 \le C_g'+1,
\]
which proves~\eqref{eq:count_rate}.

Next, we denote $s_N:=\lceil N^{1/2}\rceil.$ Define
\[
A_0:=\max\Bigl\{\frac{\bar\nu^2}{\Delta_{\min}^2},\frac{\bar\nu^4}{\sigma_{\min}^4}\Bigr\}, \quad
n_{\mathrm{stab}}(N):=\left\lceil C_{\mathrm{stab}}A_0\log(8KN^4)\right\rceil,
\]
where $C_{\mathrm{stab}}>0$ is a sufficiently large universal constant. These quantities are chosen so that the local concentration bounds below are strong enough to make the final stabilization event summable over the entire post-checkpoint timesteps. By the deterministic count floor in~\eqref{eq:count_rate},
\[
T_{i,s_N}\ge \alpha\sqrt{s_N}-C_g\ge \alpha N^{1/4}-C_g, \quad i\in[K].
\]
Since $N^{1/4}\gg \log N$, there exists $N_1<\infty$ such that for all $N\ge N_1$, $T_{i,s_N}\ge n_{\mathrm{stab}}(N),$ for all $i\in[K].$ By monotonicity of the counts, the same lower bound then holds for every $u\in\{s_N+1,\dots,N\}$. We next show that once every arm has accumulated at least \(n_{\mathrm{stab}}(N)\) samples by the checkpoint time, all empirical means and empirical standard deviations remain uniformly accurate throughout the post-checkpoint window with high probability. For each $u\in\{s_N+1,\dots,N\}$, define the local concentration event
\[
\mathcal E_{u-1}^{\mathrm{loc}} := \Bigl\{ \max_{i\in[K]} |\hat\mu_{i,u-1}-\mu_i| \le \tfrac14 \Delta_{\min}, \quad \max_{i\in[K]} |\hat\sigma_{i,u-1}-\sigma_i| \le \tfrac12 \sigma_{\min} \Bigr\}.
\]
For a fixed time $u$, the event $(\mathcal E_{u-1}^{\mathrm{loc}})^c$ is the union over arms of either a mean-concentration failure or a standard-deviation-concentration failure. Thus, applying Lemma~\ref{lem:subg-hoeff} and Corollary~\ref{lem:sd-concentration} to each arm and then taking a union bound over the $K$ arms, the choice of $n_{\mathrm{stab}}(N)$ ensures
\[
\PP\bigl((\mathcal E_{u-1}^{\mathrm{loc}})^c\bigr)\le N^{-4}, \quad u=s_N+1,\dots,N.
\]
Therefore, if we define the post-checkpoint stabilization event as $\mathcal E_{\mathrm{stab}}(N):=\bigcap_{u=s_N+1}^{N}\mathcal E_{u-1}^{\mathrm{loc}},$ then a further union bound over the at most $N$ post-checkpoint times gives
\[
\PP\bigl(\mathcal E_{\mathrm{stab}}(N)^c\bigr)\le N\cdot N^{-4}\le N^{-3}
\]

\paragraph{Global good event definition.}
With stabilized estimated parameters, we now work on $\mathcal E_{\mathrm{stab}}(N)$. Fix a non-forced round $u\in\{s_N+1,\dots,N\}$, that is, a round for which $T_{U_u,u-1}\ge \alpha\sqrt u+1.$ On $\mathcal E_{u-1}^{\mathrm{loc}}$, we have
\[
\frac12 \sigma_i \le \hat\sigma_{i,u-1}\le \frac32 \sigma_i, \quad i\in[K].
\]
Therefore
\[
\hat p_{i,u-1}^{\rm rN} = \frac{\hat\sigma_{i,u-1}^{2/3}}{\sum_{k=1}^K \hat\sigma_{k,u-1}^{2/3}} \ge \frac{(1/2)^{2/3}\sigma_i^{2/3}}{\sum_{k=1}^K (3/2)^{2/3}\sigma_k^{2/3}} = 3^{-2/3}p_i^{\rm rN},
\]
and similarly $\hat p_{i,u-1}^{\rm rN}\le 3^{2/3}p_i^{\rm rN}.$ Moreover, for every $i\neq i^\star$,
\[
\hat\mu_{i^\star,u-1}-\hat\mu_{i,u-1} \ge \Bigl(\mu_{i^\star}-\tfrac14\Delta_{\min}\Bigr) - \Bigl(\mu_i+\tfrac14\Delta_{\min}\Bigr) = \Delta_i-\tfrac12\Delta_{\min} \ge \tfrac12\Delta_{\min}>0.
\]
Hence
\[
\hat i^\star_{u-1}=i^\star, \text{ and} \quad \frac12\Delta_i\le \hat\Delta_{i,u-1}\le \frac32\Delta_i, \quad i\neq i^\star.
\]

Using these bounds, we obtain
\[
\frac12 3^{-1/3}M \le \hat M_{u-1} \le \frac32 3^{1/3}M, \quad \frac12 3^{-2/3}L \le \hat L_{u-1} \le \frac32 3^{2/3}L.
\]
In particular, $\hat L_{u-1}>0$ on $\mathcal E_{u-1}^{\mathrm{loc}}$, so the convention $x_u=1$ if $\hat L_{u-1}=0$ is irrelevant on this event. Therefore there exist constants $\underline b_x,\overline b_x>0$, depending only on the fixed instance and on $\lambda$, such that for all sufficiently large non-forced rounds $u$,
\[
\underline b_x u^{-1/3}\le x_u\le \overline b_x u^{-1/3}.
\]

We now convert this stabilized \(u^{-1/3}\) exploration rate into a high-probability lower bound on the realized post-checkpoint exploration counts. The argument has two steps: first, lower bound the predictable post-checkpoint exploration mass for each arm; second, show that the realized counts stay within a constant factor of that predictable mass.

For each arm $i\in[K]$, define
\[
Z_{u,i}:=\ind\{\text{round }u\text{ is non-forced, uses plug-in exploration, and selects arm }i\},
\]
and its predictable conditional mean
\[
p_{u,i}:=\EE[Z_{u,i}\mid \mathcal F_{u-1}] = \ind\{T_{U_u,u-1}\ge \alpha\sqrt u+1\} x_u \hat p_{i,u-1}^{\rm rN}.
\]
Hence, on $\mathcal E_{u-1}^{\mathrm{loc}}$, for every non-forced round $u$, $p_{u,i}\ge c_0 u^{-1/3}$ for some constant $c_0>0$.

Let
\[
\mathcal N_N:=\{u\in\{s_N+1,\dots,N\}:\text{round }u\text{ is non-forced}\}.
\]
Then define
\[
P_i^{\mathrm{post}}(N):=\sum_{u=s_N+1}^{N} p_{u,i} = \sum_{u\in\mathcal N_N} p_{u,i}.
\]
Since an arm can be force-selected at round $u\le N$ only while its count is below $\alpha\sqrt u+1\le \alpha\sqrt N+1$, each arm receives at most $\alpha\sqrt N+1$ forced pulls by time $N$. Hence the total number of forced rounds in $[s_N+1,N]$ is at most $K(\alpha\sqrt N+1)$. Therefore
\[
|\mathcal N_N|\ge (N-s_N)-K(\alpha\sqrt N+1).
\]

Now, on $\mathcal E_{\mathrm{stab}}(N)$, for every non-forced round $u\in\mathcal N_N$,
\[
p_{u,i}\ge c_0 u^{-1/3}\ge c_0 N^{-1/3},
\]
since $u\le N$. Summing over all non-forced rounds gives
\[
P_i^{\mathrm{post}}(N)
\ge
c_0 |\mathcal N_N| N^{-1/3}
\ge
c_0\bigl((N-s_N)-K(\alpha\sqrt N+1)\bigr)N^{-1/3}.
\]
Since $s_N=\lceil N^{1/2}\rceil$, the right-hand side is of order $N^{2/3}$, so there exists $c_->0$ such that
\[
P_i^{\mathrm{post}}(N)\ge c_- N^{2/3}
\]
for all sufficiently large $N$.

Likewise, on $\mathcal E_{\mathrm{stab}}(N)$, we have $p_{u,i}\le \overline b_x u^{-1/3}$ on every round $u$, so we can derive $ P_i^{\mathrm{post}}(N) \le c_+ N^{2/3}$ for some constant $c_+<\infty$. Next define $T_i^{\mathrm{post}}(N):=\sum_{u=s_N+1}^{N} Z_{u,i}$ as the actual count of non-forced pull, and the martingale deviation process
\[
D_{i,s}:=\sum_{u=s_N+1}^{s}(Z_{u,i}-p_{u,i}), \quad s=s_N+1,\dots,N.
\]
Its increments are bounded by $1$, and its predictable quadratic variation
\[
V_{i,N}:=\sum_{u=s_N+1}^{N}\EE\bigl[(Z_{u,i}-p_{u,i})^2\mid \mathcal F_{u-1}\bigr]
\]
satisfies $V_{i,N}\le P_i^{\mathrm{post}}(N)$ by simple algebra on indicator function. Hence
\[
\mathcal E_{\mathrm{stab}}(N)\cap \Bigl\{T_i^{\mathrm{post}}(N)\le \tfrac12 P_i^{\mathrm{post}}(N)\Bigr\}
\subseteq
\Bigl\{D_{i,N}\le -\tfrac12 c_- N^{2/3},\ \ V_{i,N}\le c_+ N^{2/3}\Bigr\}.
\]
Applying Lemma~\ref{lem:freed_ineq} yields
\[
\PP\Bigl(\mathcal E_{\mathrm{stab}}(N)\cap \{T_i^{\mathrm{post}}(N)\le \tfrac12 P_i^{\mathrm{post}}(N)\}\Bigr)
\le
e^{-cN^{2/3}}
\]
for some constant $c>0$.

Finally, define the count event $\mathcal E_{\mathrm{cnt}}(N) := \Bigl\{ T_i^{\mathrm{post}}(N)\ge \tfrac12 P_i^{\mathrm{post}}(N)\ \text{for all }i\in[K] \Bigr\}.$ A union bound yields
\[
\PP\bigl(\mathcal E_{\mathrm{stab}}(N)\cap \mathcal E_{\mathrm{cnt}}(N)^c\bigr) \le K e^{-cN^{2/3}}.
\]
Now define the global good event $\mathcal E:=\mathcal E_{\mathrm{stab}}(N)\cap \mathcal E_{\mathrm{cnt}}(N).$
Combining the previous bounds gives
\[
\PP(\mathcal E^c)=O(N^{-2}).
\]

Finally, on $\mathcal E$, for every arm $i$, $ T_i \ge T_i^{\mathrm{post}}(N) \ge \frac12 P_i^{\mathrm{post}}(N) \ge \frac12 c_- N^{2/3},$ for all sufficiently large $N$, hence we obtain the lower bound we need with high probability.

From this point onward, the regret and inference argument largely parallels the proof of SARP, and we include it only for completeness.

\paragraph{Regret bound.}

We decompose the pseudo-regret as
\[
R_N = R_N^{\mathrm{forced}} + R_N^{\mathrm{explore}} + R_N^{\ALG},
\]
where the three terms correspond respectively to the undersampled-set forced-exploration rule, the rooted-Neyman plug-in exploration branch, and the exploitation branch governed by $\ALG$.

First, the forced-exploration part contributes at most $K(\alpha\sqrt N+1)$ rounds by time $N$, deterministically. Hence
\[
\frac{\EE[R_N^{\mathrm{forced}}]}{N}=O(N^{-1/2}).
\]

Next, let
\[
Z_u:=\ind\{\text{round }u\text{ is non-forced and plug-in exploration is selected}\},
\qquad u=s_N+1,\dots,N.
\]
Then
\[
R_N^{\mathrm{explore}}\le \Delta_{\max}\sum_{u=s_N+1}^{N} Z_u.
\]
Therefore
\[
\EE[R_N^{\mathrm{explore}}]
\le
\Delta_{\max}\sum_{u=s_N+1}^{N} \EE[Z_u]
=
\Delta_{\max}\sum_{u=s_N+1}^{N}\EE\bigl[\EE[Z_u\mid \mathcal F_{u-1}]\bigr].
\]
Since
\[
\EE[Z_u\mid \mathcal F_{u-1}]
=
\ind\{T_{U_u,u-1}\ge \alpha\sqrt u+1\}x_u,
\]
and $\mathcal E_{u-1}^{\mathrm{loc}}\in\mathcal F_{u-1}$, we split
\begin{align*}
\EE[Z_u]
&=
\EE\bigl[\ind\{T_{U_u,u-1}\ge \alpha\sqrt u+1\}x_u \ind\{\mathcal E_{u-1}^{\mathrm{loc}}\}\bigr] \\
&\quad
+
\EE\bigl[\ind\{T_{U_u,u-1}\ge \alpha\sqrt u+1\}x_u \ind\{(\mathcal E_{u-1}^{\mathrm{loc}})^c\}\bigr].
\end{align*}
On $\mathcal E_{u-1}^{\mathrm{loc}}$, we have $x_u\le \overline b_x u^{-1/3}$, while always $x_u\le 1$. Hence
\[
\EE[Z_u]\le \overline b_x u^{-1/3}+\PP\bigl((\mathcal E_{u-1}^{\mathrm{loc}})^c\bigr)\le \overline b_x u^{-1/3}+N^{-4}.
\]
Summing over $u=s_N+1,\dots,N$, we get
\[
\EE[R_N^{\mathrm{explore}}]=O(N^{2/3}),
\]
and therefore
\[
\frac{\EE[R_N^{\mathrm{explore}}]}{N}=O(N^{-1/3}).
\]

Finally, by Definition~\ref{def:ALG-Exp-control-clean},
\[
\frac{\EE[R_N^{\ALG}]}{N}
\le
C_{\ALG}\sqrt{\frac{K\log N}{N}}
=
o(N^{-1/3}).
\]

Combining the three pieces, we conclude
\[
\frac{\EE[R_N]}{N}=O(N^{-1/3}).
\]

\paragraph{Inference MSE bound.}

Fix an arm $i\in[K]$. Under the independent-sequences model, let
\[
S_{i,n}:=\sum_{s=1}^n (Y_{i,s}-\mu_i),
\qquad n\ge 0.
\]
Then
\[
\hat\mu_i(N)-\mu_i=\frac{S_{i,T_i}}{T_i}.
\]

On the event $\mathcal E$, we have the deterministic lower bound $T_i\ge cN^{2/3}$ for some constant $c>0$. Therefore Lemma~\ref{lem:dyadic-mse-pullorder} applies with $n_0:=\lfloor cN^{2/3}\rfloor$, yielding
\[
\EE\Bigl[(\hat\mu_i(N)-\mu_i)^2 \ind\{\mathcal E\}\Bigr]\le C_0 N^{-2/3}
\]
for some constant $C_0<\infty$.

On the bad event $\mathcal E^c$, we use a fourth-moment bound. Since the rewards are $\bar\nu$-sub-Gaussian, all moments are finite. Hence, by Cauchy--Schwarz,
\[
\EE\Bigl[(\hat\mu_i(N)-\mu_i)^2 \ind\{\mathcal E^c\}\Bigr] \le \Bigl(\EE[(\hat\mu_i(N)-\mu_i)^4]\Bigr)^{1/2}\PP(\mathcal E^c)^{1/2}
=
O(N^{-1}).
\]
Therefore
\[
\EE\bigl[(\hat\mu_i(N)-\mu_i)^2\bigr]
=
O(N^{-2/3})+O(N^{-1})
=
O(N^{-2/3}).
\]

Summing over $i\in[K]$, we obtain
\[
\sum_{i=1}^{K}\sqrt{\EE\bigl[(\hat\mu_i(N)-\mu_i)^2\bigr]}
=
O(N^{-1/3}).
\]

\paragraph{Unconditional objective assembly.}

By the regret and inference bounds above,
\[
\lambda \sum_{i=1}^{K}\sqrt{\EE\bigl[(\hat\mu_i(N)-\mu_i)^2\bigr]} = O(N^{-1/3}), \quad (1-\lambda)\frac{\EE[R_N]}{N} = O(N^{-1/3}).
\]
Therefore
\[
\mathcal J_N(\pi^{\rm NARP}) = \lambda \sum_{i=1}^{K}\sqrt{\EE\bigl[(\hat\mu_i(N)-\mu_i)^2\bigr]} + (1-\lambda)\frac{\EE[R_N]}{N} = O(N^{-1/3}).
\]
This proves the theorem.
\end{proof}

%\section{SARP/NARP Fixed Horizon Version}
%\input{sections/apx_fixedhorizon}

\end{document}